\newcommand{\tmstrong}[1]{\textbf{#1}}
\newcommand{\assign}{:=}
\newcommand{\backassign}{=:}
\newcommand{\RNum}[1]{\uppercase\expandafter{\romannumeral #1\relax}}
\newcommand{\ra}{\rightarrow}
\newcommand{\T}[1]{\text{#1}} 
\renewcommand{\i}{\infty}
\definecolor{lasallegreen}{rgb}{0.03, 0.47, 0.19}
\definecolor{myboxbackground}{HTML}{fdf7c9}
\definecolor{myboxborder}{HTML}{fdcf50}
\definecolor{myboxtitle}{HTML}{000000}
\newtcolorbox{myglossarybox}{
  enhanced,
  colback=myboxbackground, 
  colframe=myboxborder, 
  boxrule=0.5mm, 
  arc=0mm, 
  title={Glossary}, 
  fonttitle=\bfseries\color{myboxtitle}\normalsize,
  fontupper=\footnotesize,
}
\newtheorem{theorem}{Theorem}
\newtheorem{definition}[theorem]{Definition}
\newtheorem{proposition}[theorem]{Proposition}
\newtheorem{lemma}[theorem]{Lemma}
\newtheorem{corollary}[theorem]{Corollary}
\newtheorem{remark}[theorem]{Remark}
\numberwithin{theorem}{section}
\newenvironment{proof}{\noindent\textbf{Proof\ }}{\hspace*{\fill}$\Box$\medskip}
\newenvironment{itemizedot}{\begin{itemize} }{\end{itemize}}
\begin{document}
\title{Geometry and stability of species complexes}
\author{\tmstrong{Amaury Lambert$^1$, Emmanuel Schertzer$^2$, Yannic Wenzel$^2$}}
\date{%
    {\small $^1$Stochastic Models for the Inference of Life Evolution (SMILE),
Institute of Biology of ENS (IBENS), CNRS, INSERM, Université PSL, Ecole Normale Supérieure, 46 rue d’Ulm, 75005 Paris, France \\
\& \\
Center for Interdisciplinary Research in Biology (CIRB), CNRS, INSERM, Université PSL, Collège de France, 11 place Marcelin Berthelot, 75005 Paris, France\\%
    $^2$Faculty of Mathematics, University of Vienna, Oskar-Morgenstern-Platz 1, 1090 Wien, Austria\\[2ex]}%
    \today
}

\maketitle

\begin{abstract}
  Species complexes are groups of closely related populations exchanging genes through dispersal.
  We study the dynamics of the
   structure of species complexes in a class of metapopulation models where demes can exchange genetic material through migration and diverge through the accumulation of new mutations. Importantly, we model the ecological feedback of differentiation on gene flow by assuming that the success of migrations decreases with genetic distance, through a specific function $h$.
   We investigate the effects of  metapopulation size on the coherence of species structures, depending on some mathematical characteristics of the feedback function $h$. Our results suggest that with larger metapopulation sizes, species form increasingly coherent, transitive, and uniform entities.  We conclude that the initiation of speciation events in large species requires the existence of idiosyncratic geographic or selective restrictions on gene flow.
  \vspace{0.5cm}
\end{abstract}

\section{Introduction}

Speciation is the process by which diverging populations become reproductively isolated from each other, preventing them from interbreeding or ensuring that hybrid offspring are inviable or sterile. The development of reproductive isolation (RI) relies on the accumulation of reproductive isolating barriers, i.e., the biological features that impede gene exchange between populations (see \cite{coyne2004speciation}, p.29).  If this accumulation leads to complete reproductive isolation, we speak of different species (see \cite{coyne2004speciation}, p.26ff).

In general, we distinguish modes of speciation by the extent to which geographic conditions impede gene flow. In perfect geographic segregation and zero gene flow (allopatry), the divergent accumulation of different mutations leads to failure of outcrossing at a secondary contact. Under geographic conditions allowing for limited gene flow (parapatry), a combination of forces including natural and sexual selection can lead to the evolution of reproductive barriers between migrating individuals (see \cite{coyne2004speciation}, Sections 3 and 4).

\begin{figure}[t]
\begin{myglossarybox}
\textbf{\hypertarget{species_complex}{Species complex:}} a set of populations connected through direct or indirect (i.e., through intermediary populations) gene exchange.

\vspace{0.2cm}
\textbf{\hypertarget{feedback}{Divergence feedback:}} the negative relationship between genetic distance and effective migration rate whereby lowering genetic similarity between two populations reduces gene flow between them, further reducing their genetic similarity.

\vspace{0.2cm}
\textbf{\hypertarget{genetic_incomp}{Genetic incompatibilities:}} post-zygotic reproductive barriers that lead to inviability, sterility or other types of fitness reduction in hybrids.

\vspace{0.2cm}
\textbf{\hypertarget{feedback_func}{Feedback function:}} a nondecreasing, continuous function $h$, which encodes how effective migration rate varies with genetic similarity.

\vspace{0.2cm}
\textbf{\hypertarget{trans}{Transitivity:}} an ideal property of some species complexes ensuring that for any three populations $i,j,k$ such that $i$ can interbreed with $j$, and $j$ can interbreed with $k$, then $i$ can interbreed with $k$.

\vspace{0.2cm}
\textbf{\hypertarget{clust}{Subspecies clustering:}} a property of some species complexes that occurs when populations can be partitioned into clusters of genetically similar populations (subspecies) showing reduced genetic exchange between them (partial reproductive isolation).

\vspace{0.2cm}
\textbf{\hypertarget{reversibility}{Irreversibility:}} a natural property of genomes ensuring that interfertility cannot be re-established after complete reproductive isolation has been built up.

\vspace{0.2cm}
\textbf{\hypertarget{neutral}{Neutrality:}}
Neutrality here refers to the assumption that no selection is acting on genes other than that resulting in reduced effective migration between populations that are genetically distant (e.g. hybrid depression). 
\end{myglossarybox}
\end{figure}

Although it has been suggested that they may be quite common in nature (see \cite{coyne2004speciation}, p.111ff, \cite{nosil2008speciation}), processes of speciation with gene flow seem to have received relatively little attention in evolutionary modeling compared to allopatric speciation (see \cite{gavrilets2014models}, p.748). 
Recently, a new class of general speciation models started gaining popularity: a population- or individual-based framework, in which the degree of divergence between spatially dispersed groups of organisms is measured by their genetic distance (see \cite{gavrilets2014models}, p.745ff for a review). Within this class of models, diversity between populations arises from mutations (increasing genetic distance), while homogeneity arises from migrations between populations (decreasing genetic distance). The increase in genetic distance following mutation events is based on the infinite-allele assumption that each mutation at a locus results in an allele of a novel type; the decrease of genetic distance following migration events is due to the fixation of part of the migrant genome in a resident population.

In most of these models (see for instance \cite{higgs1991stochastic,manzo1994geographic,gavrilets1998rapid}), individuals migrate between populations at a constant rate, independent of genetic distance (exceptions including for instance \cite{gavrilets2000waiting}, for parapatric speciation between two populations). Once sufficient divergence has taken place, the classification as a new species is usually defined by the crossing of a predefined critical threshold of genetic distance between populations. By exceeding this threshold, the degree of reproductive isolation between the affected populations is typically assumed to jump from no isolation to complete isolation.

In this paper, we present a simple stochastic ``genetic distance" model in which the emergence of complete reproductive isolation occurs gradually, as a natural consequence of the interaction between gene flow and genetic distance between populations exposed to migration. In fact, through the coupling of migration rates to genetic distance, speciation results from an initial perturbation resulting in an increase in genetic distance, causing effective migration to decrease, which tends to increase genetic distance further, and so on. One can think of this dynamic as a positive feedback loop, which causes divergent populations to naturally snowball into complete reproductive isolation. We establish a general framework for the study of species complexes that is suitable to describe the emergence and stability of interbreeding structures ranging from ideal, transitive complexes to ring species or \hyperlink{clust}{subspecies clusterings}. 

The integration of this \hyperlink{feedback}{feedback effect} into the model through the function $h$, which encodes the translation of genetic distances into effective migration rates, raises some intriguing questions: Can we link characteristics of species complexes, such as \hyperlink{trans}{transitivity}, \hyperlink{clust}{clustering}, or stability, to analytical properties of the function $h$? Between geographic migration restrictions and the shape of the feedback function, which force has a stronger influence on the shape of large species complexes? How does the shape of the species complex depend on the number of populations? And finally, can we infer information about quantities related to speciation, such as the distribution of time to first speciation, or the average number of new species upon speciation from the structure of a species complex?

\section{Model description} \label{chap:model}

In this section, we present the idea of the model, the underlying biological assumptions and its mathematical implementation.

\paragraph{Evolutionary divergence feedback.}

\noindent The central idea of the model is to understand speciation as a consequence of a self-sustaining interaction between effective migration rates and the genetic differences between populations connected by migration. Here, we use the term ``effective migration rate'' to refer to the rate at which an individual migrates from one population to another, and fixes part of its genetic material in the arrival population. As alluded to above, the coupling of effective migration rates to genetic proximity can cause speciation by an initial decrease in genetic proximity (due to mutation) causing effective migration rates to decrease, which tends to decrease genetic proximity further, and so on. We will refer to this dynamic as \textbf{\hyperlink{feedback}{divergence feedback}}.

The term ``genetic differences between populations'' is intentionally kept broad, in order to encompass different theoretical views of speciation genomics. For instance, these differences could refer to different alleles at ``speciation genes'' (see \cite{nosil2011genes} for a precise definition and review of this term). The number of these ``speciation genes'' can be as little as two, or reach into the hundreds, depending on the species one considers (see \cite{coyne2004speciation}, p.302).

Another interpretation of the genetic differences between populations is the net synonymous divergence, i.e. the number of single nucleotide substitutions at synonymous sites (i.e., contained in noncoding regions or leaving unchanged the amino acid sequence produced). Data from different animal populations/species (see \cite{roux2016shedding} and, for instance, Fig. 3 therein) indicate that the net synonymous divergence between populations is a good predictor of the degree of reproductive isolation between populations. This fact makes this interpretation especially appealing from an application point of view, because synonymous substitutions are much easier to quantitatively determine than different alleles in speciation genes (see for instance \cite{nosil2011genes}). 

Two populations undergoing differentiation are said to be in the gray zone of speciation if they are not sufficiently differentiated to form distinct species but already too different to be identified as one single species. 
Analysis of empirical data in \cite{roux2016shedding} shows that this region of fuzzy species boundaries is relatively narrow in the logarithmic scale of genetic distances. To quantify this isolation gradient, we introduce a function $h$ that takes as input the genetic similarity between any given pair of populations, and returns the acceptance probability of migrants between them. We will denote this function by $h$, and refer to it as the \textbf{feedback function}.

We emphasize that measures of effective migration exist, and can serve as a good predictor for the shape of the feedback function $h$. As alluded to above, the authors of \cite{roux2016shedding} estimate the probability of ongoing gene flow between pairs of populations as a function of their divergence at synonymous sites, from observed genomic data (see, for instance, \cite{roux2016shedding} Fig. 3). The results indicate that across various animal species and populations, the probability of ongoing gene flow shows a consistent pattern of increase from values below 0.2 to values above 0.8 as genomic distance decreases (and genomic similarity increases) in the same critical region of distances around $0.01$. The feedback function $h$ can be thought of as encapsulating the shape and speed of this increase. 

By coupling the effective migration rate to the genetic proximity of two populations, we can understand speciation as the diverse process it is understood to be. Speciation is neither always a sudden, nor always a gradual process. 
Examples from nature can be found at either end of the spectrum, see \cite{nosil2017tipping, coyne2004speciation}. 
However, most speciation models (see for instance \cite{gavrilets1998rapid, pina2019does}) focusing on the genetic distance between populations rely implicitly on the assumption that the function $h$ is a Heaviside step function, equal to zero below some predefined threshold (complete reproductive isolation) and equal to one above the threshold (free interbreeding). In this framework, we stress that there is no feedback between differentiation and reproductive isolation: as long as genetic proximities are above the threshold, the effective migration rates stay unchanged. Once the genetic proximity between two populations falls below this level, reproductive isolation is complete and the frequency of migration events can go to zero in one fell swoop. As mentioned above, effective migration rates are known to exhibit different behaviors (see for example \cite{nosil2017tipping, roux2016shedding}), which motivates the incorporation of a feedback function that allows expressing different strengths of reduction in effective migration rates associated to genetic divergence.

\paragraph{Technical assumptions.}

Our aim is to build a model that keeps track of genetic diversity at $L$ loci of interest, across a metapopulation made of $N$ island-like populations.

Our first assumption is the absence of intra-population polymorphism, at the genes under consideration. To ensure that this property holds after mutation or migration events, we assume that the time between the appearance and loss/fixation of an allele is significantly shorter than the waiting time between two events. Thus, one conventionally ignores the short phases during which the population is polymorphic due to multiple segregating alleles at a given locus (see \cite{mccandlish2014modeling, patwa2008fixation} for reviews).

Genetic diversity across populations emerges from the interplay of mutation and migration: mutation events increase genetic diversity, while migration events tend to homogenize gene pools. 
Let us be more specific about the assumptions we make here. Since populations are thought of as monomorphic at all times, we need only consider the mutation and migration events that are followed by fixation of the novel/alien variant. 

First, we define a mutation event as a substitution, i.e., a mutation followed by the fixation (assumed ``instantaneous'') of the novel allele. In the realm of neutral theory, the substitution rate at a neutral locus equals the mutation rate per individual at this locus (see \cite{kimura1962probability}). Second, we understand a migration event as the migration of an individual followed by the fixation (assumed ``instantaneous'') of a fraction of its genome into the target genome.

We will further make the simplifying assumption that migration events always result in fixation at only one locus, i.e., replacement of the allele present at a single locus in the target genome by the allele present at the homologous locus of the migrating genome. In order to justify this assumption, we first note that if recombination rates are high enough, this will cause substantial fragmentation of the mutant genome and break genetic correlations. Then, after a few generations, linkage disequilibrium becomes very small, and we can expect alleles to fix independently. 
Under a neutrality assumption for the $L$ loci, the number of migrant alleles fixing in a population of size $n$ is thus given by a Binomial random variable $B_n$ with parameters $L$ and $\frac{1}{n}$. Hence, if $n\gg L$, then the probability $\mathbb{P} (B_n=1\,|\,B_n\not=0)$ that only one allele fixes conditional on fixation of at least one allele, goes to 1.
Finally, note that our assumption of fixation at a single locus is mainly made out of mathematical convenience and that our model could be easily adapted to multi-locus fixations, but at the cost of analytical tractability.

The last and most important assumption is divergence feedback, namely, as seen previously, that the likelihood of a successful migration between two populations decreases with their genetic distance. Let us now introduce the model.

\begin{figure}
    
\centering

\scalebox{0.8}{

\tikzset{every picture/.style={line width=0.75pt}} 

\begin{tikzpicture}[x=0.75pt,y=0.75pt,yscale=-1,xscale=1]

\draw  [fill={rgb, 255:red, 184; green, 233; blue, 134 }  ,fill opacity=1 ] (76.75,428) .. controls (76.75,423.31) and (80.56,419.5) .. (85.25,419.5) .. controls (89.94,419.5) and (93.75,423.31) .. (93.75,428) .. controls (93.75,432.69) and (89.94,436.5) .. (85.25,436.5) .. controls (80.56,436.5) and (76.75,432.69) .. (76.75,428) -- cycle ;
\draw  [fill={rgb, 255:red, 208; green, 2; blue, 27 }  ,fill opacity=1 ] (76.75,394) .. controls (76.75,389.31) and (80.56,385.5) .. (85.25,385.5) .. controls (89.94,385.5) and (93.75,389.31) .. (93.75,394) .. controls (93.75,398.69) and (89.94,402.5) .. (85.25,402.5) .. controls (80.56,402.5) and (76.75,398.69) .. (76.75,394) -- cycle ;
\draw  [fill={rgb, 255:red, 245; green, 166; blue, 35 }  ,fill opacity=1 ] (76.75,411) .. controls (76.75,406.31) and (80.56,402.5) .. (85.25,402.5) .. controls (89.94,402.5) and (93.75,406.31) .. (93.75,411) .. controls (93.75,415.69) and (89.94,419.5) .. (85.25,419.5) .. controls (80.56,419.5) and (76.75,415.69) .. (76.75,411) -- cycle ;
\draw  [fill={rgb, 255:red, 248; green, 231; blue, 28 }  ,fill opacity=1 ] (120.75,349) .. controls (120.75,344.31) and (124.56,340.5) .. (129.25,340.5) .. controls (133.94,340.5) and (137.75,344.31) .. (137.75,349) .. controls (137.75,353.69) and (133.94,357.5) .. (129.25,357.5) .. controls (124.56,357.5) and (120.75,353.69) .. (120.75,349) -- cycle ;
\draw  [fill={rgb, 255:red, 208; green, 2; blue, 27 }  ,fill opacity=1 ] (120.75,315) .. controls (120.75,310.31) and (124.56,306.5) .. (129.25,306.5) .. controls (133.94,306.5) and (137.75,310.31) .. (137.75,315) .. controls (137.75,319.69) and (133.94,323.5) .. (129.25,323.5) .. controls (124.56,323.5) and (120.75,319.69) .. (120.75,315) -- cycle ;
\draw   (96,332) .. controls (96,313.64) and (110.89,298.75) .. (129.25,298.75) .. controls (147.61,298.75) and (162.5,313.64) .. (162.5,332) .. controls (162.5,350.36) and (147.61,365.25) .. (129.25,365.25) .. controls (110.89,365.25) and (96,350.36) .. (96,332) -- cycle ;
\draw  [fill={rgb, 255:red, 80; green, 227; blue, 194 }  ,fill opacity=1 ] (120.75,332) .. controls (120.75,327.31) and (124.56,323.5) .. (129.25,323.5) .. controls (133.94,323.5) and (137.75,327.31) .. (137.75,332) .. controls (137.75,336.69) and (133.94,340.5) .. (129.25,340.5) .. controls (124.56,340.5) and (120.75,336.69) .. (120.75,332) -- cycle ;
\draw   (209.67,361.12) -- (235.37,361.12) -- (235.37,353.82) -- (252.5,368.41) -- (235.37,383) -- (235.37,375.71) -- (209.67,375.71) -- cycle ;
\draw   (114.25,349) .. controls (114.25,340.72) and (120.97,334) .. (129.25,334) .. controls (137.53,334) and (144.25,340.72) .. (144.25,349) .. controls (144.25,357.28) and (137.53,364) .. (129.25,364) .. controls (120.97,364) and (114.25,357.28) .. (114.25,349) -- cycle ;
\draw    (101.5,431) .. controls (157.65,437.9) and (171.1,404.04) .. (145.69,369.57) ;
\draw [shift={(144.5,368)}, rotate = 52.35] [color={rgb, 255:red, 0; green, 0; blue, 0 }  ][line width=0.75]    (10.93,-3.29) .. controls (6.95,-1.4) and (3.31,-0.3) .. (0,0) .. controls (3.31,0.3) and (6.95,1.4) .. (10.93,3.29)   ;
\draw   (52,411) .. controls (52,392.64) and (66.89,377.75) .. (85.25,377.75) .. controls (103.61,377.75) and (118.5,392.64) .. (118.5,411) .. controls (118.5,429.36) and (103.61,444.25) .. (85.25,444.25) .. controls (66.89,444.25) and (52,429.36) .. (52,411) -- cycle ;
\draw  [fill={rgb, 255:red, 184; green, 233; blue, 134 }  ,fill opacity=1 ] (317.75,429) .. controls (317.75,424.31) and (321.56,420.5) .. (326.25,420.5) .. controls (330.94,420.5) and (334.75,424.31) .. (334.75,429) .. controls (334.75,433.69) and (330.94,437.5) .. (326.25,437.5) .. controls (321.56,437.5) and (317.75,433.69) .. (317.75,429) -- cycle ;
\draw  [fill={rgb, 255:red, 208; green, 2; blue, 27 }  ,fill opacity=1 ] (317.75,395) .. controls (317.75,390.31) and (321.56,386.5) .. (326.25,386.5) .. controls (330.94,386.5) and (334.75,390.31) .. (334.75,395) .. controls (334.75,399.69) and (330.94,403.5) .. (326.25,403.5) .. controls (321.56,403.5) and (317.75,399.69) .. (317.75,395) -- cycle ;
\draw  [fill={rgb, 255:red, 245; green, 166; blue, 35 }  ,fill opacity=1 ] (317.75,412) .. controls (317.75,407.31) and (321.56,403.5) .. (326.25,403.5) .. controls (330.94,403.5) and (334.75,407.31) .. (334.75,412) .. controls (334.75,416.69) and (330.94,420.5) .. (326.25,420.5) .. controls (321.56,420.5) and (317.75,416.69) .. (317.75,412) -- cycle ;
\draw  [fill={rgb, 255:red, 184; green, 233; blue, 134 }  ,fill opacity=1 ] (357.75,349) .. controls (357.75,344.31) and (361.56,340.5) .. (366.25,340.5) .. controls (370.94,340.5) and (374.75,344.31) .. (374.75,349) .. controls (374.75,353.69) and (370.94,357.5) .. (366.25,357.5) .. controls (361.56,357.5) and (357.75,353.69) .. (357.75,349) -- cycle ;
\draw  [fill={rgb, 255:red, 208; green, 2; blue, 27 }  ,fill opacity=1 ] (357.75,315) .. controls (357.75,310.31) and (361.56,306.5) .. (366.25,306.5) .. controls (370.94,306.5) and (374.75,310.31) .. (374.75,315) .. controls (374.75,319.69) and (370.94,323.5) .. (366.25,323.5) .. controls (361.56,323.5) and (357.75,319.69) .. (357.75,315) -- cycle ;
\draw   (333,332) .. controls (333,313.64) and (347.89,298.75) .. (366.25,298.75) .. controls (384.61,298.75) and (399.5,313.64) .. (399.5,332) .. controls (399.5,350.36) and (384.61,365.25) .. (366.25,365.25) .. controls (347.89,365.25) and (333,350.36) .. (333,332) -- cycle ;
\draw  [fill={rgb, 255:red, 80; green, 227; blue, 194 }  ,fill opacity=1 ] (357.75,332) .. controls (357.75,327.31) and (361.56,323.5) .. (366.25,323.5) .. controls (370.94,323.5) and (374.75,327.31) .. (374.75,332) .. controls (374.75,336.69) and (370.94,340.5) .. (366.25,340.5) .. controls (361.56,340.5) and (357.75,336.69) .. (357.75,332) -- cycle ;
\draw   (293,412) .. controls (293,393.64) and (307.89,378.75) .. (326.25,378.75) .. controls (344.61,378.75) and (359.5,393.64) .. (359.5,412) .. controls (359.5,430.36) and (344.61,445.25) .. (326.25,445.25) .. controls (307.89,445.25) and (293,430.36) .. (293,412) -- cycle ;

\draw (32,388) node [anchor=north west][inner sep=0.75pt]   [align=left] {1};
\draw (159,292) node [anchor=north west][inner sep=0.75pt]   [align=left] {2};
\draw (273,389) node [anchor=north west][inner sep=0.75pt]   [align=left] {1};
\draw (396,292) node [anchor=north west][inner sep=0.75pt]   [align=left] {2};

\end{tikzpicture}
}
\captionsetup{font=small}
\caption{Toy realisation of the model and a migration event. Here, $N = 2, L = 3$, and the migration event occurs from population 1 to 2, affecting locus 3. The genetic proximity between 1 and 2 changes from $P_{12}=1/3$ to $P_{12} = 2/3$.}
\label{fig:Model_trans}
\end{figure}

\paragraph{The model.}

\noindent Consider a metapopulation comprising $N$ populations. Each population is monomorphic and thus can be considered as harboring one single genome with $L$ loci. In the following, lower case letters represent the populations and upper case letters the loci. We will represent the state of the metapopulation at time $t$ by a matrix of allelic types $A(t) := (A_{i, K}(t))_{1\le i\le N, 1\le K\le L}$, where $A_{i,K}$ represents the allelic type in population $i$ at locus $K$.
The dynamics between the $N$ populations will depend on a coupling factor between the loci. This coupling is enforced through the \tmstrong{genetic proximities}, defined between any populations $i$ and $j$ by
\begin{align}
    P^L_{i j}(t) \assign \frac{1}{L} \sum_{K = 1}^L \mathbf{1}_{\{ A_{i,K}(t) = A_{j,K}(t)\}}\, . \label{eq:def_genprox}
\end{align}
Here, the notation $\mathbf{1}_{\{ A_{i,K}(t) = A_{j,K}(t)\}}$ is defined through 
\begin{align*}
\mathbf{1}_{\{ A_{i,K}(t) = A_{j,K}(t)\}} = 
    \begin{cases} 
      1 & \text{if }A_{i,K}(t) = A_{j,K}(t) \\
      0 & \text{otherwise}
   \end{cases}.
\end{align*}
In words, the genetic proximity between $i$ and $j$ is the fraction of loci at which populations $i$ and $j$ currently carry the same allele. 

\noindent The model depends on the following parameters:

\begin{itemize}
    \item the mutation rate $\mu > 0$,
    \item the migration matrix $(M_{i j})$, where $M_{ii}=0$ and $M_{i j}\geq 0$ are the natural migration rates, reflecting the topology and ecology of the metapopulation in the absence of divergence feedback,
    \item and the feedback function $h$, that is a nondecreasing, continuous function on $[0,1]$, verifying $h(0)=0$ and  $h(1)=1$
\end{itemize}

In each population $i$ and at each locus $K$, \textbf{mutation} events occur at rate $\mu$. Any lineage $(i,K)$ experiencing a mutation event takes on a new type (infinite-allele model). At any time $t\geq 0$, between each pair of populations $i$ and $j$, and at each locus $K$, \textbf{migration} events from $i$ to $j$ occur at rate 
\begin{align}
    M_{i j}h(P^L_{i j}(t)) \, ,\label{eq:mig_rate}
\end{align}
called \textbf{effective migration rates}.
In the type matrix, this amounts to replacing the allele of $(j,K)$ by the allele of $(i,K)$, see Fig. \ref{fig:Model_trans}.

\begin{figure*}[t]
    \centering
    \includegraphics[width=\textwidth]{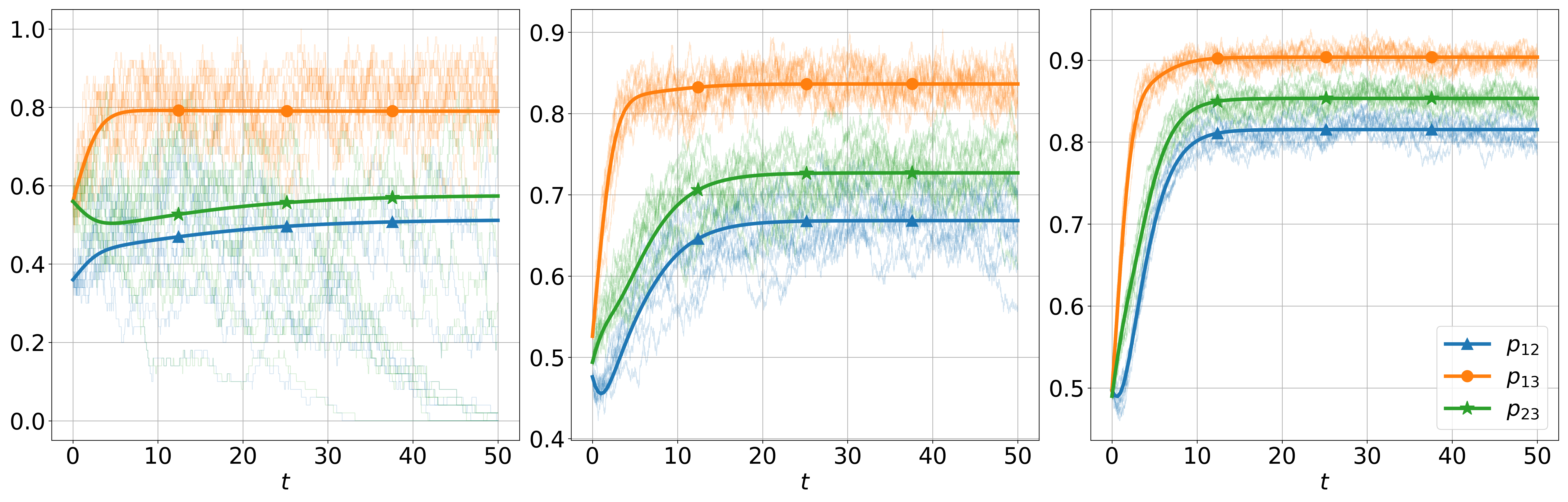}
    \captionsetup{font=small}
    \caption{Convergence of the stochastic genetic proximities to the solution of the ODE for 3 populations as the number $L$ of loci gets large. The strong, solid lines are numerically simulated solutions to the ODE (\ref{eq:ODE}). The transparent lines are simulations of the stochastic model for different numbers of loci, namely $L = 50, 500, 1000$ from left to right. Additionally, we varied mutation rates, namely $\mu = 0.1, 0.08, 0.05$ from left to right, while keeping the migration matrix constant: $M=((0, 0.1, 0.8), (0.1, 0, 0.5), (0.8, 0.5, 0))$.}
    \label{fig:conv_to_ODE}
\end{figure*}

We note that after a mutation event, the genetic proximity between the affected population $i$ (at some locus $K$) and every other population $j$ decreases by $1/L$, if $i$ did not already carry a different allele than $j$ at locus $K$ prior to the mutation event. Furthermore, after a migration event from $i$ to $j$ (at some locus $K$), the genetic proximity between $i$ and $j$ increases by $1/L$ if $i$ and $j$  carried different alleles prior to the migration event.

\section{Mathematical analysis}
\label{chap:ODE_Dual}
Here, we introduce two mathematical tools that considerably simplify the analysis when the number of loci gets large: 
\begin{itemize}
    \item 
First we will show that instead of keeping track of the allele identity at each locus in each population, the dynamics can be described by a system of ODE's following the fraction of shared alleles (genetic proximity) between each pair of populations;
\item Second, a time reversal (duality) approach allows us to express as a fixed-point problem the genetic proximity $P_{ij}$ between populations $i$ and $j$, as the probability that two random walks started at $i$ and $j$ respectively and moving according to (dual) effective migration rates, which themselves depend on all genetic proximities, coalesce before a mutation occurs.
\end{itemize}

\paragraph{ODE approximation.}
We describe how our stochastic model can be approximated by the solution to an ordinary differential equation (ODE), when the number of loci is sufficiently large. This result will allow us to examine the evolution of the genetic proximities over time in a deterministic context, and thus analytically study the evolution of reproductive isolation in our model.

More specifically, we will illustrate that the genetic proximities $(P^L_{i j}(t))_{1\le i,j\le N}$ in our stochastic model can be approximated, as $L$ gets large, by a continuous, deterministic function $P(t):=(P_{i j}(t))_{1\le i,j\le N}$, solution to the non-linear differential equation
\begin{eqnarray*}
\dot{P}_{i j} & = & \sum_{k = 1}^N (M_{k i} h (P_{k i}) P_{k j}+ M_{k j} h (P_{k j}) P_{k i})\\
    &  - & P_{i j} \left( \sum_{k = 1}^N (M_{k i} h (P_{k i}) + M_{k j} h (P_{j
    k})) + 2 \mu \right) \, ,
  \end{eqnarray*}
for all $i\neq j$. This will be written shortly as 
\begin{align}
    \dot{P}(t) = \vec F(P(t))\, . \label{eq:ODE}
\end{align}
Note that for $N=2$ and$M_{12}=M_{21} = m$, this ODE becomes
\begin{align}
    \dot p = 2mh(p)(1-p)-2\mu p \, . \label{eq:ODE_dim2}
\end{align}
In Fig. \ref{fig:conv_to_ODE}, we illustrate the convergence for large $L$ of the stochastic genetic proximities to the solution of the ODE with simulations. 

We now give a brief heuristics for the system of equations (\ref{eq:ODE}) and refer to the SI \ref{chap:master_eq} for a rigorous derivation. 

Recall from the previous section that $A_{i,K}(t)$ is the allelic type at locus $K$, in population $i$ at time $t$. To gain some intuition, we start by assuming that $h\equiv 1$, so that the effective migration rates are not impacted by genetic distances (absence of feedback). In this setting, the allelic composition 
at each locus $K$ 
\begin{equation}
\label{eqn:def_A_K}
    A_K(t) := (A_{1,K}(t),\dots,A_{N,K}(t))
    \end{equation}
evolves independently, according to a Moran model on a weighted graph. That is, each population is thought of as an individual; new mutations arise at rate $\mu$ and ``individual" $j$ takes on the type of ``individual" $i$ at rate $M_{i j}$. In particular, when $M_{i j} = m$ for all $i\neq j$, this process corresponds to the standard Moran process, see \cite{etheridge2011some}.

How does changing $h$ to a non-trivial feedback function influence the model? If $h$ is not constant, the previous representation remains valid under an important adaptation: the reproduction rate $M_{i j}$ in the case $h\equiv 1$ needs to be replaced by $M_{i j} h(P^L_{i j})$. The resulting allelic processes $A_1,\dots,A_L$ defined by \eqref{eqn:def_A_K} are now coupled through the genetic proximities $P^L_{i j}$ given by \eqref{eq:def_genprox}.

For small values of $L$, this induces a strong interaction between loci. However, for a large number of loci, the interactions between any pair of loci should become negligible. Thus, under the premise that the loci are asymptotically uncorrelated, we can apply the law of large numbers to obtain the convergence of $P^L_{i j}(t)$ to a deterministic quantity. 

This limit, which we will denote by $P_{i j}(t)$, describes the coupling between the allelic processes $A_1,\dots,A_L$, when the number of loci is large. Furthermore, all the limiting allelic processes should be identically distributed, since the property holds true for finite $L$. Let 
\begin{equation}
\label{eqn:def_cal_A}
{\cal A}(t) := ({\cal A}_{1}(t),\dots,{\cal A}_{N}(t))
\end{equation}
be the limiting allelic process. Intuitively, we think of ${\cal A}_i(t)$ as the allelic type at a ``typical" locus, in population $i$ at time $t$.

The representation of $P^L_{i j}$ in equation \eqref{eq:def_genprox} gives an interpretation of the limiting $P_{i j}$ in terms of the allelic process ${\cal A}$, i.e.,  
\begin{eqnarray}
P_{ij}(t) & = &\lim_{L\to\infty} \frac{1}{L} \sum_{K=1}^L \mathbf{1}_{\{A_{i,K}(t) = A_{j,K}(t)\}}\nonumber\\
& = & \mathbb{P}({\cal A}_i(t) = {\cal A}_j(t)),
\label{eq:p_ij_prob}
\end{eqnarray}
where in the last line, we used the law of large numbers.
In other words, $P_{ij}(t)$ is the probability that $i$ and $j$ have the same type at time $t$ in the Moran model ${\cal A}$ describing the dynamics at a ``typical" locus.

Can we provide a description of the dynamics of the limiting allelic process $\mathcal{A}$? To deduce the reproduction rates, we recall that for finite $L$, the rate at which $j$ takes on the type of $i$ is $M_{ij}h(P^L_{ij}(t))$, 
which by continuity of $h$ converges to $M_{ij} h(P_{ij}(t))$.

We thus obtain a single-locus Moran representation of our stochastic model via the process ${\cal A}$, whose dynamics are given as follows.
 For each ``individual'' $i$, mutations occur at rate $\mu$ and give rise to a novel allele. At any time $t$, reproduction events from $i$ to $j$ occur, that is, the individual $j$ takes on the type of $i$ at rate $$M_{i j} h\left(\mathbb{P}(\mathcal{A}_i(t) = \mathcal{A}_j(t))\right).$$

This process ${\cal A}$ is an example of a $nonlinear$ Markov process, characterized by the dependence of the transition probabilities not only on the state, but also on the law of the process itself (here, only the probabilities that ``individuals'' $i$ and $j$ carry the same allele). The term $nonlinear$ represents the non-linearity in the Chapman-Kolmogorov equation, that the transition probabilities of the Markov process satisfy. We will call ${\cal A}$ a $nonlinear$ Moran process. 

Crucially, the nonlinear Moran process ${\cal A}$ allows us to express the deterministic genetic proximities $P_{i j}$ as the solution to a system of ODEs. This property can be seen by the ``backward'' representation of the Moran process thanks to a duality approach.  

\paragraph{Duality approach.}

To gain some intuition, consider the process $\cal A$ at equilibrium,  i.e., when the quantities $P_{i j}(t)$ have attained their equilibrium state $P^\text{eq}_{i j}$.  In this case, the process ${\cal A}$ corresponds to a Moran process on a weighted graph. We consider its graphical representation on $\{1,\dots, N\}\times \mathbb{R}_+$ (see \cite{etheridge2011some}):
\begin{itemize}
  \item For a reproductive event from vertex $i$ to vertex $j$ at time $t$, draw an arrow with origin at $(i,t)$ and tip at $(j,t)$ 
  \item For a mutation event at vertex $k$ at time $t$, draw a $\star$ at  $(k,t)$. 
 \end{itemize}

Let us now consider the population at a reference time $T$. Via this graphical representation (see Fig. \ref{fig:MoranDual}), we can associate to every vertex an ancestral lineage carrying its allele using the arrow-star configuration. Then, the system of ancestral lineages is distributed like random walks on a graph: they evolve independently until they coalesce, jumping from site $i$ to $j$ at rate $M_{j i}  h(P^\text{eq}_{j i})$. Each lineage is killed upon encountering a mutation ($\star$). This is because once an ancestral lineage encounters a mutation, the allele it carries has no further ascent.

By (\ref{eq:p_ij_prob}), the quantity $P^\text{eq}_{i j}$ can be computed as the probability that $i$ and $j$ are of the same type. This occurs if and only if the ancestral lineages starting from $i$ and $j$ coalesce before being killed. Since the transition rates themselves depend on the genetic proximities, we obtain that $P^\text{eq}_{i j}$ can be computed by solving a fixed point problem. More formally, define the coalescing time
\[ T_{ij} := \inf\{ u > 0 : S^{i}(u) = S^{j} (u) \}, \] where $S^{i}, S^{j}$ are the ancestral lineages starting from site $(i,T)$ and $(j,T)$.
We note that the law of $T_{i j}$ depends on $P^\text{eq} = (P^\text{eq}_{ij})_{i,j}$ through the jump rates of the ancestral lineages, we will thus write $T_{i j} = T_{i j}(P^\text{eq})$. According to the previous argument, the matrix of genetic proximities $P^\text{eq}$ satisfies the \tmstrong{fixed point problem}
    \begin{equation}
\forall i\neq j
\qquad   P^\text{eq}_{ij} =\mathbb{E} \left( e^{- 2 \mu T_{i j}(P^\text{eq})} \right) \, ,
      \label{eq:FxPtPb}
    \end{equation}
see Theorem (\ref{thm:FixPtPb}) in SI \ref{sec:eq_stab}.

\begin{figure}
	\centering
\scalebox{0.5}{

\tikzset{every picture/.style={line width=0.75pt}} 

\begin{tikzpicture}[x=0.75pt,y=0.75pt,yscale=-1,xscale=1]

\draw    (121,40) -- (121,449.5) ;
\draw [color={rgb, 255:red, 245; green, 166; blue, 35 }  ,draw opacity=1 ]   (171,40) -- (171,449.5) ;
\draw [color={rgb, 255:red, 184; green, 233; blue, 134 }  ,draw opacity=1 ]   (221,40) -- (221,449.5) ;
\draw [color={rgb, 255:red, 208; green, 2; blue, 27 }  ,draw opacity=1 ] (270,40) -- (270,450.5) ;
\draw [color={rgb, 255:red, 144; green, 19; blue, 254 }  ,draw opacity=1 ]   (320,40) -- (320,450.5) ;
\draw    (71,40) -- (71,447.5) ;
\draw [shift={(71,449.5)}, rotate = 270] [color={rgb, 255:red, 0; green, 0; blue, 0 }  ]   (10.93,-3.29) .. controls (6.95,-1.4) and (3.31,-0.3) .. (0,0) .. controls (3.31,0.3) and (6.95,1.4) .. (10.93,3.29)   ;
\draw    (403,449.5) -- (403,42) ;
\draw [shift={(403,40)}, rotate = 90] [color={rgb, 255:red, 0; green, 0; blue, 0 }  ] (10.93,-3.29) .. controls (6.95,-1.4) and (3.31,-0.3) .. (0,0) .. controls (3.31,0.3) and (6.95,1.4) .. (10.93,3.29)   ;
\draw [color={rgb, 255:red, 80; green, 227; blue, 194 }  ,draw opacity=1 ][fill={rgb, 255:red, 80; green, 227; blue, 194 }  ,fill opacity=1 ]   (171,40) -- (171,227.5) ;
\draw [color={rgb, 255:red, 155; green, 155; blue, 155 }  ,draw opacity=1 ]   (121,117) -- (263,117) ;
\draw [shift={(265,117)}, rotate = 180] [color={rgb, 255:red, 155; green, 155; blue, 155 }  ,draw opacity=1 ]   (10.93,-3.29) .. controls (6.95,-1.4) and (3.31,-0.3) .. (0,0) .. controls (3.31,0.3) and (6.95,1.4) .. (10.93,3.29)   ;
\draw [color={rgb, 255:red, 208; green, 2; blue, 27 }  ,draw opacity=1 ]   (121,40) -- (121,326.5) ;
\draw [color={rgb, 255:red, 155; green, 155; blue, 155 }  ,draw opacity=1 ]   (171,326) -- (128,326) ;
\draw [shift={(125,326)}] [color={rgb, 255:red, 155; green, 155; blue, 155 }  ,draw opacity=1 ]   (10.93,-3.29) .. controls (6.95,-1.4) and (3.31,-0.3) .. (0,0) .. controls (3.31,0.3) and (6.95,1.4) .. (10.93,3.29)   ;
\draw [color={rgb, 255:red, 245; green, 166; blue, 35 }  ,draw opacity=1 ]   (121,326.5) -- (121,449.5) ;
\draw [color={rgb, 255:red, 80; green, 227; blue, 194 }  ,draw opacity=1 ][fill={rgb, 255:red, 80; green, 227; blue, 194 }  ,fill opacity=1 ]   (221,40) -- (221,292.5) ;
\draw [color={rgb, 255:red, 144; green, 19; blue, 254 }   ,draw opacity=1 ][fill={rgb, 255:red, 80; green, 227; blue, 194 }  ,fill opacity=1 ] (270,40) -- (270,117) ;
\draw [color={rgb, 255:red, 155; green, 155; blue, 155 }  ,draw opacity=1 ]   (221,233) -- (313,233) ;
\draw [shift={(316,233)}, rotate = 180] [color={rgb, 255:red, 155; green, 155; blue, 155 }  ,draw opacity=1 ]    (10.93,-3.29) .. controls (6.95,-1.4) and (3.31,-0.3) .. (0,0) .. controls (3.31,0.3) and (6.95,1.4) .. (10.93,3.29)   ;
\draw [color={rgb, 255:red, 80; green, 227; blue, 194 }  ,draw opacity=1 ]   (320,232.5) -- (320,451.5) ;
\draw    (450,39) -- (450,448.5) ;
\draw [color={rgb, 255:red, 245; green, 166; blue, 35 }  ,draw opacity=1 ][line width=1.5]    (500,325.5) -- (500,448.5) ;
\draw [color={rgb, 255:red, 80; green, 227; blue, 194 }  ,draw opacity=1 ]   (550,39) -- (550,448.5) ;
\draw [color={rgb, 255:red, 155; green, 155; blue, 155 }  ,draw opacity=1 ]   (649,40) -- (649,449.5) ;
\draw [color={rgb, 255:red, 155; green, 155; blue, 155 }  ,draw opacity=1 ][fill={rgb, 255:red, 245; green, 166; blue, 35 }  ,fill opacity=1 ]   (500,39) -- (500,226.5) ;
\draw [color={rgb, 255:red, 155; green, 155; blue, 155 }  ,draw opacity=1 ]   (599,117) -- (457,117) ;
\draw [shift={(454,117)}] [color={rgb, 255:red, 155; green, 155; blue, 155 }  ,draw opacity=1 ]   (10.93,-3.29) .. controls (6.95,-1.4) and (3.31,-0.3) .. (0,0) .. controls (3.31,0.3) and (6.95,1.4) .. (10.93,3.29)   ;
\draw [color={rgb, 255:red, 155; green, 155; blue, 155 }  ,draw opacity=1 ]   (450,39) -- (450,325.5) ;
\draw [color={rgb, 255:red, 208; green, 2; blue, 27 }  ,draw opacity=1 ][line width=1.5]    (599,117.5) -- (599,449.5) ;
\draw [color={rgb, 255:red, 155; green, 155; blue, 155 }  ,draw opacity=1 ]   (450,325.5) -- (493,325.5) ;
\draw [shift={(495,325.5)}, rotate = 180] [color={rgb, 255:red, 155; green, 155; blue, 155 }  ,draw opacity=1 ][line width=0.75]    (10.93,-3.29) .. controls (6.95,-1.4) and (3.31,-0.3) .. (0,0) .. controls (3.31,0.3) and (6.95,1.4) .. (10.93,3.29)   ;
\draw [color={rgb, 255:red, 184; green, 233; blue, 134 }  ,draw opacity=1 ][line width=1.5]    (450,325.5) -- (450,448.5) ;
\draw [color={rgb, 255:red, 155; green, 155; blue, 155 }  ,draw opacity=1 ][fill={rgb, 255:red, 184; green, 233; blue, 134 }  ,fill opacity=1 ]   (550,39) -- (550,291.5) ;
\draw [color={rgb, 255:red, 155; green, 155; blue, 155 }  ,draw opacity=1 ]  (599,39) -- (599,117.5) ;
\draw [color={rgb, 255:red, 155; green, 155; blue, 155 }  ,draw opacity=1 ]   (649,231.5) -- (557,231.5) ;
\draw [shift={(554,231.5)}] [color={rgb, 255:red, 155; green, 155; blue, 155 }  ,draw opacity=1 ]   (10.93,-3.29) .. controls (6.95,-1.4) and (3.31,-0.3) .. (0,0) .. controls (3.31,0.3) and (6.95,1.4) .. (10.93,3.29)   ;
\draw [color={rgb, 255:red, 80; green, 227; blue, 194 }  ,draw opacity=1 ][line width=1.5]    (649,231.5) -- (649,450.5) ;
\draw [color={rgb, 255:red, 65; green, 117; blue, 5 }  ,draw opacity=1 ]   (320,122.5) -- (320,232.5) ;
\draw [color={rgb, 255:red, 184; green, 233; blue, 134 }  ,draw opacity=1 ][line width=1.5]    (500,226.5) -- (500,325.5) ;
\draw [color={rgb, 255:red, 80; green, 227; blue, 194 }  ,draw opacity=1 ][line width=1.5]    (550,39) -- (550,231.5) ;
\draw [color={rgb, 255:red, 208; green, 2; blue, 27 }  ,draw opacity=1 ][line width=1.5]    (450,39) -- (450,115.5) ;
\draw [color={rgb, 255:red, 144; green, 19; blue, 254 }  ,draw opacity=1 ][line width=1.5]    (550,291.5) -- (550,448.5) ;

\draw (41.58,265.55) node [anchor=north west][inner sep=0.75pt]  [rotate=-269.68] [align=left] {time};
\draw (373.58,265.55) node [anchor=north west][inner sep=0.75pt]  [rotate=-269.68] [align=left] {time};
\draw (116,19) node [anchor=north west][inner sep=0.75pt]   [align=left] {1};
\draw (168,19) node [anchor=north west][inner sep=0.75pt]   [align=left] {2};
\draw (217,19) node [anchor=north west][inner sep=0.75pt]   [align=left] {3};
\draw (265,19) node [anchor=north west][inner sep=0.75pt]   [align=left] {4};
\draw (314,19) node [anchor=north west][inner sep=0.75pt]   [align=left] {5};
\draw (30,44) node [anchor=north west][inner sep=0.75pt]   [align=left] {0};
\draw (35,427.5) node [anchor=north west][inner sep=0.75pt]   [align=left] {$\displaystyle t$};
\draw (168.05,216.35) node [anchor=north west][inner sep=0.75pt]  [rotate=-33.19] [align=left] {$\displaystyle \bigstar $};
\draw (218.05,279.35) node [anchor=north west][inner sep=0.75pt]  [rotate=-33.19] [align=left] {$\displaystyle \bigstar $};
\draw (317.05,107.35) node [anchor=north west][inner sep=0.75pt]  [rotate=-33.19] [align=left] {$\displaystyle \bigstar $};
\draw (445,18) node [anchor=north west][inner sep=0.75pt]   [align=left] {1};
\draw (497,18) node [anchor=north west][inner sep=0.75pt]   [align=left] {2};
\draw (546,18) node [anchor=north west][inner sep=0.75pt]   [align=left] {3};
\draw (594,18) node [anchor=north west][inner sep=0.75pt]   [align=left] {4};
\draw (643,18) node [anchor=north west][inner sep=0.75pt]   [align=left] {5};
\draw (497.05,215.35) node [anchor=north west][inner sep=0.75pt]  [rotate=-33.19] [align=left] {$\displaystyle \bigstar $};
\draw (547.05,278.35) node [anchor=north west][inner sep=0.75pt]  [rotate=-33.19] [align=left] {$\displaystyle \bigstar $};
\draw (646.05,106.35) node [anchor=north west][inner sep=0.75pt]  [rotate=-33.19] [align=left] {$\displaystyle \bigstar $};

\end{tikzpicture}

}
\captionsetup{font=small}
\caption{Realisation of the genetic partitions induced by the single-locus Moran model, and its dual for $N = 5$. On the left, colours represent genetic types, whereas on the right, colours represent ancestral lineages.}
\label{fig:MoranDual}
\end{figure}

If the metapopulation is not at equilibrium so that the $P_{i j}(t)$ now depend on time, the same argument applies, with the difference that the jump rates of the random walks become inhomogeneous in time. Using the same genealogical approach, we can compute the probability that two sites $i$ and $j$ have the same type at some instant $t\geq 0$ by tracing their ancestral lineages back in time, starting from $t$. This allows us to deduce that $P_{i j}(t)$ are solution to the differential equation (\ref{eq:ODE}). We refer to Proposition \ref{prop:DynDual} and Corollary \ref{cor:ODE} for details.

\bigskip

\section{A special case: two populations}
\label{sec:2pop}
To get some intuition about how the fixed-point equation (\ref{eq:FxPtPb}) relates 
to the ODE (\ref{eq:ODE}) we first consider the simplest possible case
$N=2$, with symmetric migration $m = M_{12} = M_{21}$.

Denote the one-dimensional, associated equilibrium $P^\text{eq}_{12}$ by $p^\text{eq}$. In this case, the distribution of the random variable $T_{1 2}$  is given by the minimum of two exponential random variables with parameter $ m h(p^\text{eq})$ since coalescence occurs at the first jump of one of the two random walks.
This minimum is an exponential law of parameter $2 m h(p^\text{eq})$ and the fixed point equation (\ref{eq:FxPtPb}) writes 
\begin{align}\label{eq:FxPtPb_dim2}
p^\text{eq} = \frac{mh(p^\text{eq})}{\mu + mh(p^\text{eq})} \backassign f(p^\text{eq}).
\end{align}
which coincides 
with the equilibrium condition for 
the ODE (\ref{eq:ODE_dim2}).

\medskip

Let us now turn to the stability analysis of the ODE.
We remark that $p^\text{eq} = 0$, corresponding to speciation between populations 1 and 2, is always an equilibrium. According to (\ref{eq:ODE_dim2}), $0$ is an unstable equilibrium if and only if 
\begin{align}\label{eq:Stab_dim2}
\frac{d f}{d p}\Bigr\rvert_{p = 0} = \frac{mh'(0)}{\mu} > 1\, .
\end{align}
In words, if  migration between the two populations has ceased for sufficiently long that they  achieve total reproductive isolation ($p=0$) and if a small quantity of genetic material is then artificially introgressed from one population into the other ($p=\varepsilon>0$), they would resume gene flow upon a secondary contact if (\ref{eq:Stab_dim2}) is verified.

If $h'(0) > 0$, this implies that if migration rates are sufficiently large or mutation rates sufficiently small, reproductively isolated populations could fuse again following even modest introgression. The occurrence of such fusions would be problematic and contradict the general observation that complete reproductive isolation is \hyperlink{reversibility}{irreversible} (see \cite{coyne2004speciation}, p. 37f, and \cite{orr1995population}). Therefore, we must and will suppose throughout the rest of the article
\begin{align}
    h'(0) = 0 \, .\label{eq:irrev}
\end{align}
Remarkably, we show that even when $N>2$,  the simple condition \eqref{eq:irrev} guarantees that any configuration made of several species complexes  mutually isolated from each other is also stable under any small perturbation by introgression of previously unshared alleles. See Remark \ref{rem:glob_loc} and Proposition \ref{prop:glob_loc} in the SI for a precise statement and a proof. 

Note that the simplest choices of a nondecreasing function $h$ satisfying $h(0)=h'(0)=0$ and $h(1)=1$ include $h(x)=x^a$ for $a>1$. Returning to the case $N=2$ and assuming $h(x)=x^a$ with $a\ge 2$, the ODE \eqref{eq:ODE_dim2}
then becomes
$$
\dot p = 2mp^a(1-p)-2\mu p,
$$
which has three equilibria $0< p_0<p_1$ (provided $\mu/m<(a-1)^{a-1}/a^a$), where $0$ and $p_1$ are stable, while $p_0$ is unstable. 
This gives an inspiring picture of speciation as a bistable process:
\begin{itemize}
\item Under continuous migration, the two populations sit at the
stable migration-mutation equilibrium characterized by genetic proximity $p_1$;
\item If migration were to cease, mutations would accumulate and genetic proximities drop to some value $p$ at the end of the allopatric phase;
\item if $p>p_0$, the populations fuse again at secondary contact and genetic proximities recover to equilibrium value $p_1$, while if $p<p_0$, the divergence feedback takes them into a snowball process where proximities decrease to 0 (total reproductive isolation). See Sections \ref{sec:subsp_clust} and \ref{sec:fluct} for generalizations.
\end{itemize}

\medskip

Before closing this section, let us emphasize that if the ODE approach seems much more direct in the case $N=2$, it is far from obvious how to assess its general behavior in large species complexes.
This already hints at an observation we will address in later sections: the two approaches presented are complementary in the sense that the ODE approach is well suited to describe small metapopulations, while the fixed-point problem is well suited to describe large metapopulations.

\section{Intransitive species}
\label{chap:with_threshold}

Patterns such as ring species or hybrid zones show how diverse the shapes of species complexes can be (see \cite{irwin2001ring,barton1989adaptation}), raising the question: How does divergence feedback determine the shape of a species complex? 

We begin by defining the notion of species complexes in our framework. Let $P^\text{eq} = (P^\text{eq}_{i j})_{1\le i,j\le N}$ be an equilibrium for the system of genetic proximities (\ref{eq:LinSysGenProx}). We say that a group of populations $S\subseteq \{1,\dots,N\}$ forms a \textbf{species} if any two populations $i$ and $j$ therein can exchange genes, either directly (i.e., $h(P^\text{eq}_{i j}) > 0$), or through a chain of intermediary populations (i.e., there is $i = k_0, k_1,\dots,k_n = j$ such that $h(P^\text{eq}_{k_{l-1}k_{l}}) > 0$ for all $1\le l\le n$).

We first claim that if $i$ and $j$ belong to the same species, then we actually must have $P^\text{eq}_{ij}>0$. Mathematically, this can be seen from the right-hand side of the fixed point problem (\ref{eq:FxPtPb}). Indeed, if $i,j$ belong to the same species, then $T_{ij}(P^\text{eq})$ is finite with positive probability (since there is a chain of intermediary populations between $i$ and $j$ that can interbreed), so that $P_{ij}^\text{eq}=\mathbb{E}\left[e^{-2\mu T_{ij}(P^\text{eq}) }\right]>0$ (see Remark \ref{rem:transitivity} in SI).

If we assume that $h>0$ on $(0,1]$, then $P_{ij}^\text{eq}>0$ implies $h(P_{ij}^\text{eq})>0$ and populations within the same species will always be able to interbreed. The situation is more complex if we assume that populations cannot interbreed below a genetic threshold $c$, that is, when there exists $c$ such that $h(x)=0$ for $x<c$. In this case, we observe the emergence of \hyperlink{trans}{intransitive} interbreeding networks, in the sense that, even if $i$ interbreeds with $j$, and $j$ interbreeds with $k$, $i$ and $k$ cannot necessarily interbreed. We provide two examples.

\paragraph{Friendship graph.} First, we consider a complete migration graph of odd size $N$ and constant $M_{ij}=m$. By performing simulations (see \ref{fig:friendship_sim}), we show that we can choose a feedback function $h$, such that the species graph (see Fig. \ref{fig:MigGraphs}, (a2)) is stable so that individuals can only interbreed if they belong to the same triangle. 
This example illustrates that despite the uniformity of the underlying migration structure, non-transitive interbreeding structures can emerge, an interesting case of symmetry breaking. Our simulations also reveal that the friendship graph can only exist for small enough $N$, see Fig. \ref{fig:friendship_sim}. We also demonstrate this property analytically (see Proposition \ref{prop:sym_break_2} and Remark \ref{rem:sym_break} in SI).

\medskip

\paragraph{Ring species.} We now consider $N$ populations in a ring  migration structure (see Fig. \ref{fig:MigGraphs}, (b1)) with reduced migration between the two terminal populations. For the sake of illustration, we will assume that  
 the migration rates are constant equal to $m$ except at the end point where $M_{1N} = M_{N1} = \frac{m}{2}$. This setting models the existence of a geographic obstacle hampering migrations and corresponding to an area of unsuitable habitat, see for instance \cite{kuchta2009closing} for the celebrated example of the salamander \textit{Ensatina eschscholtzii} species complex  surrounding the Californian Central Valley, or \cite{cacho2012caribbean} for the species complex of slipper spurge \textit{Euphorbia tithymaloides} surrounding the Caribbean sea.

In Fig. \ref{fig:ring_species}, we investigate the existence of a ring species where the two end populations $1$ and $N$ are reproductively isolated from each other, despite ongoing gene flow through intermediary populations. The simulations reveal that while requiring very specific conditions (small migration/high mutation, low enough threshold), ring species can exist stably in a static environment. 
The range of parameters allowing for a stable equilibrium reflects the fact that extreme values of $\mu/m$ and $c$ tend to either produce several species ($\mu/m$ resp. $c$ too large) or to close the ring ($\mu/m$ resp. $c$ too small).

\begin{figure} 
\captionsetup{font=small}
\centering 
 
\begin{subfigure}[b]{0.4\linewidth}
\centering
\scalebox{0.65}{

\tikzset{every picture/.style={line width=0.75pt}} 

\begin{tikzpicture}[x=0.75pt,y=0.75pt,yscale=-1,xscale=1]

\draw   (606.92,171.92) .. controls (606.92,167.08) and (610.83,163.17) .. (615.67,163.17) .. controls (620.5,163.17) and (624.42,167.08) .. (624.42,171.92) .. controls (624.42,176.75) and (620.5,180.67) .. (615.67,180.67) .. controls (610.83,180.67) and (606.92,176.75) .. (606.92,171.92) -- cycle ;
\draw  [color={rgb, 255:red, 0; green, 0; blue, 0 }  ,draw opacity=1 ][fill={rgb, 255:red, 255; green, 255; blue, 255 }  ,fill opacity=1 ] (606.92,101.67) .. controls (606.92,96.83) and (610.83,92.92) .. (615.67,92.92) .. controls (620.5,92.92) and (624.42,96.83) .. (624.42,101.67) .. controls (624.42,106.5) and (620.5,110.42) .. (615.67,110.42) .. controls (610.83,110.42) and (606.92,106.5) .. (606.92,101.67) -- cycle ;
\draw   (651.91,118.1) .. controls (651.91,113.27) and (655.83,109.35) .. (660.66,109.35) .. controls (665.49,109.35) and (669.41,113.27) .. (669.41,118.1) .. controls (669.41,122.93) and (665.49,126.85) .. (660.66,126.85) .. controls (655.83,126.85) and (651.91,122.93) .. (651.91,118.1) -- cycle ;
\draw   (675.85,159.72) .. controls (675.85,154.89) and (679.77,150.97) .. (684.6,150.97) .. controls (689.44,150.97) and (693.35,154.89) .. (693.35,159.72) .. controls (693.35,164.55) and (689.44,168.47) .. (684.6,168.47) .. controls (679.77,168.47) and (675.85,164.55) .. (675.85,159.72) -- cycle ;
\draw   (667.54,207.04) .. controls (667.54,202.21) and (671.46,198.29) .. (676.29,198.29) .. controls (681.12,198.29) and (685.04,202.21) .. (685.04,207.04) .. controls (685.04,211.87) and (681.12,215.79) .. (676.29,215.79) .. controls (671.46,215.79) and (667.54,211.87) .. (667.54,207.04) -- cycle ;
\draw   (630.86,237.93) .. controls (630.86,233.1) and (634.78,229.18) .. (639.61,229.18) .. controls (644.44,229.18) and (648.36,233.1) .. (648.36,237.93) .. controls (648.36,242.76) and (644.44,246.68) .. (639.61,246.68) .. controls (634.78,246.68) and (630.86,242.76) .. (630.86,237.93) -- cycle ;
\draw   (582.98,237.93) .. controls (582.98,233.1) and (586.89,229.18) .. (591.73,229.18) .. controls (596.56,229.18) and (600.48,233.1) .. (600.48,237.93) .. controls (600.48,242.76) and (596.56,246.68) .. (591.73,246.68) .. controls (586.89,246.68) and (582.98,242.76) .. (582.98,237.93) -- cycle ;
\draw   (546.29,207.04) .. controls (546.29,202.21) and (550.21,198.29) .. (555.04,198.29) .. controls (559.88,198.29) and (563.79,202.21) .. (563.79,207.04) .. controls (563.79,211.87) and (559.88,215.79) .. (555.04,215.79) .. controls (550.21,215.79) and (546.29,211.87) .. (546.29,207.04) -- cycle ;
\draw   (561.92,118.1) .. controls (561.92,113.27) and (565.84,109.35) .. (570.67,109.35) .. controls (575.5,109.35) and (579.42,113.27) .. (579.42,118.1) .. controls (579.42,122.93) and (575.5,126.85) .. (570.67,126.85) .. controls (565.84,126.85) and (561.92,122.93) .. (561.92,118.1) -- cycle ;
\draw    (660.66,118.1) -- (615.67,171.92) ;
\draw    (684.6,159.72) -- (615.67,171.92) ;
\draw    (615.67,171.92) -- (676.29,207.04) ;
\draw    (615.67,171.92) -- (639.61,237.93) ;
\draw    (615.67,171.92) -- (591.73,237.93) ;
\draw    (615.67,171.92) -- (555.04,207.04) ;
\draw    (570.67,118.1) -- (615.67,171.92) ;
\draw  [fill={rgb, 255:red, 255; green, 255; blue, 255 }  ,fill opacity=1 ] (606.92,171.92) .. controls (606.92,167.08) and (610.83,163.17) .. (615.67,163.17) .. controls (620.5,163.17) and (624.42,167.08) .. (624.42,171.92) .. controls (624.42,176.75) and (620.5,180.67) .. (615.67,180.67) .. controls (610.83,180.67) and (606.92,176.75) .. (606.92,171.92) -- cycle ;
\draw  [fill={rgb, 255:red, 255; green, 255; blue, 255 }  ,fill opacity=1 ] (630.86,237.93) .. controls (630.86,233.1) and (634.78,229.18) .. (639.61,229.18) .. controls (644.44,229.18) and (648.36,233.1) .. (648.36,237.93) .. controls (648.36,242.76) and (644.44,246.68) .. (639.61,246.68) .. controls (634.78,246.68) and (630.86,242.76) .. (630.86,237.93) -- cycle ;
\draw  [fill={rgb, 255:red, 255; green, 255; blue, 255 }  ,fill opacity=1 ] (667.54,207.04) .. controls (667.54,202.21) and (671.46,198.29) .. (676.29,198.29) .. controls (681.12,198.29) and (685.04,202.21) .. (685.04,207.04) .. controls (685.04,211.87) and (681.12,215.79) .. (676.29,215.79) .. controls (671.46,215.79) and (667.54,211.87) .. (667.54,207.04) -- cycle ;
\draw  [fill={rgb, 255:red, 255; green, 255; blue, 255 }  ,fill opacity=1 ] (675.85,159.72) .. controls (675.85,154.89) and (679.77,150.97) .. (684.6,150.97) .. controls (689.44,150.97) and (693.35,154.89) .. (693.35,159.72) .. controls (693.35,164.55) and (689.44,168.47) .. (684.6,168.47) .. controls (679.77,168.47) and (675.85,164.55) .. (675.85,159.72) -- cycle ;
\draw  [fill={rgb, 255:red, 255; green, 255; blue, 255 }  ,fill opacity=1 ] (651.91,118.1) .. controls (651.91,113.27) and (655.83,109.35) .. (660.66,109.35) .. controls (665.49,109.35) and (669.41,113.27) .. (669.41,118.1) .. controls (669.41,122.93) and (665.49,126.85) .. (660.66,126.85) .. controls (655.83,126.85) and (651.91,122.93) .. (651.91,118.1) -- cycle ;
\draw  [fill={rgb, 255:red, 255; green, 255; blue, 255 }  ,fill opacity=1 ] (606.92,101.67) .. controls (606.92,96.83) and (610.83,92.92) .. (615.67,92.92) .. controls (620.5,92.92) and (624.42,96.83) .. (624.42,101.67) .. controls (624.42,106.5) and (620.5,110.42) .. (615.67,110.42) .. controls (610.83,110.42) and (606.92,106.5) .. (606.92,101.67) -- cycle ;
\draw  [fill={rgb, 255:red, 255; green, 255; blue, 255 }  ,fill opacity=1 ] (561.92,118.1) .. controls (561.92,113.27) and (565.84,109.35) .. (570.67,109.35) .. controls (575.5,109.35) and (579.42,113.27) .. (579.42,118.1) .. controls (579.42,122.93) and (575.5,126.85) .. (570.67,126.85) .. controls (565.84,126.85) and (561.92,122.93) .. (561.92,118.1) -- cycle ;
\draw  [fill={rgb, 255:red, 255; green, 255; blue, 255 }  ,fill opacity=1 ] (546.29,207.04) .. controls (546.29,202.21) and (550.21,198.29) .. (555.04,198.29) .. controls (559.88,198.29) and (563.79,202.21) .. (563.79,207.04) .. controls (563.79,211.87) and (559.88,215.79) .. (555.04,215.79) .. controls (550.21,215.79) and (546.29,211.87) .. (546.29,207.04) -- cycle ;
\draw  [fill={rgb, 255:red, 255; green, 255; blue, 255 }  ,fill opacity=1 ] (582.98,237.93) .. controls (582.98,233.1) and (586.89,229.18) .. (591.73,229.18) .. controls (596.56,229.18) and (600.48,233.1) .. (600.48,237.93) .. controls (600.48,242.76) and (596.56,246.68) .. (591.73,246.68) .. controls (586.89,246.68) and (582.98,242.76) .. (582.98,237.93) -- cycle ;
\draw    (555.04,207.04) -- (591.73,237.93) ;
\draw    (615.67,101.67) -- (570.67,118.1) ;
\draw    (684.6,159.72) -- (676.29,207.04) ;
\draw    (639.61,237.93) -- (591.73,237.93) ;
\draw    (660.66,118.1) -- (684.6,159.72) ;
\draw    (676.29,207.04) -- (639.61,237.93) ;
\draw    (615.67,101.67) -- (615.67,171.92) ;
\draw    (570.67,118.1) -- (562.77,163.05) -- (555.04,207.04) ;
\draw    (660.66,118.1) -- (615.67,101.67) ;
\draw [color={rgb, 255:red, 0; green, 0; blue, 0 }  ,draw opacity=0.31 ]   (615.67,101.67) -- (555.04,207.04) ;
\draw [color={rgb, 255:red, 0; green, 0; blue, 0 }  ,draw opacity=0.31 ]   (660.66,118.1) -- (555.04,207.04) ;
\draw [color={rgb, 255:red, 0; green, 0; blue, 0 }  ,draw opacity=0.31 ]   (684.6,159.72) -- (555.04,207.04) ;
\draw [color={rgb, 255:red, 0; green, 0; blue, 0 }  ,draw opacity=0.31 ]   (639.61,237.93) -- (555.04,207.04) ;
\draw [color={rgb, 255:red, 0; green, 0; blue, 0 }  ,draw opacity=0.31 ]   (684.6,159.72) -- (615.67,101.67) ;
\draw [color={rgb, 255:red, 0; green, 0; blue, 0 }  ,draw opacity=0.31 ]   (676.29,207.04) -- (615.67,101.67) ;
\draw [color={rgb, 255:red, 0; green, 0; blue, 0 }  ,draw opacity=0.31 ]   (639.61,237.93) -- (615.67,101.67) ;
\draw [color={rgb, 255:red, 0; green, 0; blue, 0 }  ,draw opacity=0.31 ]   (591.73,237.93) -- (615.67,101.67) ;
\draw [color={rgb, 255:red, 0; green, 0; blue, 0 }  ,draw opacity=0.31 ]   (676.29,207.04) -- (555.04,207.04) ;
\draw [color={rgb, 255:red, 0; green, 0; blue, 0 }  ,draw opacity=0.31 ]   (639.61,237.93) -- (570.67,118.1) ;
\draw [color={rgb, 255:red, 0; green, 0; blue, 0 }  ,draw opacity=0.31 ]   (676.29,207.04) -- (570.67,118.1) ;
\draw [color={rgb, 255:red, 0; green, 0; blue, 0 }  ,draw opacity=0.31 ]   (684.6,159.72) -- (570.67,118.1) ;
\draw [color={rgb, 255:red, 0; green, 0; blue, 0 }  ,draw opacity=0.31 ]   (660.66,118.1) -- (570.67,118.1) ;
\draw [color={rgb, 255:red, 0; green, 0; blue, 0 }  ,draw opacity=0.31 ]   (591.73,237.93) -- (570.67,118.1) ;
\draw [color={rgb, 255:red, 0; green, 0; blue, 0 }  ,draw opacity=0.31 ]   (660.66,118.1) -- (591.73,237.93) ;
\draw [color={rgb, 255:red, 0; green, 0; blue, 0 }  ,draw opacity=0.31 ]   (684.6,159.72) -- (591.73,237.93) ;
\draw [color={rgb, 255:red, 0; green, 0; blue, 0 }  ,draw opacity=0.31 ]   (676.29,207.04) -- (591.73,237.93) ;
\draw [color={rgb, 255:red, 0; green, 0; blue, 0 }  ,draw opacity=0.31 ]   (639.61,237.93) -- (660.66,118.1) ;
\draw [color={rgb, 255:red, 0; green, 0; blue, 0 }  ,draw opacity=0.31 ]   (639.61,237.93) -- (684.6,159.72) ;
\draw  [fill={rgb, 255:red, 255; green, 255; blue, 255 }  ,fill opacity=1 ] (561.92,118.1) .. controls (561.92,113.27) and (565.84,109.35) .. (570.67,109.35) .. controls (575.5,109.35) and (579.42,113.27) .. (579.42,118.1) .. controls (579.42,122.93) and (575.5,126.85) .. (570.67,126.85) .. controls (565.84,126.85) and (561.92,122.93) .. (561.92,118.1) -- cycle ;
\draw  [fill={rgb, 255:red, 255; green, 255; blue, 255 }  ,fill opacity=1 ] (546.29,207.04) .. controls (546.29,202.21) and (550.21,198.29) .. (555.04,198.29) .. controls (559.88,198.29) and (563.79,202.21) .. (563.79,207.04) .. controls (563.79,211.87) and (559.88,215.79) .. (555.04,215.79) .. controls (550.21,215.79) and (546.29,211.87) .. (546.29,207.04) -- cycle ;
\draw  [fill={rgb, 255:red, 255; green, 255; blue, 255 }  ,fill opacity=1 ] (606.92,101.67) .. controls (606.92,96.83) and (610.83,92.92) .. (615.67,92.92) .. controls (620.5,92.92) and (624.42,96.83) .. (624.42,101.67) .. controls (624.42,106.5) and (620.5,110.42) .. (615.67,110.42) .. controls (610.83,110.42) and (606.92,106.5) .. (606.92,101.67) -- cycle ;
\draw  [fill={rgb, 255:red, 255; green, 255; blue, 255 }  ,fill opacity=1 ] (651.91,118.1) .. controls (651.91,113.27) and (655.83,109.35) .. (660.66,109.35) .. controls (665.49,109.35) and (669.41,113.27) .. (669.41,118.1) .. controls (669.41,122.93) and (665.49,126.85) .. (660.66,126.85) .. controls (655.83,126.85) and (651.91,122.93) .. (651.91,118.1) -- cycle ;
\draw  [fill={rgb, 255:red, 255; green, 255; blue, 255 }  ,fill opacity=1 ] (606.92,171.92) .. controls (606.92,167.08) and (610.83,163.17) .. (615.67,163.17) .. controls (620.5,163.17) and (624.42,167.08) .. (624.42,171.92) .. controls (624.42,176.75) and (620.5,180.67) .. (615.67,180.67) .. controls (610.83,180.67) and (606.92,176.75) .. (606.92,171.92) -- cycle ;
\draw  [fill={rgb, 255:red, 255; green, 255; blue, 255 }  ,fill opacity=1 ] (582.98,237.93) .. controls (582.98,233.1) and (586.89,229.18) .. (591.73,229.18) .. controls (596.56,229.18) and (600.48,233.1) .. (600.48,237.93) .. controls (600.48,242.76) and (596.56,246.68) .. (591.73,246.68) .. controls (586.89,246.68) and (582.98,242.76) .. (582.98,237.93) -- cycle ;
\draw  [fill={rgb, 255:red, 255; green, 255; blue, 255 }  ,fill opacity=1 ] (630.86,237.93) .. controls (630.86,233.1) and (634.78,229.18) .. (639.61,229.18) .. controls (644.44,229.18) and (648.36,233.1) .. (648.36,237.93) .. controls (648.36,242.76) and (644.44,246.68) .. (639.61,246.68) .. controls (634.78,246.68) and (630.86,242.76) .. (630.86,237.93) -- cycle ;
\draw  [fill={rgb, 255:red, 255; green, 255; blue, 255 }  ,fill opacity=1 ] (667.54,207.04) .. controls (667.54,202.21) and (671.46,198.29) .. (676.29,198.29) .. controls (681.12,198.29) and (685.04,202.21) .. (685.04,207.04) .. controls (685.04,211.87) and (681.12,215.79) .. (676.29,215.79) .. controls (671.46,215.79) and (667.54,211.87) .. (667.54,207.04) -- cycle ;
\draw  [fill={rgb, 255:red, 255; green, 255; blue, 255 }  ,fill opacity=1 ] (675.85,159.72) .. controls (675.85,154.89) and (679.77,150.97) .. (684.6,150.97) .. controls (689.44,150.97) and (693.35,154.89) .. (693.35,159.72) .. controls (693.35,164.55) and (689.44,168.47) .. (684.6,168.47) .. controls (679.77,168.47) and (675.85,164.55) .. (675.85,159.72) -- cycle ;

\end{tikzpicture}

}
\vspace*{0.15mm}
\caption*{(a1) Complete migration graph} \label{fig:FriendshipMG}  
  \end{subfigure}
\qquad
\begin{subfigure}[b]{0.4\linewidth}
\centering
\scalebox{0.65}{

\tikzset{every picture/.style={line width=0.75pt}} 
\begin{tikzpicture}[x=0.75pt,y=0.75pt,yscale=-1,xscale=1]

\draw   (319.25,163.3) .. controls (319.25,158.47) and (323.17,154.55) .. (328,154.55) .. controls (332.83,154.55) and (336.75,158.47) .. (336.75,163.3) .. controls (336.75,168.13) and (332.83,172.05) .. (328,172.05) .. controls (323.17,172.05) and (319.25,168.13) .. (319.25,163.3) -- cycle ;
\draw  [color={rgb, 255:red, 0; green, 0; blue, 0 }  ,draw opacity=1 ][fill={rgb, 255:red, 255; green, 255; blue, 255 }  ,fill opacity=1 ] (319.25,93.05) .. controls (319.25,88.22) and (323.17,84.3) .. (328,84.3) .. controls (332.83,84.3) and (336.75,88.22) .. (336.75,93.05) .. controls (336.75,97.88) and (332.83,101.8) .. (328,101.8) .. controls (323.17,101.8) and (319.25,97.88) .. (319.25,93.05) -- cycle ;
\draw   (364.25,109.49) .. controls (364.25,104.65) and (368.16,100.74) .. (373,100.74) .. controls (377.83,100.74) and (381.75,104.65) .. (381.75,109.49) .. controls (381.75,114.32) and (377.83,118.24) .. (373,118.24) .. controls (368.16,118.24) and (364.25,114.32) .. (364.25,109.49) -- cycle ;
\draw   (388.19,151.1) .. controls (388.19,146.27) and (392.1,142.35) .. (396.94,142.35) .. controls (401.77,142.35) and (405.69,146.27) .. (405.69,151.1) .. controls (405.69,155.93) and (401.77,159.85) .. (396.94,159.85) .. controls (392.1,159.85) and (388.19,155.93) .. (388.19,151.1) -- cycle ;
\draw   (379.87,198.43) .. controls (379.87,193.59) and (383.79,189.68) .. (388.62,189.68) .. controls (393.45,189.68) and (397.37,193.59) .. (397.37,198.43) .. controls (397.37,203.26) and (393.45,207.18) .. (388.62,207.18) .. controls (383.79,207.18) and (379.87,203.26) .. (379.87,198.43) -- cycle ;
\draw   (343.19,229.31) .. controls (343.19,224.48) and (347.11,220.56) .. (351.94,220.56) .. controls (356.77,220.56) and (360.69,224.48) .. (360.69,229.31) .. controls (360.69,234.15) and (356.77,238.06) .. (351.94,238.06) .. controls (347.11,238.06) and (343.19,234.15) .. (343.19,229.31) -- cycle ;
\draw   (295.31,229.31) .. controls (295.31,224.48) and (299.23,220.56) .. (304.06,220.56) .. controls (308.89,220.56) and (312.81,224.48) .. (312.81,229.31) .. controls (312.81,234.15) and (308.89,238.06) .. (304.06,238.06) .. controls (299.23,238.06) and (295.31,234.15) .. (295.31,229.31) -- cycle ;
\draw   (258.63,198.43) .. controls (258.63,193.59) and (262.55,189.68) .. (267.38,189.68) .. controls (272.21,189.68) and (276.13,193.59) .. (276.13,198.43) .. controls (276.13,203.26) and (272.21,207.18) .. (267.38,207.18) .. controls (262.55,207.18) and (258.63,203.26) .. (258.63,198.43) -- cycle ;
\draw   (274.25,109.49) .. controls (274.25,104.65) and (278.17,100.74) .. (283,100.74) .. controls (287.84,100.74) and (291.75,104.65) .. (291.75,109.49) .. controls (291.75,114.32) and (287.84,118.24) .. (283,118.24) .. controls (278.17,118.24) and (274.25,114.32) .. (274.25,109.49) -- cycle ;
\draw    (373,109.49) -- (328,163.3) ;
\draw    (396.94,151.1) -- (328,163.3) ;
\draw    (328,163.3) -- (388.62,198.43) ;
\draw    (328,163.3) -- (351.94,229.31) ;
\draw    (328,163.3) -- (304.06,229.31) ;
\draw    (328,163.3) -- (267.38,198.43) ;
\draw    (283,109.49) -- (328,163.3) ;
\draw  [fill={rgb, 255:red, 255; green, 255; blue, 255 }  ,fill opacity=1 ] (319.25,163.3) .. controls (319.25,158.47) and (323.17,154.55) .. (328,154.55) .. controls (332.83,154.55) and (336.75,158.47) .. (336.75,163.3) .. controls (336.75,168.13) and (332.83,172.05) .. (328,172.05) .. controls (323.17,172.05) and (319.25,168.13) .. (319.25,163.3) -- cycle ;
\draw  [fill={rgb, 255:red, 255; green, 255; blue, 255 }  ,fill opacity=1 ] (343.19,229.31) .. controls (343.19,224.48) and (347.11,220.56) .. (351.94,220.56) .. controls (356.77,220.56) and (360.69,224.48) .. (360.69,229.31) .. controls (360.69,234.15) and (356.77,238.06) .. (351.94,238.06) .. controls (347.11,238.06) and (343.19,234.15) .. (343.19,229.31) -- cycle ;
\draw  [fill={rgb, 255:red, 255; green, 255; blue, 255 }  ,fill opacity=1 ] (379.87,198.43) .. controls (379.87,193.59) and (383.79,189.68) .. (388.62,189.68) .. controls (393.45,189.68) and (397.37,193.59) .. (397.37,198.43) .. controls (397.37,203.26) and (393.45,207.18) .. (388.62,207.18) .. controls (383.79,207.18) and (379.87,203.26) .. (379.87,198.43) -- cycle ;
\draw  [fill={rgb, 255:red, 255; green, 255; blue, 255 }  ,fill opacity=1 ] (388.19,151.1) .. controls (388.19,146.27) and (392.1,142.35) .. (396.94,142.35) .. controls (401.77,142.35) and (405.69,146.27) .. (405.69,151.1) .. controls (405.69,155.93) and (401.77,159.85) .. (396.94,159.85) .. controls (392.1,159.85) and (388.19,155.93) .. (388.19,151.1) -- cycle ;
\draw  [fill={rgb, 255:red, 255; green, 255; blue, 255 }  ,fill opacity=1 ] (364.25,109.49) .. controls (364.25,104.65) and (368.16,100.74) .. (373,100.74) .. controls (377.83,100.74) and (381.75,104.65) .. (381.75,109.49) .. controls (381.75,114.32) and (377.83,118.24) .. (373,118.24) .. controls (368.16,118.24) and (364.25,114.32) .. (364.25,109.49) -- cycle ;
\draw  [fill={rgb, 255:red, 255; green, 255; blue, 255 }  ,fill opacity=1 ] (319.25,93.05) .. controls (319.25,88.22) and (323.17,84.3) .. (328,84.3) .. controls (332.83,84.3) and (336.75,88.22) .. (336.75,93.05) .. controls (336.75,97.88) and (332.83,101.8) .. (328,101.8) .. controls (323.17,101.8) and (319.25,97.88) .. (319.25,93.05) -- cycle ;
\draw  [fill={rgb, 255:red, 255; green, 255; blue, 255 }  ,fill opacity=1 ] (274.25,109.49) .. controls (274.25,104.65) and (278.17,100.74) .. (283,100.74) .. controls (287.84,100.74) and (291.75,104.65) .. (291.75,109.49) .. controls (291.75,114.32) and (287.84,118.24) .. (283,118.24) .. controls (278.17,118.24) and (274.25,114.32) .. (274.25,109.49) -- cycle ;
\draw  [fill={rgb, 255:red, 255; green, 255; blue, 255 }  ,fill opacity=1 ] (258.63,198.43) .. controls (258.63,193.59) and (262.55,189.68) .. (267.38,189.68) .. controls (272.21,189.68) and (276.13,193.59) .. (276.13,198.43) .. controls (276.13,203.26) and (272.21,207.18) .. (267.38,207.18) .. controls (262.55,207.18) and (258.63,203.26) .. (258.63,198.43) -- cycle ;
\draw  [fill={rgb, 255:red, 255; green, 255; blue, 255 }  ,fill opacity=1 ] (295.31,229.31) .. controls (295.31,224.48) and (299.23,220.56) .. (304.06,220.56) .. controls (308.89,220.56) and (312.81,224.48) .. (312.81,229.31) .. controls (312.81,234.15) and (308.89,238.06) .. (304.06,238.06) .. controls (299.23,238.06) and (295.31,234.15) .. (295.31,229.31) -- cycle ;
\draw    (328,101.8) -- (328,154.55) ;
\draw [fill={rgb, 255:red, 245; green, 166; blue, 35 }  ,fill opacity=1 ]   (283,109.49) -- (328,93.05) ;
\draw [fill={rgb, 255:red, 74; green, 144; blue, 226 }  ,fill opacity=1 ]   (328,163.3) -- (267.38,198.43) ;
\draw  [fill={rgb, 255:red, 255; green, 255; blue, 255 }  ,fill opacity=1 ] (319.25,93.05) .. controls (319.25,88.22) and (323.17,84.3) .. (328,84.3) .. controls (332.83,84.3) and (336.75,88.22) .. (336.75,93.05) .. controls (336.75,97.88) and (332.83,101.8) .. (328,101.8) .. controls (323.17,101.8) and (319.25,97.88) .. (319.25,93.05) -- cycle ;
\draw  [fill={rgb, 255:red, 255; green, 255; blue, 255 }  ,fill opacity=1 ] (274.25,109.49) .. controls (274.25,104.65) and (278.17,100.74) .. (283,100.74) .. controls (287.84,100.74) and (291.75,104.65) .. (291.75,109.49) .. controls (291.75,114.32) and (287.84,118.24) .. (283,118.24) .. controls (278.17,118.24) and (274.25,114.32) .. (274.25,109.49) -- cycle ;
\draw  [fill={rgb, 255:red, 255; green, 255; blue, 255 }  ,fill opacity=1 ] (258.63,198.43) .. controls (258.63,193.59) and (262.55,189.68) .. (267.38,189.68) .. controls (272.21,189.68) and (276.13,193.59) .. (276.13,198.43) .. controls (276.13,203.26) and (272.21,207.18) .. (267.38,207.18) .. controls (262.55,207.18) and (258.63,203.26) .. (258.63,198.43) -- cycle ;
\draw  [fill={rgb, 255:red, 255; green, 255; blue, 255 }  ,fill opacity=1 ] (319.25,163.3) .. controls (319.25,158.47) and (323.17,154.55) .. (328,154.55) .. controls (332.83,154.55) and (336.75,158.47) .. (336.75,163.3) .. controls (336.75,168.13) and (332.83,172.05) .. (328,172.05) .. controls (323.17,172.05) and (319.25,168.13) .. (319.25,163.3) -- cycle ;
\draw  [fill={rgb, 255:red, 255; green, 255; blue, 255 }  ,fill opacity=1 ] (388.19,151.1) .. controls (388.19,146.27) and (392.1,142.35) .. (396.94,142.35) .. controls (401.77,142.35) and (405.69,146.27) .. (405.69,151.1) .. controls (405.69,155.93) and (401.77,159.85) .. (396.94,159.85) .. controls (392.1,159.85) and (388.19,155.93) .. (388.19,151.1) -- cycle ;
\draw  [fill={rgb, 255:red, 255; green, 255; blue, 255 }  ,fill opacity=1 ] (379.87,198.43) .. controls (379.87,193.59) and (383.79,189.68) .. (388.62,189.68) .. controls (393.45,189.68) and (397.37,193.59) .. (397.37,198.43) .. controls (397.37,203.26) and (393.45,207.18) .. (388.62,207.18) .. controls (383.79,207.18) and (379.87,203.26) .. (379.87,198.43) -- cycle ;
\draw    (267.38,198.43) -- (304.06,229.31) ;
\draw    (388.62,198.43) -- (351.94,229.31) ;
\draw    (373,109.49) -- (396.94,151.1) ;
\draw  [fill={rgb, 255:red, 255; green, 255; blue, 255 }  ,fill opacity=1 ] (364.25,109.49) .. controls (364.25,104.65) and (368.16,100.74) .. (373,100.74) .. controls (377.83,100.74) and (381.75,104.65) .. (381.75,109.49) .. controls (381.75,114.32) and (377.83,118.24) .. (373,118.24) .. controls (368.16,118.24) and (364.25,114.32) .. (364.25,109.49) -- cycle ;
\draw  [fill={rgb, 255:red, 255; green, 255; blue, 255 }  ,fill opacity=1 ] (388.19,151.1) .. controls (388.19,146.27) and (392.1,142.35) .. (396.94,142.35) .. controls (401.77,142.35) and (405.69,146.27) .. (405.69,151.1) .. controls (405.69,155.93) and (401.77,159.85) .. (396.94,159.85) .. controls (392.1,159.85) and (388.19,155.93) .. (388.19,151.1) -- cycle ;
\draw  [fill={rgb, 255:red, 255; green, 255; blue, 255 }  ,fill opacity=1 ] (379.87,198.43) .. controls (379.87,193.59) and (383.79,189.68) .. (388.62,189.68) .. controls (393.45,189.68) and (397.37,193.59) .. (397.37,198.43) .. controls (397.37,203.26) and (393.45,207.18) .. (388.62,207.18) .. controls (383.79,207.18) and (379.87,203.26) .. (379.87,198.43) -- cycle ;
\draw  [fill={rgb, 255:red, 255; green, 255; blue, 255 }  ,fill opacity=1 ] (343.19,229.31) .. controls (343.19,224.48) and (347.11,220.56) .. (351.94,220.56) .. controls (356.77,220.56) and (360.69,224.48) .. (360.69,229.31) .. controls (360.69,234.15) and (356.77,238.06) .. (351.94,238.06) .. controls (347.11,238.06) and (343.19,234.15) .. (343.19,229.31) -- cycle ;
\draw  [fill={rgb, 255:red, 255; green, 255; blue, 255 }  ,fill opacity=1 ] (295.31,229.31) .. controls (295.31,224.48) and (299.23,220.56) .. (304.06,220.56) .. controls (308.89,220.56) and (312.81,224.48) .. (312.81,229.31) .. controls (312.81,234.15) and (308.89,238.06) .. (304.06,238.06) .. controls (299.23,238.06) and (295.31,234.15) .. (295.31,229.31) -- cycle ;
\draw  [fill={rgb, 255:red, 255; green, 255; blue, 255 }  ,fill opacity=1 ] (258.63,198.43) .. controls (258.63,193.59) and (262.55,189.68) .. (267.38,189.68) .. controls (272.21,189.68) and (276.13,193.59) .. (276.13,198.43) .. controls (276.13,203.26) and (272.21,207.18) .. (267.38,207.18) .. controls (262.55,207.18) and (258.63,203.26) .. (258.63,198.43) -- cycle ;

\end{tikzpicture} }
\vspace*{0.15mm}
\caption*{(a2) Friendship interbreeding graph} \label{fig:FriendshipIG}  
  \end{subfigure}
\qquad
\begin{subfigure}[b]{0.4\linewidth}
\centering
\tikzset{every picture/.style={line width=0.75pt}} 
\scalebox{0.65}{

\tikzset{every picture/.style={line width=0.75pt}} 

\begin{tikzpicture}[x=0.75pt,y=0.75pt,yscale=-1,xscale=1]

\draw  [color={rgb, 255:red, 0; green, 0; blue, 0 }  ,draw opacity=1 ][fill={rgb, 255:red, 255; green, 255; blue, 255 }  ,fill opacity=1 ] (166.75,366.5) .. controls (166.75,361.67) and (170.67,357.75) .. (175.5,357.75) .. controls (180.33,357.75) and (184.25,361.67) .. (184.25,366.5) .. controls (184.25,371.33) and (180.33,375.25) .. (175.5,375.25) .. controls (170.67,375.25) and (166.75,371.33) .. (166.75,366.5) -- cycle ;
\draw   (211.75,382.94) .. controls (211.75,378.1) and (215.66,374.19) .. (220.5,374.19) .. controls (225.33,374.19) and (229.25,378.1) .. (229.25,382.94) .. controls (229.25,387.77) and (225.33,391.69) .. (220.5,391.69) .. controls (215.66,391.69) and (211.75,387.77) .. (211.75,382.94) -- cycle ;
\draw   (235.69,424.55) .. controls (235.69,419.72) and (239.6,415.8) .. (244.44,415.8) .. controls (249.27,415.8) and (253.19,419.72) .. (253.19,424.55) .. controls (253.19,429.38) and (249.27,433.3) .. (244.44,433.3) .. controls (239.6,433.3) and (235.69,429.38) .. (235.69,424.55) -- cycle ;
\draw   (227.37,471.88) .. controls (227.37,467.04) and (231.29,463.13) .. (236.12,463.13) .. controls (240.95,463.13) and (244.87,467.04) .. (244.87,471.88) .. controls (244.87,476.71) and (240.95,480.63) .. (236.12,480.63) .. controls (231.29,480.63) and (227.37,476.71) .. (227.37,471.88) -- cycle ;
\draw   (190.69,502.76) .. controls (190.69,497.93) and (194.61,494.01) .. (199.44,494.01) .. controls (204.27,494.01) and (208.19,497.93) .. (208.19,502.76) .. controls (208.19,507.6) and (204.27,511.51) .. (199.44,511.51) .. controls (194.61,511.51) and (190.69,507.6) .. (190.69,502.76) -- cycle ;
\draw   (142.81,502.76) .. controls (142.81,497.93) and (146.73,494.01) .. (151.56,494.01) .. controls (156.39,494.01) and (160.31,497.93) .. (160.31,502.76) .. controls (160.31,507.6) and (156.39,511.51) .. (151.56,511.51) .. controls (146.73,511.51) and (142.81,507.6) .. (142.81,502.76) -- cycle ;
\draw   (106.13,471.88) .. controls (106.13,467.04) and (110.05,463.13) .. (114.88,463.13) .. controls (119.71,463.13) and (123.63,467.04) .. (123.63,471.88) .. controls (123.63,476.71) and (119.71,480.63) .. (114.88,480.63) .. controls (110.05,480.63) and (106.13,476.71) .. (106.13,471.88) -- cycle ;
\draw   (121.75,382.94) .. controls (121.75,378.1) and (125.67,374.19) .. (130.5,374.19) .. controls (135.34,374.19) and (139.25,378.1) .. (139.25,382.94) .. controls (139.25,387.77) and (135.34,391.69) .. (130.5,391.69) .. controls (125.67,391.69) and (121.75,387.77) .. (121.75,382.94) -- cycle ;
\draw [line width=3.5]    (220.5,382.94) -- (244.44,424.55) ;
\draw [line width=0.75]    (244.44,424.55) -- (236.12,471.88) ;
\draw [line width=3.5]    (199.44,502.76) -- (236.12,471.88) ;
\draw [line width=3.5]    (151.56,502.76) -- (199.44,502.76) ;
\draw [line width=3.5]    (114.88,471.88) -- (151.56,502.76) ;
\draw [line width=3.5]    (103.44,426.55) -- (114.88,471.88) ;
\draw [line width=3.5]    (130.5,382.94) -- (175.5,366.5) ;
\draw  [fill={rgb, 255:red, 255; green, 255; blue, 255 }  ,fill opacity=1 ] (190.69,502.76) .. controls (190.69,497.93) and (194.61,494.01) .. (199.44,494.01) .. controls (204.27,494.01) and (208.19,497.93) .. (208.19,502.76) .. controls (208.19,507.6) and (204.27,511.51) .. (199.44,511.51) .. controls (194.61,511.51) and (190.69,507.6) .. (190.69,502.76) -- cycle ;
\draw  [fill={rgb, 255:red, 255; green, 255; blue, 255 }  ,fill opacity=1 ] (227.37,471.88) .. controls (227.37,467.04) and (231.29,463.13) .. (236.12,463.13) .. controls (240.95,463.13) and (244.87,467.04) .. (244.87,471.88) .. controls (244.87,476.71) and (240.95,480.63) .. (236.12,480.63) .. controls (231.29,480.63) and (227.37,476.71) .. (227.37,471.88) -- cycle ;
\draw  [fill={rgb, 255:red, 255; green, 255; blue, 255 }  ,fill opacity=1 ] (94.69,426.55) .. controls (94.69,421.72) and (98.6,417.8) .. (103.44,417.8) .. controls (108.27,417.8) and (112.19,421.72) .. (112.19,426.55) .. controls (112.19,431.38) and (108.27,435.3) .. (103.44,435.3) .. controls (98.6,435.3) and (94.69,431.38) .. (94.69,426.55) -- cycle ;
\draw  [fill={rgb, 255:red, 255; green, 255; blue, 255 }  ,fill opacity=1 ] (211.75,382.94) .. controls (211.75,378.1) and (215.66,374.19) .. (220.5,374.19) .. controls (225.33,374.19) and (229.25,378.1) .. (229.25,382.94) .. controls (229.25,387.77) and (225.33,391.69) .. (220.5,391.69) .. controls (215.66,391.69) and (211.75,387.77) .. (211.75,382.94) -- cycle ;
\draw  [fill={rgb, 255:red, 255; green, 255; blue, 255 }  ,fill opacity=1 ] (166.75,366.5) .. controls (166.75,361.67) and (170.67,357.75) .. (175.5,357.75) .. controls (180.33,357.75) and (184.25,361.67) .. (184.25,366.5) .. controls (184.25,371.33) and (180.33,375.25) .. (175.5,375.25) .. controls (170.67,375.25) and (166.75,371.33) .. (166.75,366.5) -- cycle ;
\draw  [fill={rgb, 255:red, 255; green, 255; blue, 255 }  ,fill opacity=1 ] (121.75,382.94) .. controls (121.75,378.1) and (125.67,374.19) .. (130.5,374.19) .. controls (135.34,374.19) and (139.25,378.1) .. (139.25,382.94) .. controls (139.25,387.77) and (135.34,391.69) .. (130.5,391.69) .. controls (125.67,391.69) and (121.75,387.77) .. (121.75,382.94) -- cycle ;
\draw  [fill={rgb, 255:red, 255; green, 255; blue, 255 }  ,fill opacity=1 ] (106.13,471.88) .. controls (106.13,467.04) and (110.05,463.13) .. (114.88,463.13) .. controls (119.71,463.13) and (123.63,467.04) .. (123.63,471.88) .. controls (123.63,476.71) and (119.71,480.63) .. (114.88,480.63) .. controls (110.05,480.63) and (106.13,476.71) .. (106.13,471.88) -- cycle ;
\draw  [fill={rgb, 255:red, 255; green, 255; blue, 255 }  ,fill opacity=1 ] (142.81,502.76) .. controls (142.81,497.93) and (146.73,494.01) .. (151.56,494.01) .. controls (156.39,494.01) and (160.31,497.93) .. (160.31,502.76) .. controls (160.31,507.6) and (156.39,511.51) .. (151.56,511.51) .. controls (146.73,511.51) and (142.81,507.6) .. (142.81,502.76) -- cycle ;
\draw [line width=3.5]    (175.5,366.5) -- (220.5,382.94) ;
\draw [line width=3.5]    (130.5,382.94) -- (120.57,398.95) -- (103.44,426.55) ;
\draw  [fill={rgb, 255:red, 255; green, 255; blue, 255 }  ,fill opacity=1 ] (94.69,426.55) .. controls (94.69,421.72) and (98.6,417.8) .. (103.44,417.8) .. controls (108.27,417.8) and (112.19,421.72) .. (112.19,426.55) .. controls (112.19,431.38) and (108.27,435.3) .. (103.44,435.3) .. controls (98.6,435.3) and (94.69,431.38) .. (94.69,426.55) -- cycle ;
\draw  [fill={rgb, 255:red, 255; green, 255; blue, 255 }  ,fill opacity=1 ] (121.75,382.94) .. controls (121.75,378.1) and (125.67,374.19) .. (130.5,374.19) .. controls (135.34,374.19) and (139.25,378.1) .. (139.25,382.94) .. controls (139.25,387.77) and (135.34,391.69) .. (130.5,391.69) .. controls (125.67,391.69) and (121.75,387.77) .. (121.75,382.94) -- cycle ;
\draw  [fill={rgb, 255:red, 255; green, 255; blue, 255 }  ,fill opacity=1 ] (166.75,366.5) .. controls (166.75,361.67) and (170.67,357.75) .. (175.5,357.75) .. controls (180.33,357.75) and (184.25,361.67) .. (184.25,366.5) .. controls (184.25,371.33) and (180.33,375.25) .. (175.5,375.25) .. controls (170.67,375.25) and (166.75,371.33) .. (166.75,366.5) -- cycle ;
\draw  [fill={rgb, 255:red, 255; green, 255; blue, 255 }  ,fill opacity=1 ] (211.75,382.94) .. controls (211.75,378.1) and (215.66,374.19) .. (220.5,374.19) .. controls (225.33,374.19) and (229.25,378.1) .. (229.25,382.94) .. controls (229.25,387.77) and (225.33,391.69) .. (220.5,391.69) .. controls (215.66,391.69) and (211.75,387.77) .. (211.75,382.94) -- cycle ;
\draw  [fill={rgb, 255:red, 255; green, 255; blue, 255 }  ,fill opacity=1 ] (235.69,424.55) .. controls (235.69,419.72) and (239.6,415.8) .. (244.44,415.8) .. controls (249.27,415.8) and (253.19,419.72) .. (253.19,424.55) .. controls (253.19,429.38) and (249.27,433.3) .. (244.44,433.3) .. controls (239.6,433.3) and (235.69,429.38) .. (235.69,424.55) -- cycle ;

\end{tikzpicture}
}
\vspace*{0.15mm}
\caption*{(b1) Ring species migration graph} \label{fig:RingSpeciesMG}  
  \end{subfigure}
  \qquad
\begin{subfigure}[b]{0.4\linewidth}
\centering
\tikzset{every picture/.style={line width=0.75pt}} 
\scalebox{0.65}{

\tikzset{every picture/.style={line width=0.75pt}} 

\begin{tikzpicture}[x=0.75pt,y=0.75pt,yscale=-1,xscale=1]

\draw  [color={rgb, 255:red, 0; green, 0; blue, 0 }  ,draw opacity=1 ][fill={rgb, 255:red, 255; green, 255; blue, 255 }  ,fill opacity=1 ] (427.75,370.5) .. controls (427.75,365.67) and (431.67,361.75) .. (436.5,361.75) .. controls (441.33,361.75) and (445.25,365.67) .. (445.25,370.5) .. controls (445.25,375.33) and (441.33,379.25) .. (436.5,379.25) .. controls (431.67,379.25) and (427.75,375.33) .. (427.75,370.5) -- cycle ;
\draw   (472.75,386.94) .. controls (472.75,382.1) and (476.66,378.19) .. (481.5,378.19) .. controls (486.33,378.19) and (490.25,382.1) .. (490.25,386.94) .. controls (490.25,391.77) and (486.33,395.69) .. (481.5,395.69) .. controls (476.66,395.69) and (472.75,391.77) .. (472.75,386.94) -- cycle ;
\draw   (496.69,428.55) .. controls (496.69,423.72) and (500.6,419.8) .. (505.44,419.8) .. controls (510.27,419.8) and (514.19,423.72) .. (514.19,428.55) .. controls (514.19,433.38) and (510.27,437.3) .. (505.44,437.3) .. controls (500.6,437.3) and (496.69,433.38) .. (496.69,428.55) -- cycle ;
\draw   (488.37,475.88) .. controls (488.37,471.04) and (492.29,467.13) .. (497.12,467.13) .. controls (501.95,467.13) and (505.87,471.04) .. (505.87,475.88) .. controls (505.87,480.71) and (501.95,484.63) .. (497.12,484.63) .. controls (492.29,484.63) and (488.37,480.71) .. (488.37,475.88) -- cycle ;
\draw   (451.69,506.76) .. controls (451.69,501.93) and (455.61,498.01) .. (460.44,498.01) .. controls (465.27,498.01) and (469.19,501.93) .. (469.19,506.76) .. controls (469.19,511.6) and (465.27,515.51) .. (460.44,515.51) .. controls (455.61,515.51) and (451.69,511.6) .. (451.69,506.76) -- cycle ;
\draw   (403.81,506.76) .. controls (403.81,501.93) and (407.73,498.01) .. (412.56,498.01) .. controls (417.39,498.01) and (421.31,501.93) .. (421.31,506.76) .. controls (421.31,511.6) and (417.39,515.51) .. (412.56,515.51) .. controls (407.73,515.51) and (403.81,511.6) .. (403.81,506.76) -- cycle ;
\draw   (367.13,475.88) .. controls (367.13,471.04) and (371.05,467.13) .. (375.88,467.13) .. controls (380.71,467.13) and (384.63,471.04) .. (384.63,475.88) .. controls (384.63,480.71) and (380.71,484.63) .. (375.88,484.63) .. controls (371.05,484.63) and (367.13,480.71) .. (367.13,475.88) -- cycle ;
\draw   (382.75,386.94) .. controls (382.75,382.1) and (386.67,378.19) .. (391.5,378.19) .. controls (396.34,378.19) and (400.25,382.1) .. (400.25,386.94) .. controls (400.25,391.77) and (396.34,395.69) .. (391.5,395.69) .. controls (386.67,395.69) and (382.75,391.77) .. (382.75,386.94) -- cycle ;
\draw    (481.5,386.94) -- (505.44,428.55) ;
\draw    (460.44,506.76) -- (497.12,475.88) ;
\draw    (412.56,506.76) -- (460.44,506.76) ;
\draw    (375.88,475.88) -- (412.56,506.76) ;
\draw    (364.44,430.55) -- (375.88,475.88) ;
\draw    (391.5,386.94) -- (436.5,370.5) ;
\draw  [fill={rgb, 255:red, 255; green, 255; blue, 255 }  ,fill opacity=1 ] (451.69,506.76) .. controls (451.69,501.93) and (455.61,498.01) .. (460.44,498.01) .. controls (465.27,498.01) and (469.19,501.93) .. (469.19,506.76) .. controls (469.19,511.6) and (465.27,515.51) .. (460.44,515.51) .. controls (455.61,515.51) and (451.69,511.6) .. (451.69,506.76) -- cycle ;
\draw  [fill={rgb, 255:red, 255; green, 255; blue, 255 }  ,fill opacity=1 ] (488.37,475.88) .. controls (488.37,471.04) and (492.29,467.13) .. (497.12,467.13) .. controls (501.95,467.13) and (505.87,471.04) .. (505.87,475.88) .. controls (505.87,480.71) and (501.95,484.63) .. (497.12,484.63) .. controls (492.29,484.63) and (488.37,480.71) .. (488.37,475.88) -- cycle ;
\draw  [fill={rgb, 255:red, 255; green, 255; blue, 255 }  ,fill opacity=1 ] (355.69,430.55) .. controls (355.69,425.72) and (359.6,421.8) .. (364.44,421.8) .. controls (369.27,421.8) and (373.19,425.72) .. (373.19,430.55) .. controls (373.19,435.38) and (369.27,439.3) .. (364.44,439.3) .. controls (359.6,439.3) and (355.69,435.38) .. (355.69,430.55) -- cycle ;
\draw  [fill={rgb, 255:red, 255; green, 255; blue, 255 }  ,fill opacity=1 ] (472.75,386.94) .. controls (472.75,382.1) and (476.66,378.19) .. (481.5,378.19) .. controls (486.33,378.19) and (490.25,382.1) .. (490.25,386.94) .. controls (490.25,391.77) and (486.33,395.69) .. (481.5,395.69) .. controls (476.66,395.69) and (472.75,391.77) .. (472.75,386.94) -- cycle ;
\draw  [fill={rgb, 255:red, 255; green, 255; blue, 255 }  ,fill opacity=1 ] (427.75,370.5) .. controls (427.75,365.67) and (431.67,361.75) .. (436.5,361.75) .. controls (441.33,361.75) and (445.25,365.67) .. (445.25,370.5) .. controls (445.25,375.33) and (441.33,379.25) .. (436.5,379.25) .. controls (431.67,379.25) and (427.75,375.33) .. (427.75,370.5) -- cycle ;
\draw  [fill={rgb, 255:red, 255; green, 255; blue, 255 }  ,fill opacity=1 ] (382.75,386.94) .. controls (382.75,382.1) and (386.67,378.19) .. (391.5,378.19) .. controls (396.34,378.19) and (400.25,382.1) .. (400.25,386.94) .. controls (400.25,391.77) and (396.34,395.69) .. (391.5,395.69) .. controls (386.67,395.69) and (382.75,391.77) .. (382.75,386.94) -- cycle ;
\draw  [fill={rgb, 255:red, 255; green, 255; blue, 255 }  ,fill opacity=1 ] (367.13,475.88) .. controls (367.13,471.04) and (371.05,467.13) .. (375.88,467.13) .. controls (380.71,467.13) and (384.63,471.04) .. (384.63,475.88) .. controls (384.63,480.71) and (380.71,484.63) .. (375.88,484.63) .. controls (371.05,484.63) and (367.13,480.71) .. (367.13,475.88) -- cycle ;
\draw  [fill={rgb, 255:red, 255; green, 255; blue, 255 }  ,fill opacity=1 ] (403.81,506.76) .. controls (403.81,501.93) and (407.73,498.01) .. (412.56,498.01) .. controls (417.39,498.01) and (421.31,501.93) .. (421.31,506.76) .. controls (421.31,511.6) and (417.39,515.51) .. (412.56,515.51) .. controls (407.73,515.51) and (403.81,511.6) .. (403.81,506.76) -- cycle ;
\draw    (436.5,370.5) -- (481.5,386.94) ;
\draw    (391.5,386.94) -- (364.44,430.55) ;
\draw  [fill={rgb, 255:red, 255; green, 255; blue, 255 }  ,fill opacity=1 ] (355.69,430.55) .. controls (355.69,425.72) and (359.6,421.8) .. (364.44,421.8) .. controls (369.27,421.8) and (373.19,425.72) .. (373.19,430.55) .. controls (373.19,435.38) and (369.27,439.3) .. (364.44,439.3) .. controls (359.6,439.3) and (355.69,435.38) .. (355.69,430.55) -- cycle ;
\draw  [fill={rgb, 255:red, 255; green, 255; blue, 255 }  ,fill opacity=1 ] (382.75,386.94) .. controls (382.75,382.1) and (386.67,378.19) .. (391.5,378.19) .. controls (396.34,378.19) and (400.25,382.1) .. (400.25,386.94) .. controls (400.25,391.77) and (396.34,395.69) .. (391.5,395.69) .. controls (386.67,395.69) and (382.75,391.77) .. (382.75,386.94) -- cycle ;
\draw  [fill={rgb, 255:red, 255; green, 255; blue, 255 }  ,fill opacity=1 ] (427.75,370.5) .. controls (427.75,365.67) and (431.67,361.75) .. (436.5,361.75) .. controls (441.33,361.75) and (445.25,365.67) .. (445.25,370.5) .. controls (445.25,375.33) and (441.33,379.25) .. (436.5,379.25) .. controls (431.67,379.25) and (427.75,375.33) .. (427.75,370.5) -- cycle ;
\draw  [fill={rgb, 255:red, 255; green, 255; blue, 255 }  ,fill opacity=1 ] (472.75,386.94) .. controls (472.75,382.1) and (476.66,378.19) .. (481.5,378.19) .. controls (486.33,378.19) and (490.25,382.1) .. (490.25,386.94) .. controls (490.25,391.77) and (486.33,395.69) .. (481.5,395.69) .. controls (476.66,395.69) and (472.75,391.77) .. (472.75,386.94) -- cycle ;
\draw  [fill={rgb, 255:red, 255; green, 255; blue, 255 }  ,fill opacity=1 ] (496.69,428.55) .. controls (496.69,423.72) and (500.6,419.8) .. (505.44,419.8) .. controls (510.27,419.8) and (514.19,423.72) .. (514.19,428.55) .. controls (514.19,433.38) and (510.27,437.3) .. (505.44,437.3) .. controls (500.6,437.3) and (496.69,433.38) .. (496.69,428.55) -- cycle ;

\end{tikzpicture}
}
\vspace*{0.15mm}
\caption*{(b2) Ring species interbreeding graph} \label{fig:RingSpeciesIG}  
  \end{subfigure}
\caption{Migration and interbreeding graphs corresponding to intransitive equilibria. Intransitive interbreeding graphs ((a2): the Friendship graph) can emerge in complete and symmetric migration (a1). A ring of populations connected through migration (b1) can give rise to the ring species interbreeding graph (b2): the terminal forms of the ring species complex are reproductively isolated, despite ongoing gene flow through the chain of intermediary populations.
} \label{fig:MigGraphs}  
\end{figure}  

\begin{figure}
    \centering
    \includegraphics[width=0.4\textwidth]{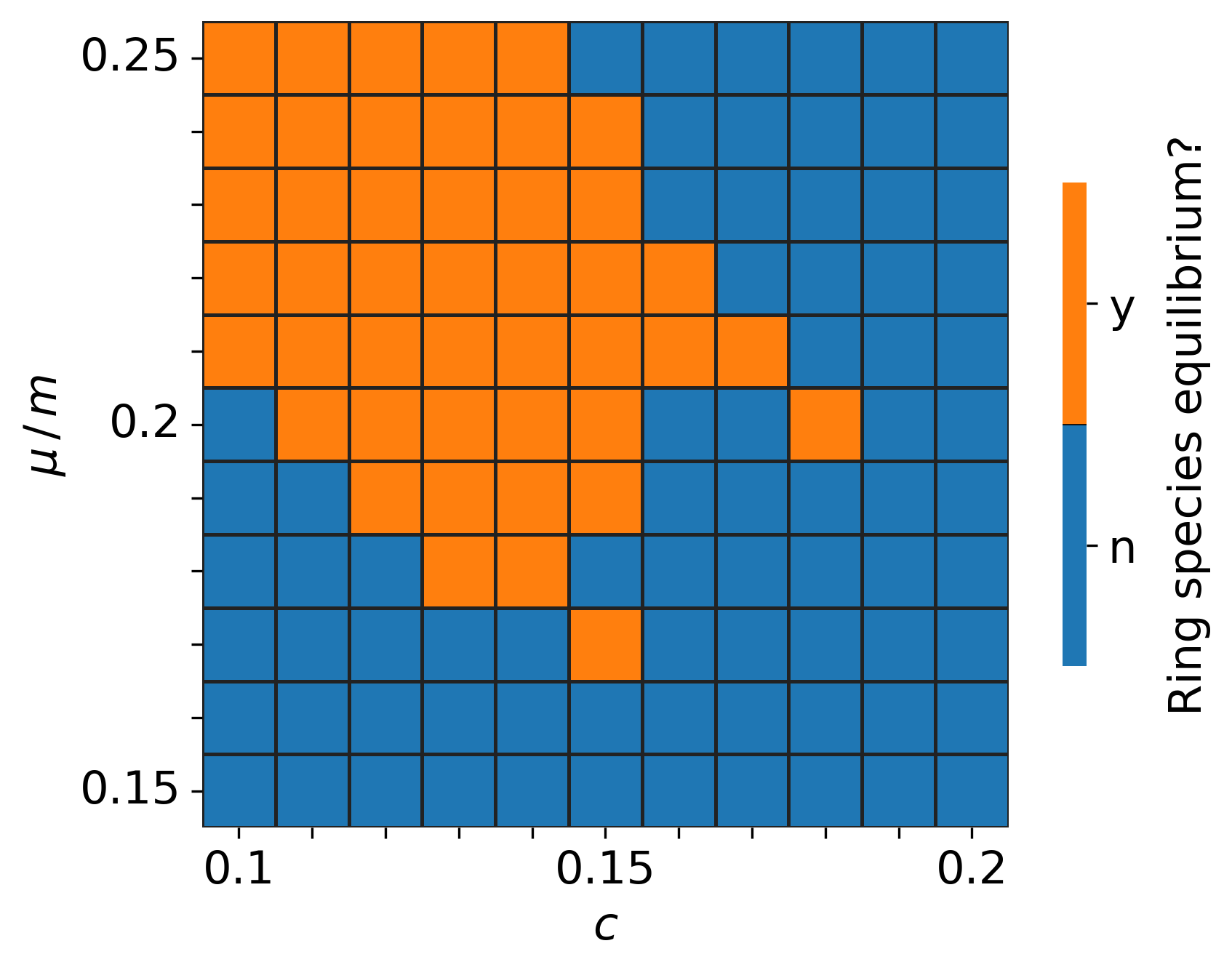}
    \setlength{\belowcaptionskip}{-10pt}
    \captionsetup{font=small}
    \caption{Existence of stable ring species equilibria depending on the mutation / migration ratio and the threshold value. We performed a systematic root search (as described in Fig. \ref{fig:asym_N}), but for ring species equilibria. Here, we set $h(x) = \mathbf{1}_{\{ x \geq c\}} \frac{x - c}{1 - c}, N = 6.$}
    \label{fig:ring_species}
\end{figure}

\noindent
\section{Subspecies clustering}
\label{sec:subsp_clust}
Partial reproductive isolation (PRI) refers to a situation where some \hypertarget{clust}{clusters} of populations, that can be viewed as ``subspecies'', retain some ability to interbreed \textit{within} them but face reproductive barriers that limit gene flow \textit{between} them. Predominantly seen as a stage in the speciation process, the evolution of PRI has recently received attention in light of the hypothesis that it could to the contrary represent in some situations a stable evolutionary endpoint (see \cite{servedio2020evolution}). This interest is sparked in part by empirical findings that suggest that ongoing hybridization between species is taxonomically widespread, begging the question: If a species is composed of population clusters of increased genetic similarity (within them)  that still exchange genes (between them) at low rates, do we generically observe speciation in progress, or can such genetic inhomogeneities within a species persist on an evolutionary time scale?

\begin{figure}
    \centering
    \includegraphics[width=0.5\textwidth]{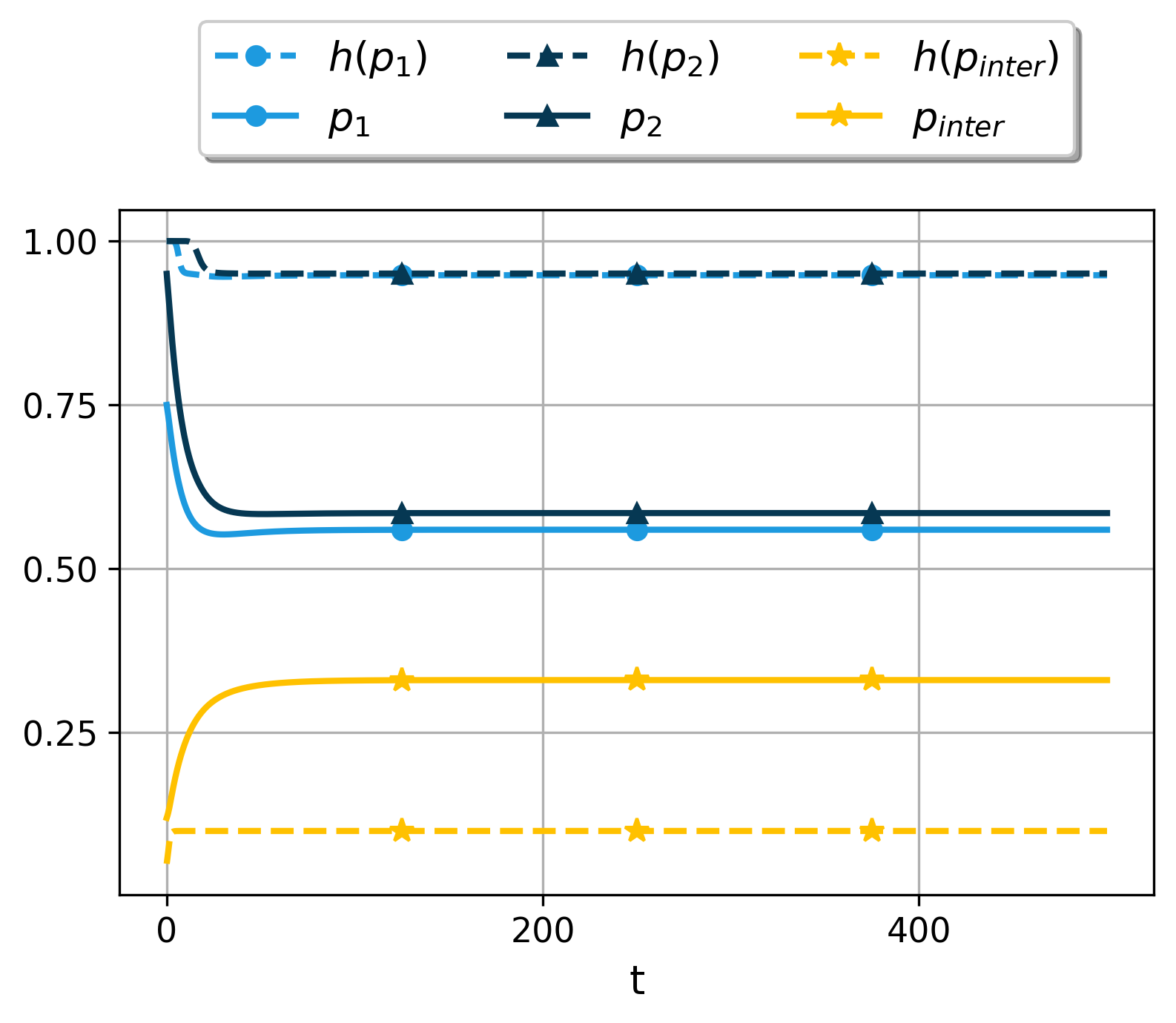}
    \setlength{\abovecaptionskip}{0pt}
    \setlength{\belowcaptionskip}{-5pt}
    \captionsetup{font=small}
    \caption{Subspecies clustering equilibrium in the  uniform case ($M_{i j} = m$ for all $i\neq j$). The plot shows genetic proximities over time. {The clusters $V_1$ and $V_2$ show the same degree of interbreeding ($p_1\approx p_2\approx 0.6$) within them, with a lower degree of interbreeding between them ($p_{inter}\approx 0.3$.} Here, we considered the feedback function $h_2$ displayed on the right of Fig. \ref{fig:asym_N}, and $\mu = 0.01, m = 0.02, |V_1| = 4, |V_2| = 6$. }
    \label{fig:asym_eq}
\end{figure}

To address this question, we consider the simplest migration setting with uniform migration ($M_{i j} = m$ for every $i,j$).
By considering (\ref{eq:ODE}), we first see that a uniform vector $P^\text{eq}$, i.e.,  such that for all $i \neq j$,
\begin{align}\label{eq:sym_eq}
    P^\text{eq}_{i j} = p^\text{eq} \, ,
\end{align}
is a stable equilibrium if and only if the following two conditions are satisfied:
\begin{align}
    h(p^\text{eq}) (1 - p^\text{eq}) = \frac{\mu}{m} p^\text{eq} \, ,\label{eq:ODE_sym}
\end{align}
giving the equilibrium property, and
\begin{align}
    h'(p^\text{eq}) (1 - p^\text{eq}) - h(p^\text{eq}) < \frac{\mu}{m} \, ,\label{eq:ODE_sym_eq}
\end{align}
giving local stability. This result holds for any $N$ and can be deduced by considering the Jacobian of $\vec F$, see Proposition \ref{prop:stabcplt} for details and its proof for a rigorous mathematical derivation. Note that equation \eqref{eq:ODE_sym} is the fixed-point problem (\ref{eq:FxPtPb_dim2}) when $N=2$ and that either condition \eqref{eq:ODE_sym} and \eqref{eq:ODE_sym_eq} is independent of $N$. A natural question is whether there exist transitive equilibria that do not satisfy the property (\ref{eq:sym_eq}), that is,
species complexes with several groups of populations exhibiting higher genetic similarity within groups than between them (partial reproductive isolation).

\begin{figure}[t]
    \centering
    \includegraphics[width=0.5\textwidth]{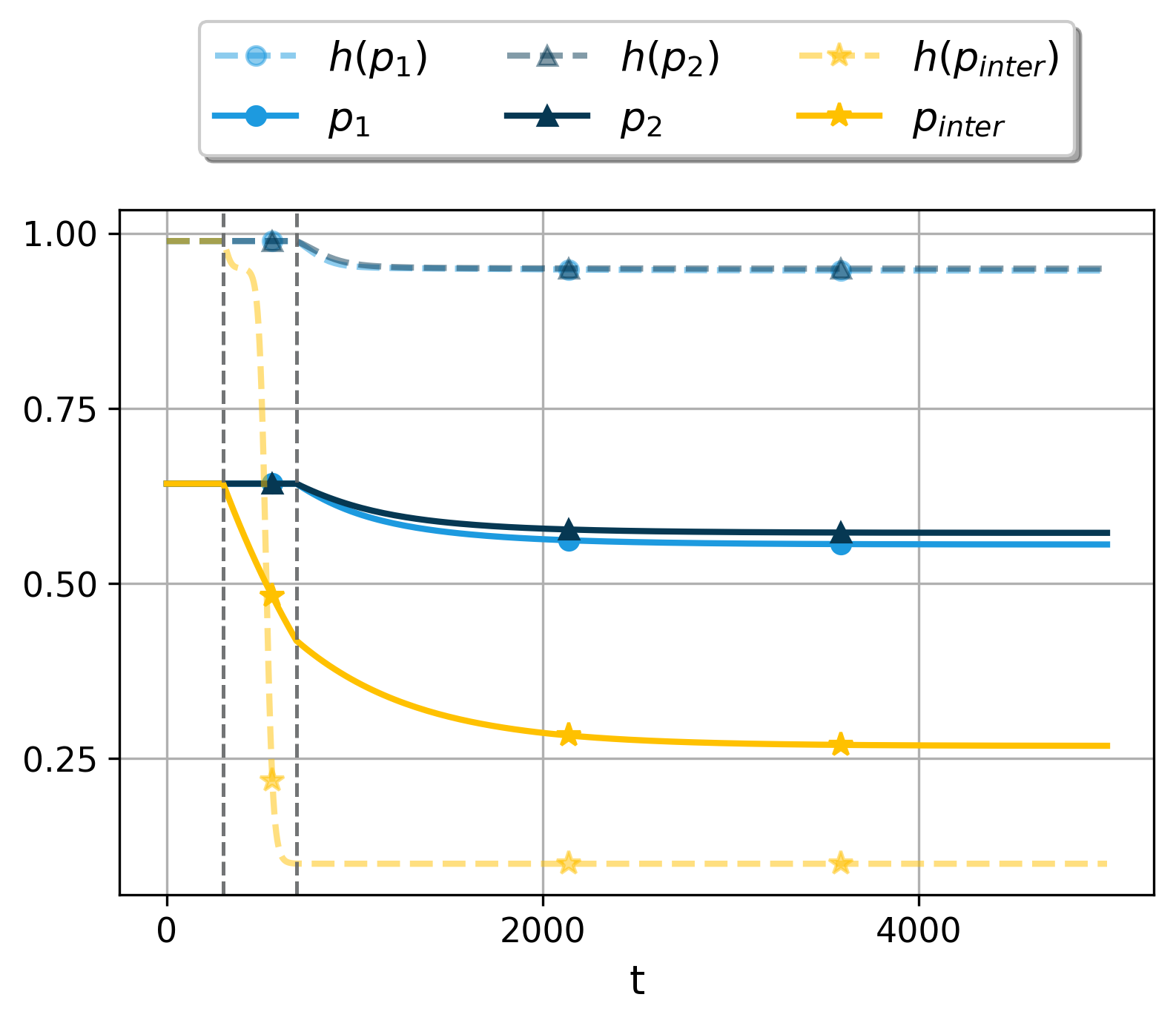}
    \setlength{\abovecaptionskip}{0pt}
    \setlength{\belowcaptionskip}{-5pt}
    \captionsetup{font=small}
    \caption{Multi-stability: Transition from symmetric to clustering equilibrium. In the time interval between the dotted vertical lines, migration rates between the nodes of $V_1$ and $V_2$ are set to zero. The plot shows genetic proximities over time. Here, we considered the feedback function $h_2$, displayed on the right of Fig. \ref{fig:asym_N}, and $\mu = 0.0055, m = 0.01, |V_1| = 3, |V_2| = 4$.}
    \label{fig:plast}
\end{figure}

\begin{figure*}[t]
    \centering
    \includegraphics[width=\textwidth]{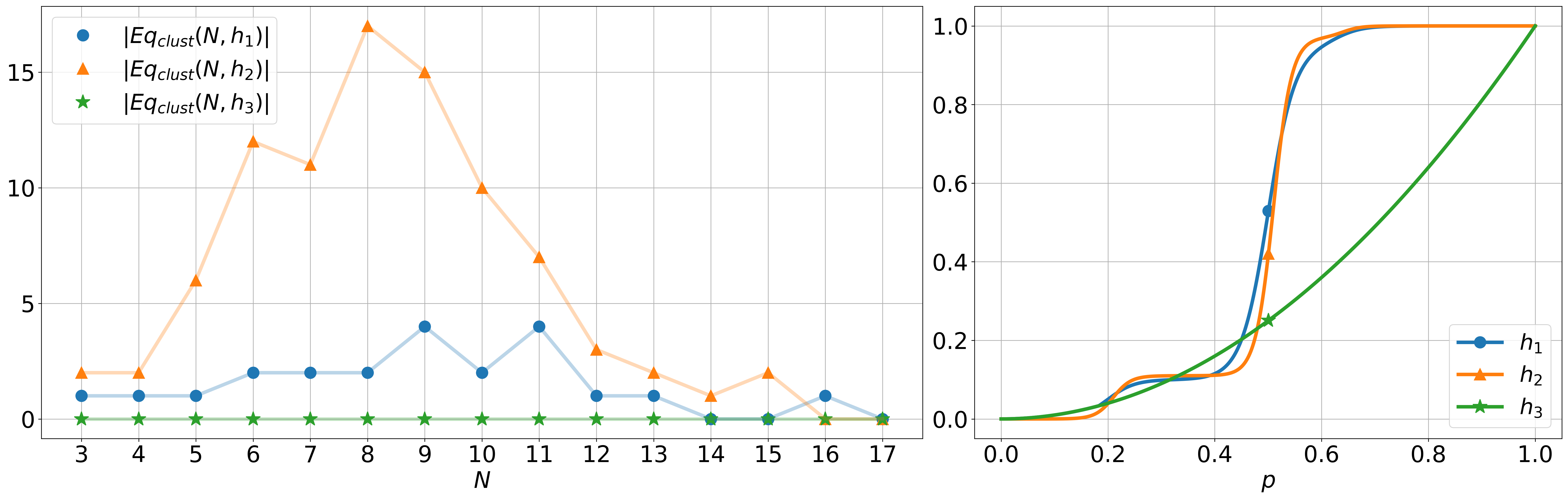}
    \captionsetup{font=small}
    \caption{Disappearing of inhomogeneous equilibria, when $N$ becomes large. On the left, we performed a systematic search for inhomogeneous roots of the function $\vec F$ from (\ref{eq:ODE}), using a L-BFGS-B optimization algorithm. Then, we tested the roots stability by simulating the ODE (\ref{eq:ODE}), using the potential asymmetric root as initial position. The number of roots displayed corresponds to the number of different stable species graphs that were found. If two equilibria correspond to the same species graph (up to a permutation of nodes), they are not counted twice. On the right, we plotted different feedback functions, given by two smoothed versions of a step-function with jumps at $(s_1, s_2, s_3) = (0.12, 0.5, 0.63)$ to the steps $(y_1, y_2, y_3) = (0.1, 0.85, 1)$, and $h_3(x) = x^2$. The functions $h_1$ and $h_2$ differ mainly in their behavior between 0 and $s_1$, with $h_1$ not having a threshold and decaying like $x^2$, and $h_2$ having a threshold. Further, we chose $\mu = 0.1, m = 0.42$.}
    \label{fig:asym_N}
\end{figure*}

In Fig. \ref{fig:asym_eq} we consider the feedback $h_2$ as in Fig. \ref{fig:asym_N}. Intuitively, this function can be thought of as representing incompatibilities that arise in stages, with each plateau being interpreted as a degree of genetic incompatibility.

We now consider a case where $\{1,\dots,N\}$ is split into two sets of vertices $V_1$ and $V_2$. We then consider equilibria $P^\text{eq}$ with three degrees of freedom, namely, the genetic proximity within $V_1$ (denoted by $p_1$), the genetic proximity within $V_2$ (denoted by $p_2$), and the genetic proximity between $V_1$ and $V_2$ (denoted by $p_{inter}$).  In Fig. \ref{fig:asym_eq}, we observe the existence of stable equilibria with $p_{1},p_2 > p_{inter}$, thus showing that partially isolated clusters can coexist within the same species. An analytical treatment of this phenomenon is given in SI, see Proposition \ref{prop:sym_break_1}. 

\medskip

In Fig \ref{fig:plast}, we show how partial reproductive isolation can emerge from temporary geographic isolation. Namely, consider the same splitting of $\{1,\dots, N\}$ into $V_1$ and $V_2$ as above, and genetic proximities at a uniform equilibrium at time $t=0$. At time $T>0$, we impose isolation in a time window of duration $t_{\text{stress}}$ so that we set $M_{i j} = 0$ if $i$ and $j$ belong to different $V_k$, for $k = 1,2$. At time $T + t_{\text{ stress}}$, we reestablish complete migration (i.e.,  $M_{i j} = m$). When carefully choosing the size of the isolation window given by $t_{\text{stress}}$, the genetic proximities converge to an asymmetric equilibrium. In fact, it suffices to choose the time window of isolation such that the genetic proximity between $V_1$ and $V_2$ falls into the basin of attraction of the asymmetric equilibrium at the end of the isolation window (see also end of Section \ref{sec:2pop}). Notice that the genetic proximity inside each group of vertices remains unchanged during the isolation window, because uniform equilibria are independent of the number of populations, see equation (\ref{eq:ODE_sym}). 

\section{Large metapopulations}
\label{chap:large_pops}
The previous two sections have demonstrated that species can exhibit complex interbreeding structures. First, we showed that when a speciation threshold is present, species graphs can be intransitive.  
Second, we identified scenarios in which populations consistently interbreed while forming ``subspecies" clusters that remain in partial isolation. 

But to what degree are such features generic vs. idiosyncratic? We will argue that while such configurations can persist in small metapopulations, 
they actually get rarer as an effect of large metapopulation sizes, where species complexes tend to become increasingly coherent, transitive, and homogeneous.

\begin{figure*}[t]
    \centering
    \includegraphics[width=\textwidth]{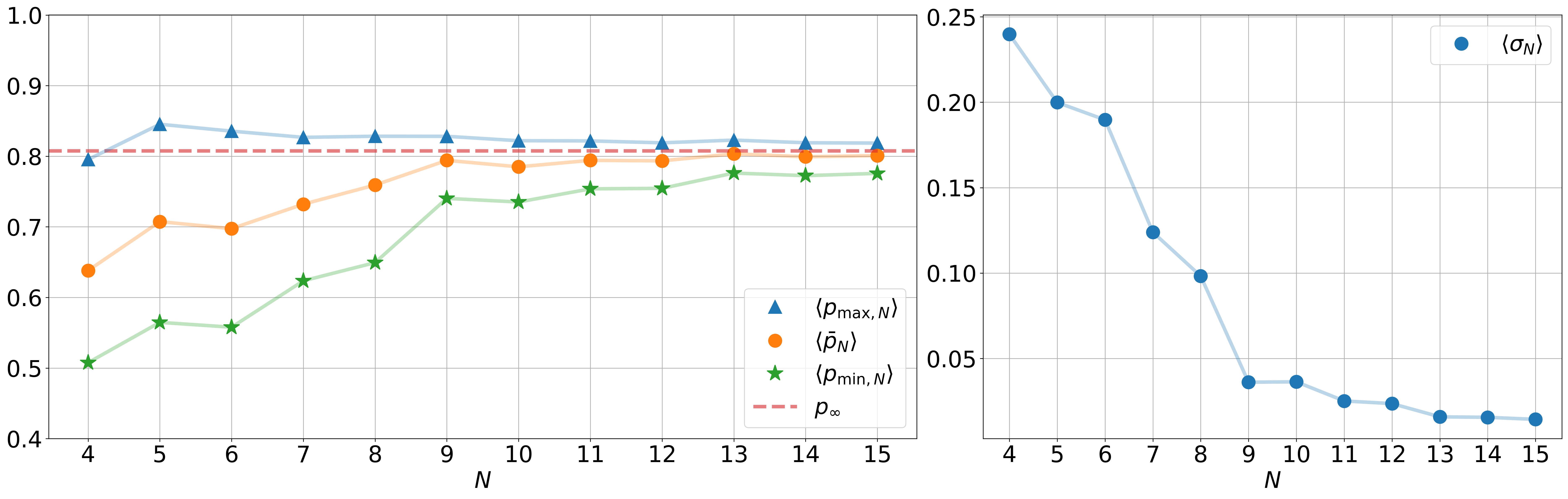}
    \captionsetup{font=small}
    \caption{Convergence to uniform equilibrium. On the left, we plotted means of different measures of the genetic proximities over 50 runs, which differed by migration rates and initial conditions to the ODE (\ref{eq:ODE}),  which were drawn independently from a rescaled beta distribution $m_\T{rate} \cdot \beta(0.5,0.5)$ distribution, with $m_\T{rate} = 0.84$. Here, $p_{\max,N}$, respectively $p_{\min,N}$, denotes the minimum, respectively maximum, of the genetic proximities at equilibrium  $((P^{\text{eq}}_{i j}))_{1\le i,j\le N}$. Further, $\bar p_{N}$ denotes the average genetic proximity and $p_{\infty}$ the genetic proximity of the uniform equilibrium associated to (\ref{eq:average}). On the right, we plotted the mean of the empirical standard deviation (normalised by the corresponding $\bar p_{N}$). The feedback function $h$ was chosen as the function $h_1$ in Fig. (\ref{fig:asym_N}). Additionally, we chose $\mu = 0.1$.}
    
    \label{fig:cvg_eq_N}
\end{figure*}

\medskip

We begin by considering the case of uniform migration. Previously, we showed that a suitable choice of the feedback function enables the existence of exotic equilibria such as intransitive interbreeding structures (friendship graphs) or species with clusters in
partial reproductive isolation (subspecies clustering). However, we show in SI that these specific inhomogeneous equilibria can only exist for small values of $N$ (see Propositions \ref{prop:sym_break_1}, \ref{prop:sym_break_2}).

In Fig. \ref{fig:asym_N}, we perform a systematic search of inhomogeneous equilibria when migration is uniform. As conjectured, we observe the existence of a critical size $N_c$, such that for $N>N_c$, the ODE system (\ref{eq:ODE}) only exhibits uniform stable equilibria, which indicates that the clustering effect previously observed can only hold for small populations (and presumably for a suitable choice of $h$).

In fact, we believe that the absence of clustering is valid not only for uniform migration, but also for a much  broader class of migration rates $(M_{ij})$. 
To test this hypothesis, we consider metapopulations of size $N$, where the migration rates $M_{i j}$, are drawn at random from the same distribution. For the sake of illustration, we assumed a $U$-shaped distribution $\beta(0.5,0.5)$
which puts more mass around the values 0 and 1, generating a strongly heterogeneous migration structure.
In Fig. \ref{fig:cvg_eq_N}, we observe that as $N$ gets large, the system equilibrates at a quasi-uniform state.

Biologically speaking, this result suggests that most large species complexes should form rather simple and coherent structures. In particular, it follows that the specific migration rate between populations $i$ and $j$ does not have a strong influence on their genetic incompatibility. Intuitively, this can be understood from the fact that the main contribution to gene flow between $i$ and $j$ occurs through long and indirect paths. In fact, even if a significant geographical constraint substantially impedes direct gene exchange between the two populations, in
a large migration network the constraint is bypassed through enough indirect paths (i.e., gene exchanges through many intermediary populations) between $i$ and $j$. In this view, the gene flow between $i$ and $j$ should only ``feel" the average migration rate 
\begin{equation}
\label{eq:average}
m = \mathbb{E}[M_{ij}]
\end{equation}
This heuristic is confirmed by Fig. \ref{fig:cvg_eq_N}, where the quasi-uniform equilibrium in a population with heterogeneous migration rates is well approximated by the uniform equilibrium of a uniform migration model with equal rates (\ref{eq:average}).

\medskip

How can we understand 
this homogenization effect in general species complexes (and not only random)?
We now argue that if we make the further assumption that the $M_{ij}h(P^{\text{eq},N}_{i j})$'s are uniformly bounded from below, then the equilibrium can only be uniform despite the potential asymmetry of the migration network.
In other words, if we restrict ourselves to the class of equilibria with a condition of minimal effective migration rates between any pair of populations, then the equilibria must be uniform.

Heuristically, this surprising result is due to the fixed point property (\ref{eq:FxPtPb}), and to the fact that random walks on a large, well-connected graph reach their invariant distribution very quickly. More precisely, assume that for all $N$ and all $i,j$  $M_{ij} h(P_{ij}^{\text{eq},N})>b$ for some constant $b$. Such well-connected graphs actually form a simple example of
a family of expander graphs (see \cite{berestycki2016mixing}, p.38ff).  Random walks on expander graphs attain their invariant distribution much faster than the time it takes two random walks to coalesce (this statement can be made rigorous by letting $N\to \infty$, see \cite{aldous2002reversible} or \cite{cooper2013coalescing}, p.4 for coalescing times, and \cite{berestycki2016mixing}, p.40). Since the invariant distribution is independent of the starting position, this suggests that by the time the two random walks coalesce, they have forgotten their initial position. Thus, the fixed point property (\ref{eq:FxPtPb}) would yield that $P^{\text{eq},N}_{i j}$ is the same for any $i ,j$, and therefore uniform. 
Furthermore, as we have seen in (\ref{eq:average}), the effect of homogenization is twofold in random networks. Not only are species complexes homogeneous at equilibrium, but an extra averaging effect on the $M_{ij}$'s allows us to deduce the genetic distances from the fixed point equation
\begin{align}
    h(p^\T{eq}) (1 - p^\T{eq}) = \frac{\mu}{\bar{m}} p^\T{eq} \, , \label{eq:sym_eq_avg}
\end{align}
where $\bar m$ is the migration rate averaged over all $M_{ij}$. 

\section{Fluctuating migration}
\label{sec:fluct}
 
In the previous section, we demonstrated that large species complexes form increasingly (with $L$) coherent and transitive entities, even with an inhomogeneous migration structure. A natural question with multistable dynamical systems concerns the crossing of basins of attraction and translates in the context of speciation studies, into the following question: Which environmental conditions are required to escape the coherent, quasi-uniform  equilibrium, and initiate a speciation event?

To investigate this question, we consider a version of our model in which migration rates change over time by resampling them at rate $\theta > 0$.

Intuitively, the homogenization effect that we uncovered for large static networks suggests that large species tend to form homogeneous structures. Thus, if resampling only impedes the migration rate between two populations, the loss in direct gene exchange is compensated by indirect migration paths (i.e., gene exchanges through intermediary populations). Thus, we expect speciation to predominantly occur when $(A)$ a single population $i$ gets isolated by chance from the rest of the complex, that is, when all the migration rates $M_{ij}$ and $M_{ji}$ are small, and $(B)$ this transitory isolation is maintained for a sufficiently long duration for the speciation process to start. For large populations, this requires the coordination of many independent events so that speciation time should sharply increase with $N$. Furthermore, speciation events should typically involve a single population detaching from the species complex and forming its own species, since larger groups of detaching populations require more migration rates satisfying $(A)$ and $(B)$ above. This indicates that upon speciation, we can identify a mother species (the large component) and a daughter species (the small component), resulting in peripatric speciation, where the large and small complexes will continue to exchange some genes during divergence. 

Is this intuition reflected in simulations? As in the previous section, we first sampled the migration rates from a (rescaled) $\beta(0.5,0.5)$ distribution. Then, we estimated the distribution of the first time to speciation $\tau$, that is the random time at which the species complex breaks into more than one connected component. The results displayed in Fig. \ref{fig:time_spec} indicate that the time to speciation increases sharply with the number of populations.

Additionally, our initial intuition about the number of detaching populations is also confirmed from simulations, where we observed that speciation events typically involved a small cluster involving one or only a few populations detaching from the species complex and forming its own species (see Fig. \ref{fig:fluct_mig_dyn} in SI for a typical realization of the dynamics of genetic proximities with fluctuating migration rates, eventually resulting in such peripatric speciation).
The expected size of this speciating cluster as a function of $N$ ranges between $1$ and $1.25$, and stabilizes at $1$ for larger $N$ ($N \geq 10$, see Fig. \ref{fig:spec_prob_nbrs} in SI).

\medskip

We investigate further the behavior of the speciation time in terms of the resampling rate $\theta$, and different migration update distributions. The simulations displayed in Fig. \ref{fig:spec_time_theta} suggest that there exists a value $\theta_{\max}(N, m) > 0$, such that the speciation probability is at its maximum. When the rate of change $\theta$ of the environment is too low, speciation times are long, since the coordination of the events leading to an isolated population is slow. In other words, the time it takes to realize migration rates satisfying condition $(A)$ above is long. Surprisingly at first glance, the speciation probability also decreases sharply when the rate of change of the environment is high, i.e., when migration rates are updated very frequently. This can be explained heuristically by noting that in order to trigger a speciation event, geographic restrictions must be upheld for some time, allowing the positive feedback loop between genetic distance and effective migration rate to kick in. If migration rates are updated too quickly, the geographical constraints required for speciation will disappear too quickly for substantial divergence to occur, thus violating condition $(B)$ above. Starting from random initial conditions, the system quickly stabilizes around the uniform quasi-equilibrium given by the solution to \eqref{eq:sym_eq_avg}, with $\bar m$ equal to the expectation of the migration update distribution.

\begin{figure}
    \centering
    \includegraphics[width=0.47\textwidth]{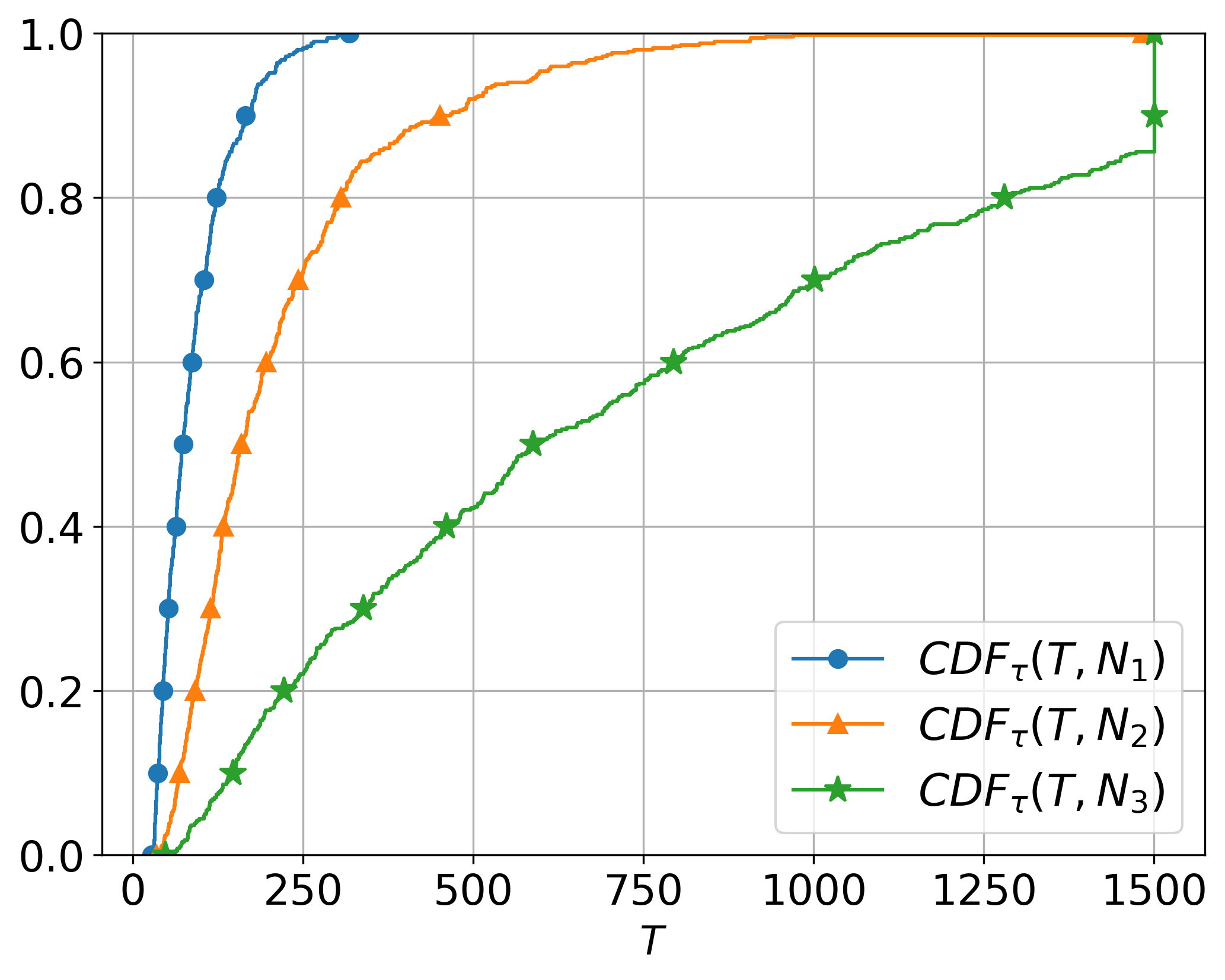}
    \captionsetup{font=small}
    \caption{Empirical cumulative distribution functions of the speciation time for different metapopulation sizes over 500 runs. We considered dynamically changing migration rates updated according to exponential clocks with rate $\theta$ and resampled independently from a (rescaled) Beta distribution $m_{\text{rate}} \cdot \beta(0.5, 0.5)$ with $m_\T{rate} = 1.675$. We plotted the empirical cumulative distribution functions of the speciation time for different metapopulation sizes, given by $N_1 = 4, N_2 = 6, N_3 = 8$. For $N_3 = 8$, in about 15\% of simulations, no speciation event occurred. Further, we chose $\mu = 0.1, \theta = 1$. Additional simulations revealed that for $N=15$, speciation occurred in only one run out of 100.}
    \label{fig:time_spec}
\end{figure}

\begin{figure}
    \centering
    \includegraphics[width=0.47\textwidth]{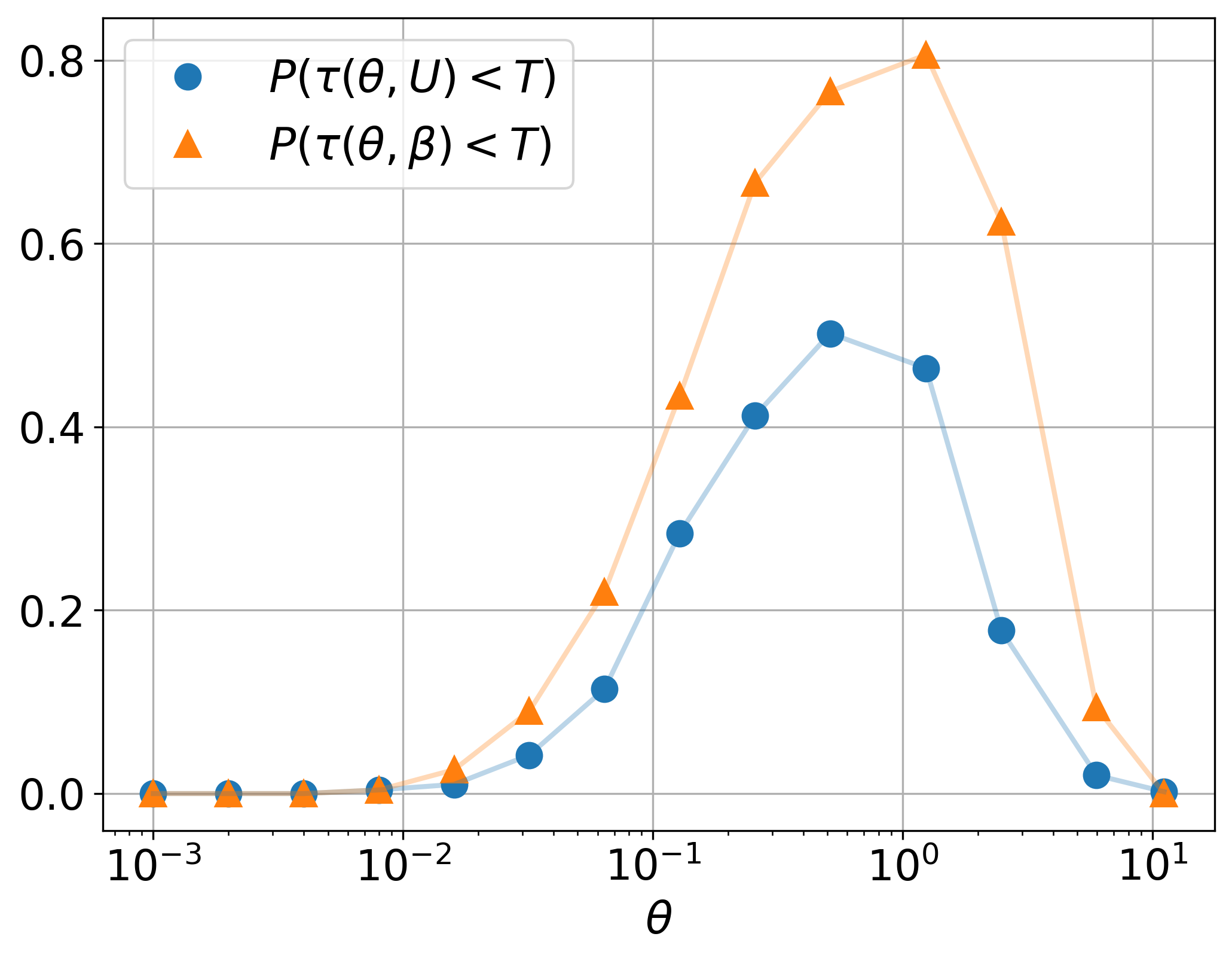}
    \captionsetup{font=small}
    \caption{Speciation probability for different migration update distributions as a function of the update rate of migration rate $\theta$, averaged over 500 runs. The values $\tau(\theta, U)$, resp. $\tau(\theta, \beta)$, refer to the time of speciation with a migration update distribution given by $m_{\text{rate}} \cdot \mathcal{U}([0,1])$ resp. $m_{\text{rate}} \cdot \beta(0.5, 0.5)$, with $m_{\text{rate}} = 1.675$. Here, we chose $h(x) = x^3, N = 5, \mu = 0.1$ and $T = 200$.}
    \label{fig:spec_time_theta}
\end{figure}

Further, Fig. \ref{fig:spec_time_theta} shows that for small $N$, the speciation time depends heavily on the update distribution. Choosing a $\beta(0.5, 0.5)$ distribution as the update law, results in higher speciation probabilities than choosing a uniform distribution. This can be explained from the fact that a $\beta(0.5, 0.5)$ is a $U$-shaped distribution that produces values close to $0$ with higher probability, and thus favors the occurrence of small migration rates which are more likely to trigger speciation events.

In summary, our results suggest that large species complexes experience lower speciation rates. This phenomenon owes to the large number of indirect paths that connect any two populations, even in the presence of a direct geographic barrier between them (see Section \ref{chap:large_pops}), which results in a robust ability to exchange alleles. For the occurrence of a speciation event, two conditions are necessary: $(A)$ a small cluster gets isolated by chance from the rest of the complex, and $(B)$ this transitory isolation is maintained for a sufficiently long duration for divergence feedback to kick in and reproductive isolation to be completed. This reasoning implies that in nature, besides the rare peripatric speciations we described, speciation
events occurring in large metapopulations must involve selective mechanisms (e.g., assortative mating, divergent selection) or massive geographic restrictions on
gene flow (e.g., vicariance).

\section{Discussion}

What are the causes of varying speciation rates across the tree of life? Numerous empirical studies have explored the biotic and abiotic determinants of speciation rates, including differences in geographic regions, life-history traits, intraspecific genetic diversity or species range (see \cite{rabosky2016reproductive, simpson1984tempo, coyne2004speciation, mayr1963animal, gavrilets2000patterns}).  This paper addresses the question by examining the effect of \hyperlink{neutral}{neutral} mutation/migration processes with \hyperlink{feedback}{divergence feedback} on the dynamics of species complexes.

\paragraph{Mathematical analysis of the model.}

\noindent We presented a genetic distance model previously introduced by \cite{couvert2024opening} to study the evolution of genetic differences between $N$ monomorphic populations at $L$ loci. This stochastic model is parameterized by three parameters: the mutation rate $\mu$, the migration matrix $(M_{i j})_{i,j \neq N}$, and a continuous function $h:[0,1]\rightarrow [0,1]$. The rate of effective migration events between two populations depends on the ecology and topology of the metapopulation (through the migration matrix $M_{i j}$) and on the genetic distance between them (through the function $h$). Referred to as the \hyperlink{feedback_func}{feedback function}, $h(p)$ encodes the extent to which a given degree of divergence (represented by the genetic proximity $p$) between two populations reduces the effective migration rates between them.
Modeling the decrease of gene flow with divergence by a general function $h$ has several important benefits. It contrasts with the standard one-off modeling choice where $h(p)=0$ when $p$ is below some threshold $c$ and $h(p)=1$ when $p>c$, which does not offer the possibility of studying the \hyperlink{feedback}{feedback effect} of divergence on gene flow. The species complex naturally evolves to some emerging equilibrium balancing homogenization by migration and differentiation by mutation, one of the possible equilibria being total reproductive isolation.

The genetic proximity $p$ is the fraction of shared alleles at a fixed set of $L$ loci, where $L$ needs to be large for our analysis to work. This set can be thought of as a set of speciation loci  potentially responsible for Dobzhansky-Muller  incompatibilities and/or underlying traits involved in reproductive isolation (mating trait, foraging trait...). It can also be viewed as the vast collection of sites that vary neutrally. In the latter case, $1-p$ can be interpreted as the fraction of synonymous positions that differ, which serves as a natural proxy for interspecific divergence. In all cases, for reproductive isolation to be completed, it is not necessary that $p=0$; if $h$ is zero below some threshold $c$, it is sufficient that $p<c$. We expect this to be the most realistic situation, with $-\log (1-c)$ in the range of $(1.5,2.5)$ \cite{roux2016shedding} in the case of synonymous divergence.

When the number of loci is large, this stochastic model can be well approximated by the solution to a nonlinear ordinary differential equation (ODE) involving only pairwise genetic proximities. This represents a considerable reduction in dimension, going from a stochastic process with values in vectors of $L$ (allelic) partitions of $\{1,\dots, N\}$ to a system of ${N\choose 2}$ ODE's. The system is more complex when the number $L$ of loci is finite, because a focal locus $K$ ``interacts'' with all other loci, in the sense that the allele it carries in population $i$ can migrate to (and fix in) population $j$ depending on the fraction of shared alleles between $i$ and $j$ at the $L-1$ other loci. As $L$ gets large, this fraction converges to a deterministic value, which can be viewed both as the portion of a big genome that is shared between $i$ and $j$ and as the probability that any given locus, e.g., locus $K$ itself, carries the same allele in $i$ and in $j$. Then the stochastic process (noted ${\cal A}(t)$ in equation \eqref{eqn:def_cal_A}) that tracks the alleles carried by locus $K$ (or any given locus) is said to interact with its own law. In the mathematical literature, this process is called a McKean-Vlasov system \cite{dawson1995stochastic} and its one-dimensional marginal follows a Fokker-Planck equation which is consequently nonlinear (see Theorem \ref{thm:conv-L} in SI).

This enabled us to analytically study the stability of reproductive structures at equilibrium within an ODE framework. Our first observation was that \hyperlink{reversibility}{irreversibility} of speciation requires the condition $h'(0) = 0$.

In fact, if this condition is not verified, a genetic proximity of 0 would be an excitable state provided migration rates are high enough, that is, two reproductively isolated populations could resume gene flow after artificial introgression of a small number of alleles from one population into the other (see \cite{orr1995population, coyne2004speciation}).

Thus, speciation in our model arises when genetic distance and effective migration rates become trapped in a positive feedback loop, causing diverging populations to snowball into complete reproductive isolation. In fact, under the right conditions on $\mu$ and the migration matrix $M$, there may exist a stable nontrivial migration-mutation equilibrium. Provided $h'(0)=0$, there is inevitably a threshold which is an unstable  equilibrium, such that, if genetic proximities fall below this threshold, they converge to the stable equilibrium of complete reproductive isolation.

\paragraph{The issue of transitivity in species complexes.}
In the context of our model, a species, or more precisely a species complex, is a set of populations connected through direct or indirect gene exchange, i.e., in mathematical terms a connected component of the effective migration graph. Then, what can we say about the structure of a species complex? In particular, under which conditions can we expect the occurrence of intransitive species complexes like ring species?
To this end, we consider two classes of feedback functions: with and without a threshold, i.e., a value $c \in [0,1)$ such that two populations are completely reproductively isolated ($h = 0$) if their genetic proximity is smaller than $c$. In the absence of a threshold ($c = 0$), we showed that any two populations in the same species complex are able to exchange genes directly. However, when $c \in (0,1)$, intransitive equilibria like ring species can occur, i.e., equilibria such that populations connected through indirect gene flow can be reproductively isolated. Strikingly, intransitive equilibria even exist in complete and uniform migration, i.e. $M_{i j} = m > 0$ for all $i\neq j$, and we gave an example introduced as the friendship graph. We can thus interpret the issue of transitivity in species complexes in terms of the presence or absence of a threshold for the feedback function.

\paragraph{Subspecies clustering in species complexes.}

\noindent In the presence of hybridization between closely related species (e.g., grizzly-polar bear \cite{pongracz2017recent}), a nontrivial question relates to distinguishing whether occasional interbreeding between populations represents a transitory state on the way to complete reproductive isolation, or a stable state of reduced, but evolutionary persistent, gene exchange.

For small values of $N$, we showed that there exist species complexes with clusters of genetically similar populations, that experience moderate gene flow between clusters -- even in the uniform migration case. Furthermore, we observed that clustering within a species complex can emerge from a coherent unit by temporary isolation, due to multi-stability.

We emphasize that we showed existence of both intransitive and clustering 
equilibria when the number of populations $N$ is small. As it turns out, the transitivity and clustering properties of species complexes change completely when we consider large values of $N$.

\paragraph{Large metapopulations.}

\noindent Our results suggest that inhomogeneous equilibria in well-connected metapopulations disappear when the number of populations becomes large.
This seemingly counterintuitive behavior can be explained by the observation that the extent to which any specific migration rate influences the shape of the entire species complex decreases when the number of populations increases, provided there are many paths connecting any two populations.
Presumably, the property of increased homogeneity and transitivity in species complexes with migration rates that are uniform (or just bounded from below) will continue to hold even if the degree of each vertex in the graph grows only faster than logarithmically with $N$, instead of linearly with $N$ in the uniform case. We formulate this hypothesis in reference to the Erd\"os-Renyi random graph, obtained by randomly setting the migration rate of each directed edge to 0 independently with the same probability $1-a_N$ (and leaving it unchanged with probability $a_N$), and which is known (see \cite{hoory2006expander}) to satisfy the expander graph property 
provided $a_N\gg \log(N)/N$. It will be interesting to test this hypothesis and to check that on the contrary, the homogeneity property of large interbreeding structures does not persist when the average degree remains bounded.

\paragraph{Fluctuating migration networks.}

\noindent The connectivity of natural metapopulations varies through time due to numerous intrinsic or extrinsic, biotic or abiotic processes: local population size fluctuations, evolution of dispersal strategies, presence/absence of predators, competitors or pathogens in specific areas, resource depletion in specific areas, presence/absence of animal or material vectors, appearance/disappearance of geographic barriers...  Changes of connectivity in species complexes can destabilise them by homogenizing gene pools when connectivity increases (e.g., incipient speciation failing at secondary contact) or by locally increasing genetic differentiation when connectivity decreases (e.g., allopatric speciation).

Such destabilising dynamics should allow the species complex to explore the genetic landscape and visit several equilibria, in particular the complete cessation of gene flow between two clusters of populations (parapatric or peripatric speciation).
We investigated an extended version of our model, where migration rates are independently updated at rate $\theta > 0$. This framework is reminiscent of a ``species pump'' mechanism, which refers to a situation of repeated temporary spatial isolation and secondary contacts generating and propagating new species by series of local adaptations (after a fission event) and character displacements (after a fusion event), (see, e.g., \cite{terborgh1992diversity,aguilee2013adaptive,aguilee2011ecological}). With our tools, the system is much simpler to analyze than what was previously done thanks to a representation by a piecewise deterministic Markov process (PDMP), whereby migration rate resamplings are the stochastic jump events between which the dynamics are deterministic.

Our results can be summarized in three observations. First, the rate of speciation is higher in smaller metapopulations. Secondly, upon a speciation event, there is typically a single population detaching from the mother species.

Finally, we examined the relationship between the rate of environmental change and the rate of speciation, and found a non-monotonic relationship between the two: at first glance, one could think that speciation rates decrease for lower values of $\theta$ (because the environment becomes increasingly stable), and that speciation is more frequent when the rate of environmental change is large. However, our observations suggest that this no longer holds when the rate of change becomes too large. Heuristically, this is due to the fact that in order to initiate a speciation event, geographic restrictions must be maintained for some time, allowing the positive feedback loop between genetic distance and effective migration rate to kick in. If migration rates are updated too quickly, the geographic restrictions necessary for speciation disappear before significant divergence can occur. 
Studies of the variations of diversification rates (speciation minus extinction) in paleontological time show that periods of increased tectonic activity correspond to periods of less diversification (see, e.g., \cite{stadler2011mammalian}). Our results suggest that this decrease may be due to less frequent speciation events that result from geographic restrictions not being maintained for a sufficiently long time in order to initiate speciation events.

\paragraph{Open questions and future work.}

\noindent The numerical simulations of the stochastic model (see Fig. \ref{fig:conv_to_ODE}) revealed an intriguing behaviour of genetic proximities when the number of loci is small. In fact, speciation seems to result from stochastic fluctuations around the quasi-equilibrium of the genetic proximities. Thus, it would be interesting to study the deviations from the stochastic model, which could shed light on questions related to the average time to first speciation as a function of the number of loci considered.

An important question relates to the interplay of subspecies clustering (see section \ref{sec:subsp_clust}) and fluctuating migration: Starting from a subspecies clustering configuration, does fluctuating migration still result in new species that are singletons or may it result in species that are the initial subspecies clusters?

\noindent Taking the dynamics of the metapopulation through colonization and extinction events into account could be an interesting addition to our model. In the split-and-drift random graph model of speciation \cite{bienvenu2019split}, the mere presence of an edge indicates possible gene flow between vertices, and edges disappear spontaneously after some random time to model the accumulation of divergence. In a refined version of this model, only vertices can disappear, due to local extinctions, and due to recolonizations, new vertices can appear along with their edges: when a vertex is replicated, its daughter copies its neighbors and the genetic proximities with them. Between recolonizations, genetic proximities evolve explicitly through mutations and migrations as described in the present paper. The large $L$ version of this model should yield a piecewise deterministic Markov model with McKean-Vlasov dynamics that is mathematically interesting in its own right.

In the framework of the model, a question of interest concerns the expected time to speciation of large metapopulations. Specifically, in Fig. \ref{fig:time_spec}, the simulations suggest that the time to speciation increases very rapidly with the number of populations. It would be interesting to find an expression of the rate of speciation as a function of the number of populations, that yields coherent results with empirical speciation rate differences as a function of metapopulation size (see \cite{makarieva2004dependence}). Further, this raises the question of whether we can constrain the set of feedback functions that can be considered to model isolation regimes within a taxon by fitting the simulated increase in time to speciation associated to a feedback function to empirical data. 

More generally, considering the observed significance of the feedback function, it is crucial to gain further insight into how to propose a biologically meaningful function $h$ from first population genetic principles, and how to infer it from experimental data. In particular, studies of diverging populations that continue to exchange genes could provide insight into this issue (see for example \cite{roux2016shedding}). We believe that this would significantly advance our understanding of speciation rate variation.

\bibliography{References}

\begin{thebibliography}{10}

\bibitem{coyne2004speciation}
J.~Coyne and H.~Orr, {\em Speciation}.
\newblock Speciation, Oxford University Press, Incorporated, 2004.

\bibitem{nosil2008speciation}
P.~Nosil, ``Speciation with gene flow could be common,'' 2008.

\bibitem{gavrilets2014models}
S.~Gavrilets, ``Models of speciation: where are we now?,'' {\em Journal of heredity}, vol.~105, no.~S1, pp.~743--755, 2014.

\bibitem{higgs1991stochastic}
P.~G. Higgs and B.~Derrida, ``Stochastic models for species formation in evolving populations,'' {\em Journal of Physics A: Mathematical and General}, vol.~24, no.~17, p.~L985, 1991.

\bibitem{manzo1994geographic}
F.~Manzo and L.~Peliti, ``Geographic speciation in the derrida-higgs model of species formation,'' {\em Journal of Physics A: Mathematical and General}, vol.~27, no.~21, p.~7079, 1994.

\bibitem{gavrilets1998rapid}
S.~Gavrilets, L.~Hai, and M.~D. Vose, ``Rapid parapatric speciation on holey adaptive landscapes,'' {\em Proceedings of the Royal Society of London. Series B: Biological Sciences}, vol.~265, no.~1405, pp.~1483--1489, 1998.

\bibitem{gavrilets2000waiting}
S.~Gavrilets, ``Waiting time to parapatric speciation,'' {\em Proceedings of the Royal Society of London. Series B: Biological Sciences}, vol.~267, no.~1461, pp.~2483--2492, 2000.

\bibitem{nosil2011genes}
P.~Nosil and D.~Schluter, ``The genes underlying the process of speciation,'' {\em Trends in ecology \& evolution}, vol.~26, no.~4, pp.~160--167, 2011.

\bibitem{roux2016shedding}
C.~Roux, C.~Fraisse, J.~Romiguier, Y.~Anciaux, N.~Galtier, and N.~Bierne, ``Shedding light on the grey zone of speciation along a continuum of genomic divergence,'' {\em PLoS biology}, vol.~14, no.~12, p.~e2000234, 2016.

\bibitem{nosil2017tipping}
P.~Nosil, J.~L. Feder, S.~M. Flaxman, and Z.~Gompert, ``Tipping points in the dynamics of speciation,'' {\em Nature Ecology \& Evolution}, vol.~1, no.~2, p.~0001, 2017.

\bibitem{pina2019does}
V.~M. Pina and E.~Schertzer, ``How does geographical distance translate into genetic distance?,'' {\em Stochastic Processes and their Applications}, vol.~129, no.~10, pp.~3893--3921, 2019.

\bibitem{mccandlish2014modeling}
D.~M. McCandlish and A.~Stoltzfus, ``Modeling evolution using the probability of fixation: history and implications,'' {\em The Quarterly review of biology}, vol.~89, no.~3, pp.~225--252, 2014.

\bibitem{patwa2008fixation}
Z.~Patwa and L.~M. Wahl, ``The fixation probability of beneficial mutations,'' {\em Journal of The Royal Society Interface}, vol.~5, no.~28, pp.~1279--1289, 2008.

\bibitem{kimura1962probability}
M.~Kimura, ``On the probability of fixation of mutant genes in a population,'' {\em Genetics}, vol.~47, no.~6, p.~713, 1962.

\bibitem{etheridge2011some}
A.~Etheridge, {\em Some Mathematical Models from Population Genetics: {\'E}cole D'{\'E}t{\'e} de Probabilit{\'e}s de Saint-Flour XXXIX-2009}, vol.~2012.
\newblock Springer Science \& Business Media, 2011.

\bibitem{orr1995population}
H.~A. Orr, ``The population genetics of speciation: the evolution of hybrid incompatibilities.,'' {\em Genetics}, vol.~139, no.~4, pp.~1805--1813, 1995.

\bibitem{irwin2001ring}
D.~E. Irwin, J.~H. Irwin, and T.~D. Price, ``Ring species as bridges between microevolution and speciation,'' {\em Microevolution rate, pattern, process}, pp.~223--243, 2001.

\bibitem{barton1989adaptation}
N.~H. Barton and G.~M. Hewitt, ``Adaptation, speciation and hybrid zones,'' {\em Nature}, vol.~341, no.~6242, pp.~497--503, 1989.

\bibitem{kuchta2009closing}
S.~R. Kuchta, D.~S. Parks, R.~L. Mueller, and D.~B. Wake, ``Closing the ring: historical biogeography of the salamander ring species ensatina eschscholtzii,'' {\em Journal of Biogeography}, vol.~36, no.~5, pp.~982--995, 2009.

\bibitem{cacho2012caribbean}
N.~I. Cacho and D.~A. Baum, ``The caribbean slipper spurge euphorbia tithymaloides: the first example of a ring species in plants,'' {\em Proceedings of the Royal Society B: Biological Sciences}, vol.~279, no.~1742, pp.~3377--3383, 2012.

\bibitem{servedio2020evolution}
M.~R. Servedio and J.~Hermisson, ``The evolution of partial reproductive isolation as an adaptive optimum,'' {\em Evolution}, vol.~74, no.~1, pp.~4--14, 2020.

\bibitem{berestycki2016mixing}
N.~Berestycki, ``Mixing times of markov chains: Techniques and examples,'' {\em Alea-Latin American Journal of Probability and Mathematical Statistics}, 2016.

\bibitem{aldous2002reversible}
D.~Aldous and J.~Fill, ``Reversible markov chains and random walks on graphs,'' 2002.

\bibitem{cooper2013coalescing}
C.~Cooper, R.~Elsasser, H.~Ono, and T.~Radzik, ``Coalescing random walks and voting on connected graphs,'' {\em SIAM Journal on Discrete Mathematics}, vol.~27, no.~4, pp.~1748--1758, 2013.

\bibitem{rabosky2016reproductive}
D.~L. Rabosky, ``Reproductive isolation and the causes of speciation rate variation in nature,'' {\em Biological Journal of the Linnean Society}, vol.~118, no.~1, pp.~13--25, 2016.

\bibitem{simpson1984tempo}
G.~G. Simpson, {\em Tempo and mode in evolution}.
\newblock Columbia University Press, 1984.

\bibitem{mayr1963animal}
E.~Mayr, {\em Animal species and evolution}.
\newblock Harvard University Press, 1963.

\bibitem{gavrilets2000patterns}
S.~Gavrilets, H.~Li, and M.~D. Vose, ``Patterns of parapatric speciation,'' {\em Evolution}, vol.~54, no.~4, pp.~1126--1134, 2000.

\bibitem{couvert2024opening}
{\'E}.~Couvert, F.~Bienvenu, J.-J. Duchamps, A.~Erard, V.~Mir{\'o}~Pina, E.~Schertzer, and A.~Lambert, ``Opening the species box: what parsimonious microscopic models of speciation have to say about macroevolution,'' {\em Journal of Evolutionary Biology}, vol.~37, no.~12, pp.~1433--1457, 2024.

\bibitem{dawson1995stochastic}
D.~Dawson and J.~Vaillancourt, ``Stochastic mckean-vlasov equations,'' {\em Nonlinear Differential Equations and Applications NoDEA}, vol.~2, no.~2, pp.~199--229, 1995.

\bibitem{pongracz2017recent}
J.~D. Pongracz, D.~Paetkau, M.~Branigan, and E.~Richardson, ``Recent hybridization between a polar bear and grizzly bears in the canadian arctic,'' {\em Arctic}, pp.~151--160, 2017.

\bibitem{hoory2006expander}
S.~Hoory, N.~Linial, and A.~Wigderson, ``Expander graphs and their applications,'' {\em Bulletin of the American Mathematical Society}, vol.~43, no.~4, pp.~439--561, 2006.

\bibitem{terborgh1992diversity}
J.~Terborgh, {\em Diversity and the tropical rain forest.}
\newblock 1992.

\bibitem{aguilee2013adaptive}
R.~Aguil{\'e}e, D.~Claessen, and A.~Lambert, ``Adaptive radiation driven by the interplay of eco-evolutionary and landscape dynamics,'' {\em Evolution}, vol.~67, no.~5, pp.~1291--1306, 2013.

\bibitem{aguilee2011ecological}
R.~Aguil{\'e}e, A.~Lambert, and D.~Claessen, ``Ecological speciation in dynamic landscapes,'' {\em Journal of evolutionary biology}, vol.~24, no.~12, pp.~2663--2677, 2011.

\bibitem{stadler2011mammalian}
T.~Stadler, ``Mammalian phylogeny reveals recent diversification rate shifts,'' {\em Proceedings of the National Academy of Sciences}, vol.~108, no.~15, pp.~6187--6192, 2011.

\bibitem{bienvenu2019split}
F.~Bienvenu, F.~D{\'e}barre, and A.~Lambert, ``The split-and-drift random graph, a null model for speciation,'' {\em Stochastic processes and their applications}, vol.~129, no.~6, pp.~2010--2048, 2019.

\bibitem{makarieva2004dependence}
A.~M. Makarieva and V.~G. Gorshkov, ``On the dependence of speciation rates on species abundance and characteristic population size,'' {\em Journal of Biosciences}, vol.~29, pp.~119--128, 2004.

\bibitem{billingsley2013convergence}
P.~Billingsley, {\em Convergence of probability measures}.
\newblock John Wiley \& Sons, 2013.

\bibitem{skorokhod1956limit}
A.~V. Skorokhod, ``Limit theorems for stochastic processes,'' {\em Theory of Probability \& Its Applications}, vol.~1, no.~3, pp.~261--290, 1956.

\bibitem{aldous1978stopping}
D.~Aldous, ``Stopping times and tightness,'' {\em The Annals of Probability}, pp.~335--340, 1978.

\bibitem{rebolledo1980central}
R.~Rebolledo, ``Central limit theorems for local martingales,'' {\em Zeitschrift f{\"u}r Wahrscheinlichkeitstheorie und verwandte Gebiete}, vol.~51, no.~3, pp.~269--286, 1980.

\bibitem{neumann2023nonlinear}
B.~A. Neumann, ``Nonlinear markov chains with finite state space: Invariant distributions and long-term behaviour,'' {\em Journal of Applied Probability}, vol.~60, no.~1, pp.~30--44, 2023.

\bibitem{stroock2013introduction}
D.~W. Stroock, {\em An introduction to Markov processes}, vol.~230.
\newblock Springer Science \& Business Media, 2013.

\end{thebibliography}
\bibliographystyle{ieeetr}

\onecolumn

\appendix
\addcontentsline{toc}{section}{Supplementary Information}
\section*{Supplementary Information}
The SI will be devoted to the rigorous derivation of the mathematical results exposed in the main text. In the first section, we will derive the master equation (\ref{eq:ODE}). In the second section, we study the equilibria of (\ref{eq:ODE}) and their stability. Throughout, we will use the notation $[N]:=\{1,\dots, N\}$.
\section{Deriving the master equation}
\label{chap:master_eq}
The derivation of the master equation (\ref{eq:ODE}) can be achieved in two steps. First, we show that the state and transitions of our model can be represented with partitions of the set of populations $[N]$ in a Markovian way. To this end, we will divide the populations at each locus into blocks -- depending on which other populations they share the same allele with. This interpretation allows us to show via a law of large numbers that the fraction of loci with some allelic partition converges to the solution to an ODE. Second, we show that this limiting ODE can be identified with the transition probabilities of a nonlinear Moran model. By leveraging the duality of the latter, we derive the master equation (\ref{eq:ODE}), summarized in the following theorem.

\begin{theorem}\label{thm:mast_eq}
    The process of stochastic genetic proximities $(P^L_{i j}(t))_{i,j\in[N]}; t\geq 0)$ converges in distribution as $L \ra \i$ to $(P_{i j}(t))_{i,j\in[N]}; t\geq 0)$, solution to the system of ordinary differential equations given by
    \begin{align*}
   \frac{dP_{i j}}{dt} = \sum_{k = 1}^N (M_{k i} h (P_{k i}) P_{k j} + M_{kj} h (P_{k j}) P_{k i}) - P_{i j} \left( \sum_{k = 1}^N (M_{k i} h (P_{k i}) + M_{k j} h
  (P_{k j}) ) + 2 \mu \right) \, .\label{eq:EqSystemPij} 
\end{align*}
\end{theorem}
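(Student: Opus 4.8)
The plan is to follow the two-step route announced before the statement: first promote the stochastic model to a finite-dimensional density-dependent Markov chain and run a law of large numbers, then read off the differential system for the pairwise proximities.

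\emph{Step 1 (partition representation).} I would first observe that only the \emph{partition structure} of the loci matters. For each locus $K$, let $\Pi_K(t)$ be the partition of $[N]$ with $i\sim j$ iff $A_{i,K}(t)=A_{j,K}(t)$. A mutation at $(i,K)$ makes $\{i\}$ a singleton block of $\Pi_K$; a migration $i\to j$ at locus $K$ detaches $j$ from its block and inserts it into $i$'s block; and $P^L_{ij}(t)=\langle\mu^L_t,\mathbf 1_{\{i\sim j\}}\rangle$, where $\mu^L_t:=\tfrac1L\sum_{K=1}^L\delta_{\Pi_K(t)}$ is the empirical measure of partitions. Since all loci sharing the current partition have identical mutation and migration rates, and those rates depend on the configuration only through the linear functionals $P^L_{ij}$ of $\mu^L_t$, the process $(\mu^L_t)_{t\ge0}$ is a Markov chain on the (finite-dimensional) simplex of probability vectors indexed by the finite set of partitions of $[N]$, and it is density-dependent in the sense of Ethier--Kurtz: each admissible jump $\mu\mapsto\mu+\tfrac1L(\delta_{\pi'}-\delta_\pi)$ fires at rate $L\,\beta_{\pi\to\pi'}(\mu)$, with $\beta$ bounded and, when $h$ is Lipschitz, Lipschitz in $\mu$.

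\emph{Step 2 (law of large numbers).} I would then invoke Kurtz's theorem for density-dependent chains. The drift $b(\mu):=\sum_{\pi,\pi'}\beta_{\pi\to\pi'}(\mu)(\delta_{\pi'}-\delta_\pi)$ is bounded and Lipschitz on the compact simplex, so $\dot\mu_t=b(\mu_t)$ has a unique global solution from $\mu_0=\delta_{\pi_0}$, and $\sup_{t\le T}\|\mu^L_t-\mu_t\|\to0$ in probability, hence in distribution, in the Skorokhod space $D([0,\infty),\mathbb R^{\mathcal P([N])})$. If one only assumes $h$ continuous, I would instead argue by tightness and the martingale problem: tightness is immediate (jumps are $O(1/L)$, total jump rate $O(L)$, compact state space), every subsequential limit is a.s.\ continuous and solves $\dot\mu_t=b(\mu_t)$, and convergence of the whole sequence follows once this ODE is known to have a unique solution (for which local Lipschitzness of $h$, or a direct monotonicity argument on the compact state space, suffices). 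In either case, since $P^L_{ij}(t)=\langle\mu^L_t,\mathbf 1_{\{i\sim j\}}\rangle$ is a continuous functional of the path, I obtain $(P^L_{ij}(\cdot))_{i,j\in[N]}\Rightarrow(P_{ij}(\cdot))_{i,j\in[N]}$ with $P_{ij}(t):=\langle\mu_t,\mathbf 1_{\{i\sim j\}}\rangle$. This step — making precise that the inter-locus coupling, entirely carried by the $P^L_{ij}$ which are themselves empirical averages over the $L$ loci, self-averages as $L\to\infty$ — is the real obstacle; the partition-empirical-measure viewpoint reduces it to a standard density-dependent chain, but the weak regularity of $h$ is a genuine friction point, since without a Lipschitz (or monotonicity) input one gets only subsequential limits and must supply a separate uniqueness argument to pin down the limit.

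\emph{Step 3 (identifying the ODE).} Finally I would differentiate $P_{ij}(t)=\langle\mu_t,\mathbf 1_{\{i\sim j\}}\rangle$ along $\dot\mu_t=b(\mu_t)$, bookkeeping the elementary transitions that change the indicator $\mathbf 1_{\{i\sim j\}}$ at a locus. A mutation at $i$ or at $j$ (limiting rate $2\mu$ at each locus with $i\sim j$) destroys the pair, giving $-2\mu P_{ij}$. A migration $k\to i$ replaces $\mathbf 1_{\{i\sim j\}}$ by $\mathbf 1_{\{k\sim j\}}$ at that locus, and since the rate $M_{ki}h(P_{ki})$ is the same at every locus, summing over loci contributes $M_{ki}h(P_{ki})(P_{kj}-P_{ij})$; symmetrically a migration $k\to j$ contributes $M_{kj}h(P_{kj})(P_{ik}-P_{ij})$. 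With the convention $P_{kk}\equiv1$, the case $k=i$ of the second family encodes migration $i\to j$ and the case $k=j$ of the first encodes migration $j\to i$, while $M_{ii}=0$ makes the remaining spurious $k$-terms vanish; summing over $k\in[N]$ and using $P_{ik}=P_{ki}$ gives exactly
\begin{align*}
\frac{dP_{ij}}{dt}=\sum_{k=1}^N\bigl(M_{ki}h(P_{ki})P_{kj}+M_{kj}h(P_{kj})P_{ki}\bigr)-P_{ij}\Bigl(\sum_{k=1}^N\bigl(M_{ki}h(P_{ki})+M_{kj}h(P_{kj})\bigr)+2\mu\Bigr).
\end{align*}
Alternatively, one identifies $\mu_t$ as the time-$t$ law of the nonlinear Moran process $\mathcal A$ (so $P_{ij}(t)=\mathbb P(\mathcal A_i(t)=\mathcal A_j(t))$) and obtains the same system from its Fokker--Planck equation, or, as in the main text, from the dual family of coalescing time-inhomogeneous random walks (Proposition~\ref{prop:DynDual}, Corollary~\ref{cor:ODE}).
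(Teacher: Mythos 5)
Your proposal is correct, and its first half is essentially the paper's own argument: your empirical measure $\mu^L_t$ of allelic partitions is exactly the process $\vec X^L$ of Theorem \ref{thm:conv-L}, and your law of large numbers (Kurtz's density-dependent-chain theorem, or tightness plus the martingale problem) is the same mechanism the paper implements by hand via the generator computation, the $O(1/L)$ quadratic variation bound, and the Aldous--Rebolledo criterion. The genuine divergence is in your Step 3. The paper closes the argument by identifying the limiting flow with the law of a nonlinear (McKean--Vlasov) Moran process through the nonlinear Kolmogorov equation, and then extracts the pairwise system from the genealogical dual: coalescing, time-inhomogeneous ancestral random walks and the infinitesimal case analysis of Proposition \ref{prop:DynDual} and Corollary \ref{cor:ODE}. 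You instead differentiate the pair functional $\langle\mu_t,\mathbf{1}_{\{i\sim j\}}\rangle$ directly along $\dot\mu_t=b(\mu_t)$ and check that the resulting drift closes in the $P_{kl}$; your bookkeeping of the transitions (mutation at $i$ or $j$, migration $k\to i$ replacing $\mathbf{1}_{\{i\sim j\}}$ by $\mathbf{1}_{\{k\sim j\}}$, the $k=i$, $k=j$ boundary terms with $P_{kk}=1$ and $M_{ii}=0$) is accurate and reproduces the stated equation. Your route is shorter and avoids both the appeal to the existence theorem for nonlinear Markov chains and the dual-lineage computation; what the paper's longer detour buys is the coalescing-random-walk representation and the fixed-point characterization of equilibria (Theorem \ref{thm:FixPtPb}), which are reused throughout the stability and large-$N$ analysis, so the duality is not wasted effort even though it is not needed for this theorem alone. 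Finally, your caveat about $h$ being merely continuous is fair but puts you at no disadvantage: the paper's own uniqueness step invokes Cauchy--Lipschitz by treating $h$ and $\vec G$ as $\mathcal{C}^1$, so assuming $h$ Lipschitz (or supplying a separate uniqueness argument in the martingale-problem fallback) is no stronger than what the paper actually uses.
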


\subsection{Convergence of the allelic partition process}
\label{chap:master_eq_part}

We define $[n] \assign \{ 1, \ldots, n \}$, and denote the set of partitions of $[n]$ with $\mathcal{P}_n$. We denote by $B_n$ the cardinal of $\mathcal{P}_n$ (Bell's number). To rigorously define our process, we need to introduce some notation. Let $K \in [L]$ be a given locus, and $i,j \in [N]$ two populations, with $L,N\in\mathbb{N}$. 

For any $t\ge 0$, we let $\Pi^K_t
\in \mathcal{P}_N$ be the \tmstrong{allelic partition} at locus $K$ at time $t$. More specifically, $\Pi^K_t$ is the partition induced by the equivalence relation $\sim_{\Pi^K_t}$ defined as
\[ i \sim_{\Pi^K_t} j \quad  \Leftrightarrow \quad \text{at time $t$, $i$ and $j$ carry the same allele at locus $K$}. \]
More generally, for any partition $\pi \in \mathcal{P}_N$, we will write $i \sim_{\pi} j$ if $i$ and $j$ belong to the same block of $\pi$.

This is a simple way to study genetic differences between
populations, because we actually do not have to keep record of any allele, or,
speaking in terms of Fig. \ref{fig:Model_trans}, we do not have to keep
using different colors to distinguish differences in genetic material.

We now define the process of {\tmstrong{allelic partitions}} 
\[  (\vec{\Pi}^{(L)}_t)_{t \geq 0} := (\vec{\Pi}_t)_{t \geq 0} = (\Pi^1_t, \ldots, \Pi^L_t)_{t \geq  0} \]
which is valued in ${\cal P}_N^{\otimes L}$. Finally, to compute the genetic proximity between two populations at time $t$ from the process $(\vec{\Pi}_t)_{t \geq 0}$, we define two functions. We set, for all $\pi \in \mathcal{P}_N, \vec{\sigma} \in \mathcal{P}^{\otimes L}_N$, and all populations $i, j \in [N]$,
  \begin{equation}
f_{\pi} (\vec\sigma) := \frac{1}{L} \sum_{K = 1}^L \mathbf{1}_{\{ \sigma_K =
    \pi \}}, \label{eq:MastEqGen}
  \end{equation}
  and
  \begin{align}
    P_{i j}^L (\vec\sigma) := \frac{1}{L}\sum_{K=1}^L \mathbf{1}_{\{i \sim_{\sigma_K}j \}}. \label{eq:DefProxGen} 
  \end{align}
Intuitively, $f_\pi(\vec{\Pi}_t)$ will correspond to the fraction of loci with allelic partition given by $\pi$, while $P_{ij}^L(t) \assign P_{ij}^L(\vec{\Pi}_t)$ will correspond to the \tmstrong{genetic proximity} between populations $i$ and $j$ at time $t$. Note that this definition is just the mathematical translation of the above idea of counting the number of different alleles.

The process $(\vec{\Pi}_t)_{t>0}$, and thus the process of genetic proximities $\{(P_{i j}^L(t))_{t\geq 0} :  i,j\in[N]\}$, will be governed by two antagonistic forces:
\begin{enumerate}
  \item {\tmstrong{Mutation events}}: mutations occur within each population $i$ and at each
  locus $K$ at a constant rate $\mu$. Given such a mutation event, the allelic partition $\Pi^K_t$ changes to $s_i (\Pi^K_t)$, the partition
  created from $\Pi^K_t$ by isolating the singleton $i$ into a block of its own.
  
  \item {\tmstrong{Migration events}}: between each pair of populations $i$ and
  $j$, at each locus $K$, migration events occur at an effective rate
  \begin{align} \label{eq:effMigRatesAp}
      M_{i j}^e = M_{i j} \cdot h (P_{i j}^L (t) ),
  \end{align}
  We refer to the model description (see Section \ref{chap:model}) for the definitions of $M_{i j}$ and $h$. Given a migration event from $i$ to $j$ at locus $K$, the allelic partition $\Pi^K_t$ changes to $\sigma_{j \rightarrow i} (\Pi^K_t)$, the
  partition created from $\Pi^K$ by putting the element $j$ in the block
  containing $i$. Heuristically, when $i$ migrates to $j$, the element $j$
  will take the type of $i$, which corresponds to placing $j$ into the block
  containing $i$.
 \end{enumerate}

To expose the main result of the section, we start with some notation. Set
\[ \mathcal{M}_1({\cal P}_N):= \left\{ \vec \rho = (\rho_{\pi})_{\pi\in{\cal P}_N}, \ \  \sum_{\pi \in {\cal P}_N}
   \rho_{\pi} = 1 \right\} \]
the set of probability measures on $\mathcal{P}_N$.
For every $\vec{\rho}\in{\cal M}({\cal P}_N)$, we set $\vec{\rho} (i\sim j) := \sum_{\pi:i\sim j} \rho_\pi$. Finally,
$A(\vec \rho)$ is the transition rate matrix such that 
for $\pi \neq \pi'$
\begin{itemizedot}
  \item $A_{\pi, \pi'}(\vec{\rho}) = \mu$, if $\pi' = s_i (\pi)$ for some $i \in [N]$  
  \item $A_{\pi, \pi'}(\vec{\rho}) = M_{i j} h \bigg( \vec{\rho} (i\sim j) \bigg)$, if $\pi' = \sigma_{j \rightarrow i} (\pi)$ for some
  $i, j \in [N]$.
\end{itemizedot}
Define 
\begin{equation}
\forall \pi\in{\cal P}_N, \ \   X_{\pi}^L (t) : = f_{\pi} (\vec{\Pi}_t) = \frac{1}{L} \sum_{K = 1}^L \mathbf{1}_{\{ \Pi_t^K = \pi \}}  \label{DefXL}
\end{equation}
and  $\vec{X}^L = (X^L_\pi)_{\pi\in B_N}$ the process in 
 $\mathbb{D}$, the space of c{\`a}dl{\`a}g functions valued in ${\cal M}_1({\cal P}_N)$ endowed with
the Skorohod (J1)-topology \cite{billingsley2013convergence,skorokhod1956limit}.

\begin{theorem}\label{thm:conv-L}
The sequence $(\vec{X}^L)_L$ converges in law 
to $(\vec{X}(t))_{t\geq 0} = ((X_{\pi}(t))_{\pi \in
  \mathcal{P}_N})_{t\geq 0}$, the
 unique solution of the deterministic ODE
  \begin{eqnarray}
  \label{ast}
    \frac{d \vec X(t)}{d t} =    \vec{X}(t) A(\vec{X}(t)) =: \vec{G}(\vec{X}(t))
  \end{eqnarray}
\end{theorem}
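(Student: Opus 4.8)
The plan is to recognize $\vec X^L$ as the empirical-occupation process of $L$ exchangeable ``particles'' --- the loci --- each moving in the finite space $\mathcal P_N$ with a purely mean-field interaction, and then to invoke the standard fluid-limit (law of large numbers) for density-dependent Markov chains in the spirit of Kurtz. First I would check that $\vec X^L$ is itself Markovian: by (\ref{eq:effMigRatesAp}) the mutation and migration rates of each locus $K$ depend on $\vec\Pi_t$ only through the proximities $P^L_{ij}(t)=\vec X^L(t)(i\sim j)$, hence only through $\vec X^L(t)$, so $(\vec X^L_t)_{t\ge0}$ is a Markov process on the lattice $\tfrac1L\mathbb Z^{B_N}\cap\mathcal M_1(\mathcal P_N)$, with transition $\vec x\mapsto\vec x+\tfrac1L(e_{\pi'}-e_\pi)$ at rate $L\,x_\pi A_{\pi,\pi'}(\vec x)$ (there are $Lx_\pi$ loci in state $\pi$, each jumping to $\pi'$ at rate $A_{\pi,\pi'}(\vec x)$). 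Since the coordinate maps $\vec x\mapsto x_\pi$ are affine, their increments under a jump are exact and the generator yields, with no Taylor remainder, $\mathcal L^L(x_\pi)(\vec x)=\big(\vec x\,A(\vec x)\big)_\pi=G_\pi(\vec x)$.

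Next I would write Dynkin's formula coordinatewise, obtaining the exact decomposition
\[
  X^L_\pi(t)=X^L_\pi(0)+\int_0^t G_\pi\big(\vec X^L(s)\big)\,ds+M^L_\pi(t)
\]
with $M^L_\pi$ a square-integrable martingale. Continuity of $h$ on $[0,1]$ and finiteness of $\mu$ and $(M_{ij})$ make every entry of $A(\cdot)$ bounded on the simplex, so $\vec G$ is bounded there; the total jump rate of $\vec X^L$ is $O(L)$ and each jump has size $\tfrac1L$, so (computing $\langle M^L_\pi\rangle$ from $\mathcal L^L(x_\pi^2)-2x_\pi\mathcal L^L(x_\pi)$) one gets $\mathbb E\,\langle M^L_\pi\rangle_T=O(T/L)$ and hence $\sup_{t\le T}|M^L_\pi(t)|\to0$ in $L^2$ for every $T$ by Doob's inequality. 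The initial data cause no trouble: all populations start monomorphic, so $\vec X^L(0)=\delta_{\{[N]\}}$ for every $L$.

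Then comes tightness and identification. The state space $\mathcal M_1(\mathcal P_N)$ is compact; by the previous step the finite-variation part of the decomposition is uniformly $\|\vec G\|_\infty$-Lipschitz in time while the martingale part is uniformly small, so $(\vec X^L)_L$ is tight in $\mathbb D([0,\infty),\mathcal M_1(\mathcal P_N))$ (Aldous--Rebolledo, or a direct oscillation estimate). Every subsequential limit $\vec X$ is a.s.\ continuous (the $\tfrac1L$-jumps vanish), and passing to the limit in Dynkin's formula along a Skorokhod-coupled subsequence --- using $\sup_{t\le T}|M^L_\pi|\to0$ and continuity of $\vec G$ on the compact simplex to pass $\int_0^t G_\pi(\vec X^L(s))\,ds$ to its limit --- identifies $\vec X$ as a solution of $\vec X(t)=\vec X(0)+\int_0^t\vec G(\vec X(s))\,ds$, i.e.\ of (\ref{ast}).

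The one genuinely delicate point --- and the main obstacle --- is \emph{uniqueness} of the solution to (\ref{ast}), which is what turns ``convergence along subsequences'' into convergence of the full sequence and justifies the phrase ``\emph{the} unique solution''. Here $\vec G$ is a finite sum of terms of the form (polynomial in $\vec x$)$\,\times\,h(\text{affine functional of }\vec x)$, hence globally Lipschitz on the simplex as soon as $h$ is Lipschitz on $[0,1]$ --- which covers every feedback function used in the paper (in particular $h(x)=x^a$ with $a\ge2$, any $h$ with $h'(0)=0$ as imposed in (\ref{eq:irrev}), and the thresholded and smoothed-step functions of the figures). Under that mild regularity, Picard--Lindel\"of gives uniqueness, so $\vec X^L\to\vec X$ in law in $\mathbb D$ --- indeed in probability, uniformly on compact time intervals, since the limit is deterministic; equivalently, one may simply quote Kurtz's theorem for density-dependent chains directly (Lipschitz drift, with non-explosion automatic because the dynamics stay in a compact set), which bundles tightness, identification and the law of large numbers into a single step. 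If one insisted on retaining only continuity of $h$, tightness would still give existence of a solution to (\ref{ast}) and convergence along subsequences, but uniqueness of (\ref{ast}) would then require an additional monotonicity-type argument.
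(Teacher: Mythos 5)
Your proposal is correct and follows essentially the same route as the paper's proof: a martingale decomposition obtained from the generator, an $O(1/L)$ bound on the quadratic variation, Aldous--Rebolledo tightness, Prohorov/Skorokhod subsequential limits identified as solutions of \eqref{ast}, and Cauchy--Lipschitz uniqueness to upgrade to full convergence (your density-dependent/Kurtz framing of $\vec{X}^L$ is just a repackaging of the paper's generator computation for the partition process $\vec{\Pi}^{(L)}$). Your closing caveat on uniqueness when $h$ is merely continuous is well taken --- the paper's proof simply asserts that $h$ and $\vec{G}$ are $\mathcal{C}^1$ although the model only assumes $h$ continuous and nondecreasing --- so on that point your treatment is, if anything, the more careful one.
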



We decompose the proof into several elementary lemmas. The first lemma is obtained by straightforward computations, and thus we omit its proof.
\begin{lemma}
Let $Q^L$ be the generator of the partition process $(\vec{\Pi}^{(L)}_t)_{t\geq 0}$. Then 
  \label{PropGen}
  \begin{eqnarray}
      Q^{L} {f} (\nu)  = {f}(\nu) A ({f}(\nu)) \, ,
  \end{eqnarray}
  for all $f:\mathcal{P}^{\otimes L}_N \rightarrow \mathbb{R}, \nu \in \mathcal{P}^{\otimes L}_N$.
\end{lemma}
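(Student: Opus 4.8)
The plan is to read the displayed identity as a componentwise statement about the vector-valued observable $\vec f = (f_\pi)_{\pi\in\mathcal P_N}$ introduced in (\ref{eq:MastEqGen}): writing $\vec f(\nu) = (f_\pi(\nu))_{\pi\in\mathcal P_N}$ for the empirical distribution of single-locus partitions across the $L$ loci, the claim is that for every target partition $\pi$,
\[
Q^L f_\pi(\nu) = \big(\vec f(\nu)\,A(\vec f(\nu))\big)_\pi = \sum_{\pi'\in\mathcal P_N} f_{\pi'}(\nu)\,A_{\pi',\pi}(\vec f(\nu)),
\]
with the standard convention that the diagonal entries of $A$ are fixed by requiring zero row sums. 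Reassembling these over $\pi$ yields the vector statement of the lemma.

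First I would write $Q^L$ explicitly. Since $\mathcal P_N^{\otimes L}$ is finite, $(\vec\Pi^{(L)}_t)$ is a continuous-time Markov chain and $Q^L F(\nu) = \sum_{\nu'\neq\nu} q(\nu,\nu')\,(F(\nu')-F(\nu))$ for every $F:\mathcal P_N^{\otimes L}\to\mathbb R$. The transitions out of $\nu=(\nu_1,\dots,\nu_L)$ come in two kinds, each modifying a single coordinate: a mutation at locus $K$ isolating population $i$, at rate $\mu$, sending $\nu_K\mapsto s_i(\nu_K)$; and a migration from $i$ to $j$ at locus $K$, at rate $M_{ij}h(P^L_{ij}(\nu))$, sending $\nu_K\mapsto\sigma_{j\to i}(\nu_K)$. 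The structural point I would highlight is that the migration rate depends on $\nu$ only through $P^L_{ij}(\nu)$, and that by (\ref{eq:DefProxGen}) together with the definition $\vec\rho(i\sim j) = \sum_{\pi:i\sim_\pi j}\rho_\pi$ one has exactly $P^L_{ij}(\nu) = \vec f(\nu)(i\sim j)$.

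Next I would evaluate $Q^L$ on the specific observable $f_\pi$. Because $f_\pi(\nu) = \frac1L\sum_K \mathbf 1_{\{\nu_K=\pi\}}$ and every elementary transition alters exactly one locus $K$, each increment reduces to $f_\pi(\nu')-f_\pi(\nu) = \frac1L\big(\mathbf 1_{\{\nu'_K=\pi\}}-\mathbf 1_{\{\nu_K=\pi\}}\big)$. Substituting this into $Q^L f_\pi(\nu)$ and regrouping the sum over loci $K$ by the current value $\nu_K=\pi'$, I would use $\sum_K \mathbf 1_{\{\nu_K=\pi'\}} = L\,f_{\pi'}(\nu)$; the explicit $1/L$ then cancels, turning the sum over loci into a sum over source partitions $\pi'$ weighted by $f_{\pi'}(\nu)$. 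The gain terms (those with $\nu'_K=\pi$) assemble into $\sum_{\pi'\neq\pi} f_{\pi'}(\nu)\,A_{\pi',\pi}(\vec f(\nu))$, where $A_{\pi',\pi}$ collects the elementary rates of the mutations and migrations carrying $\pi'$ to $\pi$, evaluated at $\vec\rho=\vec f(\nu)$ by the closure remark above; the loss terms (those with $\nu_K=\pi$) assemble into $-f_\pi(\nu)\sum_{\pi''\neq\pi}A_{\pi,\pi''}(\vec f(\nu)) = f_\pi(\nu)\,A_{\pi,\pi}(\vec f(\nu))$. Together these give precisely the $\pi$-component of $\vec f(\nu)\,A(\vec f(\nu))$.

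I do not expect a genuine obstacle, since the computation is pure jump-process bookkeeping. The two points demanding care are: (i) the closure/self-consistency observation that the configuration-dependent migration rates are recovered exactly by evaluating $A$ at the empirical measure $\vec f(\nu)$ — this is what makes $Q^L f_\pi$ expressible through $\vec f(\nu)$ alone and is ultimately responsible for the closed ODE in Theorem \ref{thm:conv-L}; and (ii) the combinatorial accounting when several populations $i$ (respectively several ordered pairs $(i,j)$) map a given $\pi'$ onto the same $\pi$, so that each entry $A_{\pi',\pi}$ must be read as the sum of the corresponding elementary rates $\mu$ and $M_{ij}h(\vec f(\nu)(i\sim j))$. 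Checking that the row sums vanish, so that the loss terms coincide with the conventional diagonal of $A$, closes the argument.
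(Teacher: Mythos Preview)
Your reading of the displayed identity as a componentwise statement about the specific vector observable $\vec f=(f_\pi)_{\pi\in\mathcal P_N}$ from (\ref{eq:MastEqGen}) is correct, and your argument --- expanding $Q^L f_\pi$ over single-locus transitions, using the closure identity $P^L_{ij}(\nu)=\vec f(\nu)(i\sim j)$, and regrouping by source partition $\pi'$ so that the $1/L$ cancels against $\sum_K\mathbf 1_{\{\nu_K=\pi'\}}=Lf_{\pi'}(\nu)$ --- is exactly the intended straightforward computation. The paper in fact omits the proof entirely, stating only that it follows by direct calculation; your write-up fills in precisely that calculation, with the two delicate points (the self-consistency of the migration rates through the empirical measure, and the summation convention for $A_{\pi',\pi}$ when several elementary moves realize the same transition) correctly flagged.
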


\begin{lemma}   \label{ConvQuadVar}
Define
\begin{eqnarray*}
  M_\pi^L (t) & := & f_{\pi} (\vec{\Pi}_t) - f_{\pi} (\vec{\Pi}_0) - \int_0^t Q^L f_{\pi}
  (\vec{\Pi}_u) d u. \nonumber \\
  & = & X^L_\pi(t) - X^L_\pi(0) - \int_0^t \left( \vec{G}(\vec{X}^L(t)) \right)_{\pi}
 d u. \label{GenMtg}
\end{eqnarray*}
Then, the quadratic variation of the martingale $M_\pi$ verifies
  \[ \langle M^{L}_\pi \rangle_t = O \left( \frac{1}{L} \right) \ \  \ \ \mbox{as $L \rightarrow
     \infty$.} \]
\end{lemma}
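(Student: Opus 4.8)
The plan is to identify $\langle M^L_\pi\rangle_t$ with the time integral of the carr\'e-du-champ operator of $Q^L$ applied to $f_\pi$, and then exploit the fact that every elementary transition of $(\vec\Pi_t)$ modifies the allelic partition at a single locus, hence moves $f_\pi$ by at most $1/L$.

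First I would recall the standard semimartingale decomposition: since $(\vec\Pi^{(L)}_t)_{t\ge 0}$ is a pure-jump Markov process with bounded jump rates and $f_\pi$ is bounded, the process $M^L_\pi$ from \eqref{GenMtg} is a square-integrable martingale whose predictable quadratic variation is
\[
\langle M^L_\pi\rangle_t = \int_0^t \Gamma^L f_\pi(\vec\Pi_u)\,du,
\qquad
\Gamma^L g := Q^L(g^2) - 2\,g\,Q^L g .
\]
For a jump process with transition-rate kernel $q^L(\nu,\nu')$ one has the classical identity $\Gamma^L g(\nu) = \sum_{\nu'} q^L(\nu,\nu')\,(g(\nu') - g(\nu))^2$, so it remains to bound the right-hand side with $g = f_\pi$, uniformly in the state $\nu\in\mathcal{P}_N^{\otimes L}$.

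Next I would use the product structure of the state space. From the description of mutation and migration events, every transition $\nu\to\nu'$ of $(\vec\Pi^{(L)}_t)$ changes exactly one coordinate $\sigma_K$ of $\nu = (\sigma_1,\dots,\sigma_L)$. Since $f_\pi(\nu) = \frac1L\sum_{K=1}^L \mathbf 1_{\{\sigma_K = \pi\}}$, altering a single coordinate moves $f_\pi$ by $0$ or $\pm\frac1L$; in particular $(f_\pi(\nu') - f_\pi(\nu))^2 \le 1/L^2$ for every admissible transition. On the other hand, the total rate of leaving any state $\nu$ is uniformly bounded: at each of the $L$ loci, mutations fire at total rate $N\mu$ and migrations at total rate $\sum_{i\neq j} M_{ij}\,h(P^L_{ij}(\nu)) \le \sum_{i\neq j} M_{ij} =: \bar M_N$, where we used that $h\le 1$ on $[0,1]$ (a consequence of $h$ being nondecreasing with $h(1)=1$). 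Hence $\sum_{\nu'} q^L(\nu,\nu') \le L\,(N\mu + \bar M_N) =: L\,C_N$, with $C_N$ independent of $L$ and of $\nu$.

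Combining the two bounds yields $\Gamma^L f_\pi(\nu) \le L\,C_N\cdot L^{-2} = C_N/L$ for every $\nu$, and therefore
\[
\langle M^L_\pi\rangle_t \;\le\; \frac{C_N\,t}{L} \;=\; O\!\left(\frac1L\right),
\]
the bound holding pathwise (hence in particular in expectation) and uniformly for $t$ in compact intervals, which is the assertion. I do not expect a genuine obstacle here; the only two points requiring a little care are the single-locus nature of the jumps (which is what forces $f_\pi$ to move by at most $1/L$) and the state-uniform bound on the total jump rate, which rests on the boundedness of the feedback function $h$. The constant $C_N$ unavoidably grows with $N$, but this is harmless since $N$ is fixed throughout.
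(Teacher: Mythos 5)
Your proposal is correct and follows essentially the same route as the paper: the paper likewise writes $\langle M^L_\pi\rangle_t$ as the time integral over all admissible single-locus transitions of (rate)$\times$(squared increment of $f_\pi$), bounds each squared increment by $1/L^2$ and the rates uniformly in $L$, and concludes $\langle M^L_\pi\rangle_t \le \tfrac{tB_N\tau_{\max}}{L}$. Your phrasing via the carr\'e-du-champ operator and the total-jump-rate bound $L\,C_N$ is just a repackaging of the same computation.
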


\begin{proof}
For any $\rho \in \mathcal{P}^{\otimes L}_N$, we denote by $\rho_{K,
  \pi'}$ the partition vector obtained from $\rho$ by changing the $K$-th
  coordinate of $\rho$ to the partition $\pi'$. Additionally, we denote by
  $\tau (\rho, \rho')$, for any $\rho, \rho' \in \mathcal{P}^{\otimes L}_N$, the
  rate of change from $\rho$ to $\rho'$.
 The quadratic variation of $M^L$ is given by
  \[ \langle M^L_\pi \rangle_t =  \int_0^t \sum_{K = 1}^L \sum_{\pi' \neq (\Pi _u)_K} (f_{\pi} ((\Pi_u)_{K,\pi'}) - f_{\pi} (\Pi_u))^2 \cdot \tau ((\Pi_u), (\Pi_u)_{K,
     \pi'}) d u . \]
    On the one hand,
     $(f_{\pi} ((\Pi_u)_{K,\pi'}) - f_{\pi} (\Pi_u))^2 \leq \frac{1}{L^2}$. 
On the other hand, the rates can be
uniformly bounded in $L$ by $0 < \tau_{\max} < \infty$. This yields
  \begin{equation}
    \langle M^L_\pi \rangle_t \leqslant \frac{t B_N \tau_{\max}}{L} ,
    \label{MajGen}
  \end{equation}
  which ends the proof.
  \end{proof}
%
%

\begin{lemma}
  \label{CorConvSub} The sequence $\{ (\vec{X}^L(t))_{t\geq0}
  \}_{L \in \mathbb{N}}$ is tight in $\mathbb{D}$.
\end{lemma}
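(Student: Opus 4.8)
The plan is to verify the Aldous tightness criterion, exploiting the semimartingale decomposition already in hand. Since $\mathcal{M}_1(\mathcal{P}_N)$ is a compact subset of $\mathbb{R}^{B_N}$ (the probability simplex indexed by the $B_N$ partitions of $[N]$), every $\vec{X}^L$ takes values in this single fixed compact set, so the compact containment condition is automatic and only the control of oscillations over small time windows remains to be shown.

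First I would record that the drift in the decomposition is uniformly bounded. By Lemma \ref{ConvQuadVar}, for each $\pi\in\mathcal{P}_N$,
\[
X_\pi^L(t) = X_\pi^L(0) + \int_0^t \bigl(\vec{G}(\vec{X}^L(u))\bigr)_\pi\,du + M_\pi^L(t),
\]
with $\langle M_\pi^L\rangle_t \le tB_N\tau_{\max}/L$ by \eqref{MajGen}. Since $0\le h\le 1$, the off-diagonal entries of $A(\vec{\rho})$ are bounded by $\max(\mu,\max_{i,j}M_{ij})$, so there is a constant $C=C(N,\mu,M)$ with $\|\vec{G}(\vec{\rho})\|\le C$ for all $\vec{\rho}\in\mathcal{M}_1(\mathcal{P}_N)$; in particular $t\mapsto\int_0^t\vec{G}(\vec{X}^L(u))\,du$ is $C$-Lipschitz, uniformly in $L$.

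Next I would check Aldous's condition. Fix $T>0$, let $(\tau_L)$ be any sequence of $[0,T]$-valued stopping times for the underlying filtration, and let $\delta_L\downarrow 0$. Then
\[
\vec{X}^L(\tau_L+\delta_L)-\vec{X}^L(\tau_L) = \int_{\tau_L}^{\tau_L+\delta_L}\vec{G}(\vec{X}^L(u))\,du + \bigl(\vec{M}^L(\tau_L+\delta_L)-\vec{M}^L(\tau_L)\bigr),
\]
where the first term has norm at most $C\delta_L\to 0$. Each $M_\pi^L$ is bounded on $[0,T+1]$ (by $2+C(T+1)$), hence a genuine $L^2$-martingale there, and optional sampling applied to $(M_\pi^L)^2-\langle M_\pi^L\rangle$ gives, once $\delta_L\le 1$,
\[
\mathbb{E}\bigl[(M_\pi^L(\tau_L+\delta_L)-M_\pi^L(\tau_L))^2\bigr] = \mathbb{E}\bigl[\langle M_\pi^L\rangle_{\tau_L+\delta_L}-\langle M_\pi^L\rangle_{\tau_L}\bigr] \le \frac{(T+1)B_N\tau_{\max}}{L},
\]
which tends to $0$ as $L\to\infty$. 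Summing over the finitely many $\pi\in\mathcal{P}_N$ and using Markov's inequality shows $\vec{M}^L(\tau_L+\delta_L)-\vec{M}^L(\tau_L)\to 0$ in probability, hence so does $\vec{X}^L(\tau_L+\delta_L)-\vec{X}^L(\tau_L)$. Together with compact containment, this is precisely Aldous's criterion, so $(\vec{X}^L)_L$ is tight in $\mathbb{D}$ (see e.g.\ \cite{billingsley2013convergence}).

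I do not anticipate a genuine obstacle: this is the standard ``uniformly bounded drift plus vanishing martingale'' tightness argument, and its two ingredients---the uniform bound on $\vec{G}$ (from $h\le 1$ and boundedness of the migration rates $M_{ij}$) and the $O(1/L)$ quadratic-variation estimate---are already established in Lemma \ref{ConvQuadVar}. The only points deserving a line of care are the passage to the vector-valued process $(X_\pi^L)_{\pi\in\mathcal{P}_N}$ rather than single coordinates (harmless, as $\mathcal{P}_N$ is finite), and checking that $M_\pi^L$ is square-integrable on the relevant interval so that optional sampling applies.
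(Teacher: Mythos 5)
Your proof is correct and follows essentially the same route as the paper: both verify the Aldous(--Rebolledo) criterion using the semimartingale decomposition of $X^L_\pi$, the uniform boundedness of the drift $\vec{G}$ (coming from $0\le h\le 1$ and bounded $M_{ij}$), and the $O(1/L)$ quadratic-variation estimate of Lemma \ref{ConvQuadVar} evaluated at stopping times via optional sampling. The only cosmetic differences are that you treat the vector-valued process at once and spell out compact containment, while the paper argues coordinate-wise and controls the martingale and finite-variation parts separately.
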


\begin{proof}
  We will use the Aldous-Rebolledo criterion for tightness, see
  {\cite{aldous1978stopping,rebolledo1980central}}. To prove
  tightness  of $\vec{X}^L$, it suffices
  to prove tightness of each coordinate.
  
  Denote $\mathbb{F}_{L}$ the natural filtration of the $\mathbb{D}$ valued process $\vec{X}^L$. Let $S, S'$ two stopping times w.r.t. $\mathbb{F}_{L}$ such that a.s. $0 \leqslant S \leqslant S'
  \leqslant S + \delta \leqslant T$ for $T \in \mathbb{R}_+$ and $\delta > 0$.
  Let $\pi \in \mathcal{P}_N$. Remark that
  \[ X^L_{\pi} (S') - X^L_{\pi} (S) = M^L_{\pi} (S') - M^L_{\pi} (S) +
     \int_S^{S'} Q^Lf_{\pi}
  (\vec{\Pi}_u) d u . \]
  We have to prove that the laws of the martingale part and of the finite
  variation part are tight. Using that $M^L_\pi$ is a martingale and the monotonicity of the quadratic variation, we get  
  \begin{eqnarray*}
    \mathbb{E} \big( | M^L_{\pi} (S') - M^L_{\pi}(S) |^2  \big) & \leqslant &
    \mathbb{E} \big( M^L_{\pi}(S')^2 - M^L_{\pi}(S)^2 \big) \\
    & \leqslant & \mathbb{E} (\langle M^L_{\pi}\rangle_{S + \delta} -
    \langle M^L_{\pi} \rangle_S)\\
     & \leqslant & \frac{B_N \tau_{\max} }{L}
    \mathbb{E} \left( \int_S^{S + \delta} d u \right) = \frac{B_N
    \tau_{\max}\delta}L ,
  \end{eqnarray*}
  where in the last inequality we have used a similar reasoning as the one yielding \eqref{MajGen}, and which allows us to deduce tightness of the martingale part.
  It remains to prove tightness of the finite variation part. This can be seen
  directly by the same argument and the uniform boundedness in $L$ of the
  generator. 
\end{proof}

\begin{proof}[Proof of Theorem \ref{thm:conv-L}]
  Because $h$ and $\vec{G}$ are $\mathcal{C}^1$ functions, there exists a unique
  solution to (\ref{ast}) by standard
  Cauchy-Lipschitz arguments. By Lemma \ref{CorConvSub} and an application of Prohorov's Theorem, there exists a subsequence of the sequence $(\vec X^L)$, that we will still denote $(\vec X^L)$ for simplicity, which converges weakly, and even a.s. by Skorohod's Theorem to some $\vec X^\infty\in \mathbb{D}$. Let us show that $\vec{X}^\infty$ is solution to (\ref{ast}).
  
Let $t>0$, and recall that 
  \[ M^L_\pi(t)  = X^L_{\pi} (t) - X^L_{\pi} (0) - \int_0^t  \left( \vec{G}(\vec{X}^L(u))\right)_{\pi} d u. \]
  On the one hand,
  Lemma \ref{ConvQuadVar} yields
  \[ \  \mathbb{E} [(M^L_\pi(t))^2] \rightarrow 0, \ \  \ \  \mbox{as $L \rightarrow \infty$} , \]
On the other hand, by continuity of $G$ and dominated convergence, $M^L_\pi(t)$
converges a.s. to 
\begin{align}
M^\infty_\pi(t):=X^\infty_{\pi} (t) - X^\infty_{\pi} (0) - \int_0^t \left( \vec{G}(\vec{X}^\infty(u))\right)_{\pi} d u .\label{masterStoInf}
\end{align}
But now $M^L_\pi(t)$ converges to 0 in $L^2$ so by uniqueness of the limit $M^\infty_\pi(t)=0$, which ends the proof of Theorem \ref{thm:conv-L}.

\end{proof}

\subsection{Duality}

Since the dimension of the ODE (\ref{ast}) is the number of
partitions of the set $[N] = \{ 1, \ldots, N \}$, the
ODE system quickly becomes intractable. Thus, we will prove a duality relation
allowing us to reduce the dimension of the system of interest to 
$N (N - 1) / 2$, the number of pairs of $[N]$.

The main idea relies on a stochastic interpretation 
of (\ref{ast}). To gain some intuition, we first recall the definition of the Moran model with mutation on a directed weighted graph.
Consider a population of individuals $1,...,N$
and a dynamical matrix $(M(t))_{t\geq 0} = (M_{ij}(t))_{t\geq0, i\neq j\in[N]}$ with non-negative entries.

The system evolves according to the following dynamics.
\begin{itemizedot}
  \item Each individual takes on a new type at rate $\mu$ (infinite-allele assumption).
  \item For $i\neq j$ and at time $t$, individual $j$ takes on the type of
  individual $i$ at rate $M_{i j}(t)$.
\end{itemizedot}
As before, we can conveniently encode the dynamics by recording the allelic partition along time.
This defines a time-inhomogeneous Markov process valued in ${\cal P}_N$.

\bigskip

Let us now introduce the nonlinear Markov process version of the latter Moran model. Informally, this amounts to assuming 
that the dynamical migration matrix $M(t)$ depends on the law of the process itself; namely, we consider the partition process $(\sigma_t; t\geq0)$ on ${\cal P}_N$ induced by a time-inhomogeneous Moran model  with dynamical matrix
\begin{equation}
 \label{DefGenProx}
\forall i\neq j\in[N], \ \ M_{ij}(t) \ = \ M_{ij} h( \hat{P}_{i j}(t) )  \ \ \mbox{where $\hat{P}_{i j}(t) \ := \ \mathbb{P}( i \sim_{\sigma_t} j ) . $}
\end{equation}
Following the terminology of \cite{neumann2023nonlinear}, $(\sigma_t)_{t\geq0}$ defines a finite-state time-inhomogeneous Markov chain 
whose semi-group is determined by the solution to a non-linear differential forward Kolmogorov equation. 

More formally, let $s > 0$. It is clear by the definition of the dynamical migration matrix $M(s)$ that at time $s$, the transition rate matrix of the partition-valued process $(\sigma_t; t\geq0)$ is given by $A(P_s)$, where 
$$
P_s = ( \mathbb{P}( \sigma_s = \pi))_{\pi\in\mathcal{P}_N} \, ,
$$
and $A$ is given in section \ref{chap:master_eq_part}.
We note that the application $P \mapsto A (P)$ is a Lipschitz continuous and
bounded function.
By Theorem 2.1 in {\cite{neumann2023nonlinear}}, there exists a unique (time-inhomogeneous) Markov process $(\sigma_t)_{t\geq0}$ valued in ${\cal P}_N$,
whose semi-group $(S(t))_{t\geq 0}$
is characterized  by the non-linear forward Kolmogorov equation
\begin{equation}
 \label{KolmForw}
\frac{d S(t)}{dt} \ = \  S(t) A( S(t) ).
\end{equation}
In particular, we recover the limiting ODE (\ref{ast}) for each coordinate of the matrix equation, i.e., for the functions 
$$
t \mapsto \mathbb{P}_{\pi,\pi'}(t) \ = \ \mathbb{P}_{\pi}( \sigma_t = \pi' ) .
$$
This justifies the interpretation of $\hat{P}_{ij}(t)$ (see (\ref{DefGenProx})) as the genetic proximity introduced in Section \ref{chap:model} between populations $i$ and $j$, in the large $L$ regime.

As in the standard Moran model \cite{etheridge2011some}, we consider
the following graphical representation on $[N]\times \mathbb{R}_+$:
\begin{itemizedot}
  \item For a reproductive event $i\rightarrow j$ at time $t$, draw an arrow with tail at $(i,t)$ and tip at $(j,t)$ 
  \item For a mutation event at site $k$ at time $t$,
 draw a $\star$ at $(k,t)$. 
 \end{itemizedot}

Via the graphical representation we discussed in section \ref{chap:ODE_Dual}, (see Fig. \ref{fig:MoranDual}), we can associate to every individual an
ancestral lineage using the arrow-star configuration. For every point $(i,t)$ with $i \in [N]$, we define $S_{(i,t)}$ to be the  ancestral lineage starting from $(i,t)$. The system of ancestral lineages $(S^{(1,t)}(s), \ldots, S^{(N,t)}(s); s\leq t)$ starting from time horizon $t>0$ 
evolves according to the following dynamics.
\begin{itemize}
\item Lineages are running backward in time and evolve independently until they coalesce.
\item A lineage jumps from $j$ to $i$
at time $s$
at rate $M_{i j} h (P_{i j} (t - s)) $. 
\item A lineage is killed (or stopped) at rate $\mu$.
\end{itemize}

We can recover the allelic partition from these ancestral lineages by
remarking that two individuals $i,j$ are in the same block at time $t$ iff the ancestral
lineages $S_{(i,t)}$ and $S_{(j,t)}$ trace back to the same type. In turn, 
the lineages
trace back to the same type if one of two events happen:
(1) The two lineages $S_{(i,t)},S_{(j,t)}$ coalesce before time $t$; or (2) the two lineages survive up to time $t$,
they do not coalesce, but
they hit two sites in the same partition, i.e., if there are $i_0, j_0$ such that $S_{(i,t)}(t)=(i_0,0)$ and $S_{(j,t)}(t)=(j_0,0)$ for some $i_0\neq j_0\in[N]$, such that $i_0 \sim_{\sigma_0} j_0$. This leads to the following proposition.

\begin{proposition} \label{prop:DynDual}  
Consider the unkilled ancestral lineages $\bar S_{(i,t)}$ and $\bar S_{(j,t)}$, i.e., the ancestral lineages 
starting from $(i,t)$ and $(j,t)$ and ignoring the killing event $\star$. (Equivalently, this amounts to setting 
$\mu=0$). Define the coalescing time
\[ T_{(i,j),t} := \sup\{ u > 0 : \bar S_{(i,t)}(u) = \bar S_{(j,t)} (u) \} . \]
If 
$$
  P_{i j}(t) \ = \ \mathbb{P}( i \sim_{\sigma_t} j ) ,
  $$
  then 
  \begin{equation}
    P_{i j} (t) =\mathbb{E} \left (e^{- 2 \mu T_{(i,j),t}};  T_{(i,j),t} \leqslant t \right) + 
    \mathbb{E} \left(e^{- 2 \mu T_{(i,j),t}} ; \bar S_{(i,t)}( (t) \sim_{\sigma_0} \bar S_{(j,t)} (t), T_{(i,j),t} > t \right) \, .  \label{DualRel}
  \end{equation}
\end{proposition}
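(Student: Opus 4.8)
The plan is to exploit the graphical (duality) representation of the nonlinear Moran process $(\sigma_t)_{t\ge 0}$ already set up above, and to trace ancestral lineages backward from the time horizon $t$ down to time $0$. First I would fix the horizon $t>0$ and the two sites $i,j\in[N]$, and consider the (killed) ancestral lineages $S_{(i,t)}, S_{(j,t)}$ together with their unkilled counterparts $\bar S_{(i,t)},\bar S_{(j,t)}$ obtained by deleting the $\star$ marks; these are coupled on the same arrow-star configuration, so $\bar S_{(\cdot,t)}$ is just $S_{(\cdot,t)}$ run without absorption at mutations. Recall that $i\sim_{\sigma_t}j$ if and only if the allele carried by $(i,t)$ and the allele carried by $(j,t)$ trace back to a common ancestral allele; by the two-case analysis sketched just before the proposition, this happens precisely when either (1) the unkilled lineages coalesce at some time $T_{(i,j),t}\le t$ \emph{and} neither lineage has encountered a $\star$ before coalescence, or (2) the unkilled lineages do not coalesce before time $t$, neither encounters a $\star$ before time $t$, and the two sites $i_0=\bar S_{(i,t)}(t)$, $j_0=\bar S_{(j,t)}(t)$ they reach at time $0$ satisfy $i_0\sim_{\sigma_0}j_0$.

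The key computation is the probability that no $\star$ is encountered along the relevant stretch of the two lineages. Conditionally on the whole arrow configuration (equivalently, on the trajectories of $\bar S_{(i,t)},\bar S_{(j,t)}$), mutation marks along each site form an independent rate-$\mu$ Poisson process, and the two lineages occupy disjoint sites before they coalesce (and share a single site afterward). Hence on the event $\{T_{(i,j),t}\le t\}$, the two lineages together sweep out a total "time-length" of exactly $2T_{(i,j),t}$ before coalescing (each runs for duration $T_{(i,j),t}$ on its own site), and the probability of seeing no $\star$ on either of them up to coalescence is $e^{-2\mu T_{(i,j),t}}$; after coalescence the common allele's history is irrelevant to whether $i\sim_{\sigma_t}j$. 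Similarly, on $\{T_{(i,j),t}>t\}$ the two lineages run disjointly for the full duration $t$ each, contributing a survival factor $e^{-2\mu t}$; but note $e^{-2\mu T_{(i,j),t}}$ is \emph{not} the same as $e^{-2\mu t}$ here since $T_{(i,j),t}>t$ — so in the second term one should read $e^{-2\mu T_{(i,j),t}}$ as the appropriate survival weight, and care is needed to state it consistently with the paper's convention (I would take the $\sup$ convention so that on $\{T_{(i,j),t}>t\}$ the relevant no-mutation probability is $e^{-2\mu t}$ and record this carefully; if the statement literally has $e^{-2\mu T_{(i,j),t}}$ in the second term, then $T$ must there be interpreted as the truncation $T\wedge t = t$). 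Taking expectations over the arrow-star configuration, partitioning on the two disjoint events $\{T_{(i,j),t}\le t\}$ and $\{T_{(i,j),t}>t\}$, and on the second event additionally inserting the indicator $\mathbf 1_{\{i_0\sim_{\sigma_0}j_0\}}$ (which, by the strong Markov property / independence of the time-$0$ coloring $\sigma_0$ from the arrow-star configuration in $(0,t]$, is handled by conditioning on the endpoints), yields exactly \eqref{DualRel}.

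The steps, in order: (i) set up the coupling of killed and unkilled lineages on a common graphical representation and recall the "common ancestral allele" characterization of $i\sim_{\sigma_t}j$; (ii) condition on the arrow configuration and compute, via the Poisson property of $\star$-marks along disjoint/coalesced lineage segments, the conditional no-mutation probability as $e^{-2\mu(T_{(i,j),t}\wedge t)}$; (iii) on the non-coalescence event, use the Markov property at time $0$ to replace the event "lineages reach a common block of $\sigma_0$" by $\mathbf 1_{\{\bar S_{(i,t)}(t)\sim_{\sigma_0}\bar S_{(j,t)}(t)\}}$; (iv) combine by the law of total expectation over the two complementary events. The main obstacle is (ii): one must argue carefully that before coalescence the two lineage trajectories occupy disjoint sites at every instant — which is immediate from the definition of coalescence time as the last time they differ... actually, with the $\sup$ convention $T_{(i,j),t}=\sup\{u:\bar S_{(i,t)}(u)=\bar S_{(j,t)}(u)\}$ one should instead note that coalescing systems of random walks, once equal, stay equal, so $\{u:\bar S_{(i,t)}(u)=\bar S_{(j,t)}(u)\}$ is an interval $[T_{(i,j),t},t]$ (or empty), and the two lineages are indeed at distinct sites on $[0,T_{(i,j),t})$; hence the Poisson thinning bookkeeping for the mutation-free event is valid. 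A secondary subtlety is being scrupulous about the time-direction and the consistency between the notation $T_{(i,j),t}$ here and $T_{ij}$ in \eqref{eq:FxPtPb}, and about which factor ($e^{-2\mu T}$ vs $e^{-2\mu t}$) appears in the second term; I would make the $\wedge t$ truncation explicit to avoid ambiguity.
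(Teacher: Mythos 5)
Your plan is essentially the paper's own argument: the proposition is justified there only by the two-case graphical-representation discussion immediately preceding it (either the backward lineages coalesce before reaching time $0$, or both survive to time $0$ without coalescing and land in a common block of $\sigma_0$, with killing at rate $\mu$ on each of the two disjoint lineages producing the factor $e^{-2\mu T_{(i,j),t}}$), and your steps reproduce exactly that reasoning with the correct Poisson-mark bookkeeping. Your remarks that the weight in the second term should be read as $e^{-2\mu(T_{(i,j),t}\wedge t)}=e^{-2\mu t}$ and that the $\sup$ in the definition of the coalescing time should be understood as the first meeting time (as in Theorem \ref{thm:FixPtPb}) are fair readings of the paper's notation, not gaps in your argument.
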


With the help of Proposition \ref{prop:DynDual}, we can establish an ODE system for the genetic proximities.
\begin{corollary}
  \label{cor:ODE}The
  genetic proximities $P_{i j}$ solve the following system of ordinary
  differential equations,
  \begin{eqnarray}
    \frac{d P_{i j}}{d t} (t) & = & \sum_{k = 1}^N (M_{k i} h (P_{k i} (t))
    P_{k j} (t) + M_{k j} h (P_{k j} (t)) P_{k i} (t)) \label{eq:LinSysGenProx}
    \\
    & - & P_{i j} (t) \left( \sum_{k = 1}^N (M_{k i} h (P_{k i} (t)) + M_{kj} h (P_{k j} (t)) ) + 2 \mu \right), \nonumber
  \end{eqnarray}
  where we recall that $M_{k k} = 0$ and  $P_{k k} (t) = 1$ for all $k$. 
\end{corollary}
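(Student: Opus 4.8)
The plan is to combine the dual (genealogical) representation supplied by Proposition \ref{prop:DynDual} — equivalently the identity $P_{ij}(t)=\mathbb{P}(i\sim_{\sigma_t}j)$ for the nonlinear Moran process, established in Section \ref{chap:master_eq_part} via Theorem \ref{thm:conv-L} — with a first-step (infinitesimal) analysis of the system of ancestral lineages. Fix $i\neq j$ and a small $\varepsilon>0$, and run the coalescing, killed ancestral lineages started from the time horizon $t+\varepsilon$. By the (time-inhomogeneous) Markov property of the lineage system, conditioning on the configuration of the two lineages at forward-time $t$, i.e. after the most recent window $[t,t+\varepsilon]$ of length $\varepsilon$, the remaining (backward) evolution is distributed exactly as the dual started from horizon $t$, whose ``same-type'' probabilities are the $P_{kl}(t)$ by Proposition \ref{prop:DynDual}. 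I would therefore decompose $P_{ij}(t+\varepsilon)$ according to what happens to the $(i,t+\varepsilon)$- and $(j,t+\varepsilon)$-lineages within this window.

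The transitions, each of probability $\varepsilon\cdot(\text{rate})+o(\varepsilon)$ by continuity of $P_{\cdot\cdot}$ and of $h$, are: (a) with rate $M_{ki}h(P_{ki}(t))$ the $i$-lineage jumps to site $k$, after which the dual from horizon $t$ contributes $P_{kj}(t)$, the convention $P_{jj}(t)=1$ taking care of the case $k=j$, i.e. of immediate coalescence; (b) symmetrically, with rate $M_{kj}h(P_{kj}(t))$ the $j$-lineage jumps to $k$, contributing $P_{ki}(t)$ (with $P_{ii}(t)=1$ if $k=i$); (c) with total rate $2\mu$ one of the two lineages is killed, after which the two sites carry independent fresh alleles and the contribution is $0$; (d) with the complementary probability $1-\varepsilon\big(\sum_k M_{ki}h(P_{ki}(t))+\sum_k M_{kj}h(P_{kj}(t))+2\mu\big)+o(\varepsilon)$ nothing happens in the window and the contribution is $P_{ij}(t)$. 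Since $M_{kk}=0$, all sums may be extended over all $k\in[N]$. Collecting the terms, subtracting $P_{ij}(t)$, dividing by $\varepsilon$ and letting $\varepsilon\to0$ gives exactly \eqref{eq:LinSysGenProx}; differentiability of $t\mapsto P_{ij}(t)$ need not be assumed separately, since either the first-step computation exhibits the derivative directly, or one invokes $P_{ij}(t)=\sum_{\pi:\,i\sim_\pi j}X_\pi(t)$ together with the $C^1$ regularity of $\vec X$ from Theorem \ref{thm:conv-L}.

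The step I expect to require the most care is the bookkeeping of coalescence and killing: one must verify that a jump of the $i$-lineage onto the current site of the $j$-lineage (and vice versa) is correctly absorbed into the generic sums through the convention $P_{kk}\equiv 1$, that a killing event contributes $0$ to the value yet its rate $2\mu$ still appears in the ``escape'' term, and that the probability of two or more events occurring in $[t,t+\varepsilon]$ is genuinely $O(\varepsilon^2)$, so that all cross terms are negligible. As a cross-check, one may instead argue purely analytically: writing $P_{ij}(t)=\sum_{\pi:\,i\sim_\pi j}X_\pi(t)$ and differentiating via $\dot{\vec X}=\vec X A(\vec X)$, the only single transitions in $A$ that alter the $i$–$j$ relation are the mutations $s_i,s_j$ and the migrations $\sigma_{j\to k},\sigma_{i\to k}$, and regrouping their contributions reproduces the same right-hand side \eqref{eq:LinSysGenProx}.
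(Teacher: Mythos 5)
Your proposal is correct and follows essentially the same route as the paper's proof: a first-step decomposition of the dual ancestral-lineage representation (Proposition \ref{prop:DynDual}) over an infinitesimal backward window, with single jumps contributing the $M_{ki}h(P_{ki})P_{kj}$-type terms (coalescence absorbed via $P_{kk}=1$), multiple events being $O(\varepsilon^2)$, and the limit yielding \eqref{eq:LinSysGenProx}. The only cosmetic differences are that the paper discounts killing through the factor $e^{-2\mu\,dt}$ in the no-jump case rather than treating it as a rate-$2\mu$ zero-contribution event, and expands $P_{ij}(t)$ from horizon $t$ using continuity $P_{kj}(t-dt)=P_{kj}(t)+o(1)$ instead of expanding $P_{ij}(t+\varepsilon)$ against the exact horizon $t$; both are first-order equivalent.
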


\begin{proof} 
 We will only show the result where the initial condition is given by the
  singleton partition. The general case can be proved along the same lines.
  
  We will use Proposition \ref{prop:DynDual}, and decompose the expectation according to the possible jumps of the unkilled random walks $S_i \assign \bar S_{(i,t)}$ and $S_j \assign \bar S_{(j,t)}$ in a small
  interval of time of length $d t > 0$, for $i\neq j$. Denote the number of jumps
  of the process $(S_i, S_j)$ in the time interval $[0, s)$, where $ 0 < s
\leq t  $, as $\mathcal{N} ([0, s)) = (\mathcal{N}_i ([0, s)),
  \mathcal{N}_j ([0, s)))$. 
  Then define 
  \begin{eqnarray*}
    A_{0}(s) &:=&  \{\mathcal{N} ([0, s))= (0, 0)\}\\
 A_{1,i}(s) &:=&  \{\mathcal{N} ([0, s))= (1, 0)\}\\
 A_{1,j}(s) &:=&  \{\mathcal{N} ([0, s))= (0, 1)\}\\
 A_{2}(s) &:=&  \{\mathcal{N}_i ([0, s)\ge 1, \mathcal{N}_j ([0, s)\ge 1\}.
      \end{eqnarray*}
  Denoting $Y_{ij}(t):=e^{- 2 \mu T_{(i,j),t}} \tmstrong{1}_{\{
    T_{(i,j),t} \leqslant t \}}$, we get
$$
    P_{i j} (t) =\mathbb{E} (Y_{ij}(t))=\Delta_0 (d t) + \Delta_{1, i} (d t) + \Delta_{1, j} (d t) +
    \Delta_2 (d t),
$$
where
\begin{eqnarray*}
    \Delta_{0}(dt) &:=& \mathbb{E} (Y_{ij}(t)\tmstrong{1}_{A_0(dt)})\\
 \Delta_{1,i}(dt) &:=& \mathbb{E} (Y_{ij}(t)\tmstrong{1}_{A_{1,i}(dt)})\\
  \Delta_{1,j}(dt) &:=& \mathbb{E} (Y_{ij}(t)\tmstrong{1}_{A_{1,j}(dt)})\\
  \Delta_{2}(dt) &:=& \mathbb{E} (Y_{ij}(t)\tmstrong{1}_{A_{2}(dt)}).
      \end{eqnarray*}
  
  The last quantity will be of order $d t^2$, and hence vanish in
  the limit when we divide by $d t$.

  {\tmstrong{{\tmstrong{Case}} 1}}: $S_i$ jumps once.
  
  {\tmstrong{Case 1.1}}: $S_i$ jumps to $k \neq j$. Then, define
  \[ A_{i \rightarrow k, d t} = \{ d t < T_{(i,j),t}\leqslant t \} \cap \{
     \mathcal{N} ([0, d t)) = (1, 0), S_i (d t) = k \}, \]
  On this event, we have
  \[ T_{(i,j),t} = T_{(k,j),t - d t} + d t . \]
  The probability that the random walk starting from $i$ jumps exactly once on
  the interval $[0, d t)$ to location $k$ is given by
  \[ M_{ki} h (P_{ki} (t)) \cdot d t + o (d t), \]
  which follows from the continuity of the function $h \circ P_{ki}$.
  
  {\tmstrong{Case 1.2}}: $S_i$ jumps to $j$. We consider the event where
  coalescence happens on the time interval $[0, d t]$. The corresponding probability is given by
  \[ M_{j i} h (P_{j i} (t)) \cdot d t + o (d t), \]
  and the coalescence time $T_{(i,j),t}$ equals the jump time.
  
  Putting cases 1.1 and 1.2 together, we obtain
  \begin{eqnarray*}
    \Delta_{1, i} (dt)& = & d t \sum_{k \neq j} M_{k i} h (P_{k i} (t)) \mathbb{E}
    (e^{- 2 \mu (T_{(k,j),t - d t} + d t)} \tmstrong{1}_{\{ T_{(k,j),t - d t}
    \leqslant t - d t \}})\\
    & + & d t \cdot M_{j i} h (P_{j i} (t)) + o (d t) .
  \end{eqnarray*}
  Since the probability to see an event in a time interval of length $dt$ converges to zero, we get
  \[ \mathbb{E}
    (e^{- 2 \mu (T_{(k,j),t - d t} + d t)} \tmstrong{1}_{\{ T_{(k,j),t - d t}
    \leqslant t - d t \}}) = \mathbb{E} (e^{- 2 \mu T_{(k,j),t}}
     \tmstrong{1}_{\{ T_{(k,j),t} \leqslant t \}}) + o (1) = P_{k j} (t) + o (1), \]
  and thus
  \begin{eqnarray*}
    \frac{\Delta_{1, i} (d t)}{d t} & \xrightarrow{d t \rightarrow 0} & 
    \sum_{k \neq j} M_{k i} h (P_{k i} (t)) P_{k j} (t) + M_{j i} h (P_{j i}
    (t)) .
  \end{eqnarray*}
  
  {\tmstrong{Case 2}}: $S_j$ jumps once. The same arguments as in case 1 can
  be applied.
  
  {\tmstrong{Case 3}}: Neither $S_i$, nor $S_j$ jumps. We remark that
  conditionally on the event $A_0(dt) = \{ \mathcal{N} ([0, d t)) = (0, 0) \}$,
  the coalescence time is given by
  \[ T_{(i,j),t} = T_{(i,j),t - d t} + d t . \]
  Hence,
  \begin{align*}
    \Delta_0(dt) = \left( 1 - d t \cdot \left( \sum_{k = 1}^N M_{k i} h (P_{k i} (t)) + M_{k j} h (P_{k j} (t)) \right) + o (d t) \right) \cdot P_{i j}
    (t - d t) e^{- 2 \mu d t} .
  \end{align*}
  Finally, we obtain
  \begin{eqnarray*}
    \lim_{d t \downarrow 0} \frac{P_{i j} (t + d t) - P_{i j} (t)}{d t} & = &
    \sum_{k = 1}^N (M_{k i} h (P_{k i} (t)) P_{k j} (t) + M_{k j} h (P_{k j}
    (t)) P_{k i} (t))\\
    & - & \left( \sum_{k = 1}^N (M_{k i} h (P_{k i} (t)) + M_{k j} h (P_{k j} (t))) + 2 \mu \right) \cdot P_{i j} (t),
  \end{eqnarray*}
  which yields the desired result.
\end{proof}

This concludes the proof of Theorem \ref{thm:mast_eq}.

\section{Equilibria and stability} \label{sec:eq_stab}

This section is devoted to the study of the equilibria of the master equation and their stability. Consider a solution $P = (P_{i j})_{i \neq j}$ to the master equation such that $\vec{F}(P) = 0$. To determine stability, we study the Jacobian of $\vec F$ given by
\begin{align*}
  \frac{\partial \vec{F} (P)_{i j}}{\partial P_{i j}} = (M_{i j} + M_{j i} )
  h' (P_{i j}) (1 - P_{i j}) - \left( \sum_{k = 1}^N (M_{k i} h (P_{k i}) + M_{k j} h (P_{k j})
  ) + 2 \mu \right) 
\end{align*}
on the diagonal, and for $k \neq i, j$,
\begin{align*}
  \frac{\partial \vec{F} (P)_{i j}}{\partial P_{k i}} = M_{k i} h' (P_{k i}) P_{k j} + M_{k j} h (P_{k j}) - P_{i j} M_{k i} h' (P_{k i}) \, .
\end{align*}

In the previous section, we derived the master equation via a duality approach which relied on analysis of coalescing random walks with inhomogeneous jump rates depending on $P_{ij}(t-s)$. Since we are now studying the system at equilibrium, we can interpret the jump rates as the edge weights of a static graph, which we will call the effective migration graph. 
\begin{definition}[Dual effective migration graph]\label{def:results}
The dual effective migration graph $M^\T{eq}$ associated to an equilibrium $P^\T{eq}$ is the graph with vertices $[N]$ and directed edge weights  given from $i$ to $j$ by $M^{eq}_{ij} = M_{ji}h(P^\T{eq}_{ij})$. 
\end{definition}
With the help of the dual effective migration graph, the genetic proximities at equilibrium can be expressed by a fixed point problem.
\begin{theorem}[Fixed point problem]
  \label{thm:FixPtPb} 
 Let $P^\text{eq} = (P^\text{eq}_{i j})_{i \neq
  j}$ be an equilibrium for the system of genetic proximities (\ref{eq:LinSysGenProx}). 
  Consider the unkilled ancestral lineages $\bar S_i$ resp. $\bar S_j$ starting from $i$ resp. $j$ on the dual effective migration graph $M^\T{eq}$, i.e., with jump rates given by its weighted edges. Define the coalescing time
\[ T_{i  j} := \inf\{ u > 0 : \bar S_i(u) = \bar S_j (u) \} . \] Then, $P^\text{eq}$ satisfies the fixed point problem
    \begin{equation}
P^\text{eq}_{ij} =\mathbb{E} \left( e^{- 2 \mu T_{i j}(P^\text{eq})} \right)
      \label{app:FxPtPb}
    \end{equation}
\end{theorem}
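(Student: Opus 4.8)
The plan is to derive the fixed point problem \eqref{app:FxPtPb} as a corollary of Proposition \ref{prop:DynDual}, by taking the time horizon $t\to\infty$ and using that the system sits at equilibrium. The starting point is the dual relation \eqref{DualRel}, which at a finite time horizon $t$ reads
\[
P_{ij}(t) = \mathbb{E}\!\left(e^{-2\mu T_{(i,j),t}};\, T_{(i,j),t}\leq t\right) + \mathbb{E}\!\left(e^{-2\mu T_{(i,j),t}};\, \bar S_{(i,t)}(t)\sim_{\sigma_0}\bar S_{(j,t)}(t),\, T_{(i,j),t}>t\right).
\]
Now specialize to the equilibrium: set $P_{ij}(s)=P^\text{eq}_{ij}$ for all $s$, so that the jump rates of the ancestral lineages become time-homogeneous, given exactly by the weighted edges of the dual effective migration graph $M^\text{eq}$ of Definition \ref{def:results} (namely $\bar S$ jumps from $j$ to $i$ at rate $M_{ij}h(P^\text{eq}_{ij})=M^\text{eq}_{ji}$). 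In this stationary regime the unkilled coalescing time $T_{(i,j),t}$ does not depend on the horizon $t$ in law (the lineages run backward with stationary rates), and we may identify it with the random variable $T_{ij}(P^\text{eq})=\inf\{u>0:\bar S_i(u)=\bar S_j(u)\}$ of the theorem statement.

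The second step is to dispose of the boundary term. First I would argue that the unkilled lineages coalesce almost surely, i.e. $T_{ij}(P^\text{eq})<\infty$ a.s.\ on the event that $i$ and $j$ lie in the same connected component of the effective migration graph; if they lie in different components then $T_{ij}=\infty$ and $e^{-2\mu T_{ij}}=0$, and on the left side $P^\text{eq}_{ij}$ must also vanish (this is the point already made in the main text around Remark \ref{rem:transitivity}: populations in different species have $P^\text{eq}_{ij}=0$; conversely if $h(P^\text{eq}_{ij})>0$ via a chain the walks coalesce a.s.). On the coalescing event, $\mathbf{1}_{\{T_{(i,j),t}>t\}}\to 0$ as $t\to\infty$ and $e^{-2\mu T_{(i,j),t}}$ is bounded by $1$, so dominated convergence kills the second expectation, while $\mathbf{1}_{\{T_{(i,j),t}\leq t\}}\to 1$ in the first, giving $P^\text{eq}_{ij}=\mathbb{E}\big(e^{-2\mu T_{ij}(P^\text{eq})}\big)$. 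Alternatively — and perhaps more cleanly — one can observe that at equilibrium the left side $P_{ij}(t)=P^\text{eq}_{ij}$ is constant in $t$, so the identity \eqref{DualRel} holds for every $t$; letting $t\to\infty$ in the right side and using monotone/dominated convergence as above yields the claim without ever having to worry that the left side might drift.

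I expect the main obstacle to be the careful handling of the coalescence-at-infinity and the boundary term: one needs that, conditioned on being in the same component, two independent continuous-time random walks on the finite graph $M^\text{eq}$ meet in finite time a.s.\ (standard for irreducible finite-state chains observed on the component, since the "difference" process is absorbed at the diagonal with positive probability from every state), and that the contribution of the surviving-but-not-yet-coalesced event vanishes. A secondary subtlety is the precise meaning of $T_{ij}(P^\text{eq})$ versus $T_{(i,j),t}$: because the rates are stationary, the time-reversed lineage system is itself a time-homogeneous Markov chain and the supremum in Proposition \ref{prop:DynDual} becomes the infimum in the theorem (the lineages, once merged, stay merged), so these two coalescence times agree in distribution; I would spell this identification out explicitly. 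Everything else is a direct substitution into Proposition \ref{prop:DynDual}, so the proof is short modulo these two points.
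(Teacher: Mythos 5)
Your proposal is correct and follows essentially the same route as the paper, whose entire proof is the one-line observation that the claim follows from \eqref{DualRel} by letting $t\to\infty$; your write-up merely supplies the details (stationarity of the dual jump rates at equilibrium, identification of $T_{(i,j),t}$ with $T_{ij}(P^\text{eq})$, and the vanishing of the boundary term). The only remark worth making is that your almost-sure-coalescence discussion is not actually needed: on the event $\{T_{(i,j),t}>t\}$ the integrand is bounded by $e^{-2\mu t}$, so the second expectation in \eqref{DualRel} vanishes automatically as $t\to\infty$, and the first converges by monotone convergence with the convention $e^{-2\mu T}=0$ on $\{T=\infty\}$, which also covers the case where $i$ and $j$ lie in different components.
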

\begin{proof}
  The proof easily follows from (\ref{DualRel}) by letting $t\to\infty$.
\end{proof}
\begin{remark}\label{rem:transitivity}
Each pair of populations  belonging to the same species has nonzero proximity. Indeed, for any populations $i$ and $j$ in the same species, there is at least one path of intermediary populations in the dual effective migration graph connecting $i$ and $j$, so that $T_{ij}(P^\text{eq})$ is  finite with positive probability. Then, equation \eqref{app:FxPtPb} ensures that $P^\text{eq}_{ij}\not=0$.\end{remark}

\begin{remark}\label{rem:glob_loc}
    The concept of the dual effective migration graph can significantly simplify stability considerations. Indeed, under the assumption $h'(0) = 0$ , the stability of an equilibrium is equivalent to the stability of each connected component in the associated dual effective migration graph (see Proposition \ref{prop:glob_loc}). This allows us to rule out the fusion of well separated species upon secondary contact in the stability analysis. In other words, this assumption entails that speciation is irreversible in any ensemble of species complexes.
\end{remark} 

\begin{proposition}\label{prop:glob_loc}
   Assume that $h$ verifies $h'(0) = 0$. Let $P^\text{eq} = (P^\text{eq}_{i j})_{i \neq j}$ an equilibrium for the system of genetic proximities (\ref{eq:LinSysGenProx}). Then, the stability of $P^\text{eq}$ is equivalent to the stability of $P^\text{eq}$ restricted to any  connected component of the dual effective migration graph. More precisely, $P^\text{eq}$ is (locally) stable iff for every connected component $S$ of $M^\T{eq}$, the modified equilibrium $P^{\text{eq},S}$ given by
    $$
    P^{\text{eq},S}_{i j} = \tmstrong{1}_{\{i\in S, j\in S\}} \cdot P^\text{eq}_{i j} \, ,
    $$ is such that for every eigenvalue $\lambda$ of the Jacobian $J (P^{\text{eq},S})$, we have $\mbox{Re}(\lambda) < 0$.
\end{proposition}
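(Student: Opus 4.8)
The plan is to read the block structure of the Jacobian $J:=J(P^{\text{eq}})$ off the formulas for $\partial\vec F/\partial P$ recorded at the start of Section~\ref{sec:eq_stab}, and to show that the hypothesis $h'(0)=0$ decouples $J$ along the connected components. I would start from two structural inputs. First, write $[N]=S_1\sqcup\dots\sqcup S_r$ for the connected components of $M^\T{eq}$; by Theorem~\ref{thm:FixPtPb} (see Remark~\ref{rem:transitivity}), if $i,j$ lie in different components the dual lineages from $i$ and $j$ can never coalesce, so $P^{\text{eq}}_{ij}=0$ and hence $M_{ij}h(P^{\text{eq}}_{ij})=M_{ji}h(P^{\text{eq}}_{ij})=0$. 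Second, recall that $\vec F_{ij}$ depends only on $P_{ij}$ and on the $P_{mi},P_{mj}$, $m\in[N]$. I then order the $\binom{N}{2}$ coordinates so that the within-component pairs come first, grouped component by component, and the between-component pairs last.

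With this ordering I claim $J$ is block lower triangular, with leading block $\mathrm{diag}\bigl(A_{S_1},\dots,A_{S_r}\bigr)$ and a trailing block $D$ (the derivatives of the between-component equations with respect to the between-component coordinates). The key computation is that a within-component row $(i,j)$, say $i,j\in S_\ell$, has a vanishing entry in every column $\{i,m\}$ or $\{j,m\}$ with $m\notin S_\ell$: the relevant derivative is $M_{mi}h'(P^{\text{eq}}_{mi})P^{\text{eq}}_{mj}+M_{mj}h(P^{\text{eq}}_{mj})-P^{\text{eq}}_{ij}M_{mi}h'(P^{\text{eq}}_{mi})$, and since $P^{\text{eq}}_{mi}=P^{\text{eq}}_{mj}=0$ we get $h'(P^{\text{eq}}_{mi})=h'(0)=0$ and $h(P^{\text{eq}}_{mj})=h(0)=0$, killing all three terms. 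This is exactly where $h'(0)=0$ is used, through the term carrying the possibly nonzero factor $P^{\text{eq}}_{ij}$. The same fact shows within-$S_\ell$ rows have nonzero entries only in within-$S_\ell$ columns, giving the block-diagonal leading block; hence $\mathrm{spec}(J)=\bigcup_\ell\mathrm{spec}(A_{S_\ell})\cup\mathrm{spec}(D)$. Moreover $A_{S_\ell}$ equals the within-$S_\ell$ block of $J(P^{\text{eq},S_\ell})$, since $P^{\text{eq}}$ and $P^{\text{eq},S_\ell}$ agree on every coordinate seen by $\vec F_{ij}$ for $(i,j)\in S_\ell$; and one checks directly that $P^{\text{eq},S}$ is itself an equilibrium of $\vec F$ (each coordinate that was zeroed out carries a factor $h(0)=0$ in every term of $\vec F_{ij}$ where it could contribute).

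The crux — and the step I expect to require the most care — is to show that the trailing block $D$ has all eigenvalues with strictly negative real part, with no stability hypothesis. For a between-component pair $(i,j)$ one has $P^{\text{eq}}_{ij}=0$ and $h'(0)=0$, so the diagonal entry of $D$ in row $(i,j)$ is $-c_{ij}$ with $c_{ij}=\sum_{k}\bigl(M_{ki}h(P^{\text{eq}}_{ki})+M_{kj}h(P^{\text{eq}}_{kj})\bigr)+2\mu\ge 2\mu$, while the off-diagonal entries of $D$ in that row are the numbers $M_{mj}h(P^{\text{eq}}_{mj})\ge 0$ (over $m$ with $\{i,m\}$ between-component) and $M_{mi}h(P^{\text{eq}}_{mi})\ge 0$ (over $m$ with $\{j,m\}$ between-component), which sit in distinct columns. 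Thus the off-diagonal row sum is at most $\sum_{m\neq i,j}\bigl(M_{mi}h(P^{\text{eq}}_{mi})+M_{mj}h(P^{\text{eq}}_{mj})\bigr)=c_{ij}-2\mu<c_{ij}$. Strict diagonal dominance plus Gershgorin's theorem then give $\mathrm{Re}(\lambda)\le-2\mu<0$ for every eigenvalue $\lambda$ of $D$. The identical estimate applies to the trailing block $D_S$ of $J(P^{\text{eq},S})$ for any fixed component $S$ (now ``between-component'' means ``not a within-$S$ pair''), so that $\mathrm{spec}\bigl(J(P^{\text{eq},S})\bigr)=\mathrm{spec}(A_S)\cup\mathrm{spec}(D_S)$ with $\mathrm{Re}(\mathrm{spec}(D_S))<0$ automatically.

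Putting the pieces together, and reading ``stable'' as ``all eigenvalues of the Jacobian have strictly negative real part'': $P^{\text{eq}}$ is stable iff $\mathrm{Re}\,\lambda<0$ for every eigenvalue of each $A_{S_\ell}$ and of $D$, the condition on $D$ being automatic; while $J(P^{\text{eq},S})$ has all eigenvalues in the open left half-plane iff $\mathrm{Re}\,\lambda<0$ for every eigenvalue of $A_S$, the $D_S$ part again automatic. Since the connected components of $M^\T{eq}$ are precisely $S_1,\dots,S_r$, these two conditions coincide, which is the claim. Singleton components need no separate treatment: for $S=\{i\}$ there are no within-$S$ pairs, $A_S$ is empty, and the condition holds vacuously, consistently with $P^{\text{eq},\{i\}}$ being the total-isolation equilibrium $0$, whose Jacobian is $-2\mu\,\mathrm{Id}$.
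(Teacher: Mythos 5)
Your proof is correct and follows essentially the same route as the paper: you use $h'(0)=0$ together with the vanishing of cross-component proximities (justified via the fixed-point representation) to decouple the Jacobian into within-component blocks plus a cross-component block, and then show that the latter automatically has spectrum in $\{\mathrm{Re}\,\lambda\le -2\mu\}$. The only differences are minor: the paper phrases the decoupling through invariant subspaces and bounds the cross-component block by writing it as $-2\mu\,\mathbf{I}$ plus a Markov transition-rate matrix, whereas you order the coordinates to obtain a block-triangular matrix and invoke Gershgorin; your explicit verification that $P^{\text{eq},S}$ is itself an equilibrium and that $J(P^{\text{eq},S})$ decomposes in the same way fills in a step the paper leaves implicit.
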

\begin{proof} Let $S_1, ..., S_n$ the connected components of $M^\T{eq}$. By abuse of notation we will define for any connected component $S$ the relation $\sim_S$ by $i\sim_S j$ whenever $P^{\text{eq},S}_{ij}\not=0$. Recall from Remark \ref{rem:transitivity} that $\sim_S$ is transitive.
We define the vector subspaces forming a direct sum
$$
E_\sim \assign \{  \vec{y} = (y_{i j})_{ i \neq j}: y_{k l} = 0 \text{ if }\exists p\in[n] \text{ such that } k \sim_{S_p} l\}
$$
and
$$
E_{\not\sim,p} \assign \{  \vec{y} = (y_{i j})_{ i \neq j}: y_{k l} = 0 \text{ if } k \not\sim_{S_p} l \} .
$$
for all $p\in[n]$.

We want to show that $J \assign J_{\vec{F}} (P^\text{eq})$ is stable iff $J$ restricted to $E_{\not\sim, p}$ is stable for all $p\in[n]$. Let us first show that $J$ verifies $J(E_{\sim}) \subset E_{\sim}$ and $J(E_{\not\sim,S}) \subset E_{\not\sim,S}$, for every connected component $S$, which yields the decomposition of the eigenvalues of $J$ in terms of the eigenvalues restricted to $E_{\sim}$ and the $E_{\not\sim,p}$.
Let $\vec{y}\in E_{\sim}$, and $i,j$ such that $i\sim_S j$. We have 
$$
(J \cdot y)_{i j} = \sum_{(k,l)} J_{(i j),(k l)} y_{k l} = \sum_{(k,l): k\not\sim_S l} J_{(i j),(k l)} y_{k l} ,
$$
where we used the definition of $E_{\sim}$. 
In all cases when  $\{i,j\}\cap \{k,l\}=\varnothing$,  $\frac{\partial \vec{F} (P^\text{eq})_{i j}}{\partial P_{k l}}=0$.
Hence, let us compute $\frac{\partial \vec{F} (P^\text{eq})_{i j}}{\partial P_{k i}}$, for $k$ and $i$ such that $i \not\sim_S k$. Since $\sim_S$ is transitive, we have $j \not\sim_S k$.
Thus,
$$
\frac{\partial \vec{F} (P^\text{eq})_{i j}}{\partial P_{i k}} = M_{k i} h'(P^\text{eq}_{i k}) (P^\text{eq}_{k j} - P^\text{eq}_{i j}) + M_{k j} h(P^\text{eq}_{j k}) = 0 ,
$$
since $h$ verifies $h'(0) = 0$. 
Thus $J(E_{\sim})\subset E_{\sim}$.

Let now $p\in[n],\vec{y}\in E_{\not\sim,p}$, and $i,j$ such that $i \not\sim_{S_p} j$. We have 
$$
(J \cdot y)_{i j} = \sum_{(k,l)} J_{(i j),(k l)} y_{k l} = \sum_{(k,l): k\sim_{S_p} l} J_{(i j),(k l)} y_{k l} .
$$
Let $k$ such that $i\sim_{S_p} k$. Transitivity of $\sim_{S_p}$ yields $j \not\sim_{S_p} k$, and thus $P^\text{eq}_{i j} = P^\text{eq}_{j k} = 0$. Thus $J_{(ij),(ik)} = 0$, and therefore $J(E_{\not\sim,p})\subset E_{\not\sim,p}$. This implies
$$
\text{Sp}(J) = \text{Sp}(J|_{E_\sim}) \cup \bigcup_{p=1}^n \text{Sp}(J|_{E_{\not\sim,p}}) \, ,
$$
because for every tuple $(kl)$, the definition of $v \in E_\sim$ and $u_1 \in E_{\not\sim,1}, \dots, u_n \in E_{\not\sim,n}$, as well as the above assure that there is exactly one vector in $Jv,Ju_1,\dots,Ju_n$ that has a non-zero entry at $(kl)$.
It remains to show that for all $\lambda\in \text{Sp}(J|_{E_{\sim}})$, $\text{Re}(\lambda)<0$. 

The natural basis of $E_{\sim}$ is indexed by the set $K$ of unordered pairs $(ij)$ such that $i\not\sim_Sj$ (for all connected components $S$) and the  representative matrix of $J|_{E_{\sim}}$ in this basis is given for all $(ij)\in K$ by
$$
(J|_{E_{\sim}})_{(i j),(i k)} = 
\textbf{1}_{\{\exists p\in[n]:j \sim_{S_p} k\}}
M_{k j} h(P^\text{eq}_{k j}) 
$$
for $k\not= j$, and for $k=j$ (diagonal terms)
$$
(J|_{E_{\sim}})_{(i j),(i j)}=- 2 \mu - \left( \sum_{\exists p:l\sim_{S_p} i} M_{l i} h(P^\text{eq}_{l i})\right) - \left( \sum_{\exists p:l\sim_{S_p} j} M_{l j} h(P^\text{eq}_{l j})\right).
$$
 From here, it is easy to see that we may write 
$$
J|_{E_{\sim}} = (-2\mu)\cdot \textbf{I} + U ,
$$
where \textbf{I} is the identity matrix indexed by $K$ and $U$ is the transition rate matrix of the Markov chain on $K$ that jumps from $(ij)\in K$ to $(ik)\in K$ at rate $\textbf{1}_{\{\exists p\in[n]:j \sim_{S_p} k\}} 
M_{k j} h(P^\text{eq}_{k j})$. It is known (see, for instance, \cite{stroock2013introduction}), that the eigenvalue of $U$ with largest real part is given by 0, thus the stability of $J|_{E_\sim}$. This allows us to conclude.
\end{proof}

\begin{proposition}[Stability of symmetric equilibria]\label{prop:stabcplt}
  Let $M = ([N], (M_{i j})_{i \neq j})$ be a migration graph such that $M_{i
  j} = m > 0$ for all $i \neq j$, and $P^\text{eq} = (P^\text{eq}_{i j})_{i \neq j}$ a symmetric equilibrium for the system of genetic proximities (\ref{eq:LinSysGenProx}), i.e., verifying $P^\text{eq}_{i j} = p^\text{eq}>0$ for all $i \neq j$.
  Then, 
  
\begin{enumerate}
    \item $P^\text{eq}$ is solution to the equation 
    \begin{align}
        \varphi (p^\text{eq}) = h(p^\text{eq}) (1 - p^\text{eq}) - \frac{\mu}{m} p^\text{eq}  = 0 \label{eq:EqCplt}
  \end{align}
    \item $P^\text{eq}$ is (locally) stable iff
  \begin{align}
        \varphi'(p^\text{eq}) = h' (p^\text{eq}) (1 - p^\text{eq}) - h(p^\text{eq}) - \frac{\mu}{m} <0 \label{eq:stabCplt}
  \end{align}
\end{enumerate}
  
\end{proposition}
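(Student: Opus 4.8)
The plan is to specialize the general master equation \eqref{eq:LinSysGenProx} to the complete, symmetric migration graph, exploit the symmetry of $P^\text{eq}$ to reduce to a scalar equilibrium condition, and then analyze the Jacobian of $\vec F$ at $P^\text{eq}$ using the fact that, at a symmetric point, the Jacobian has a large symmetry group and hence a very structured spectrum. For part 1, I would substitute $M_{ij}=m$, $M_{kk}=0$, $P_{kk}=1$, and $P^\text{eq}_{ij}=p^\text{eq}$ into $\vec F(P^\text{eq})_{ij}=0$. The two sums over $k$ each contain $N-2$ terms with $P_{ki}=P_{kj}=p^\text{eq}$ and one term ($k=j$ in the first sum, $k=i$ in the second) with the factor equal to $1$; collecting terms yields $2m h(p^\text{eq})(1-p^\text{eq})-2\mu p^\text{eq}=0$ after dividing by $2$, which is exactly \eqref{eq:EqCplt}. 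This part is routine.

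For part 2, the key observation is that at the symmetric equilibrium all diagonal entries of the Jacobian are equal to some value $a$, and all off-diagonal entries $\partial \vec F(P^\text{eq})_{ij}/\partial P_{ki}$ (for $k\neq i,j$) are equal to some value $b$, where from the formulas in the excerpt
\begin{align*}
a &= 2m h'(p^\text{eq})(1-p^\text{eq}) - 2(N-1) m h(p^\text{eq}) - 2\mu,\\
b &= m h'(p^\text{eq}) p^\text{eq} + m h(p^\text{eq}) - p^\text{eq} m h'(p^\text{eq}) = m h(p^\text{eq}) + m h'(p^\text{eq})(p^\text{eq}-p^\text{eq}) \cdot \ldots
\end{align*}
(I will recompute $b$ carefully; the point is only that it is a single constant independent of the particular pair). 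Thus $J$ acts on $\mathbb{R}^{\binom{N}{2}}$, the space indexed by unordered pairs, and is equivariant under the natural $S_N$-action permuting the indices. The space $\mathbb{R}^{\binom{N}{2}}$ decomposes under $S_N$ into (at most) three irreducible pieces — the trivial representation, the standard representation, and the irreducible labelled by the partition $(N-2,2)$ — so $J$ has at most three distinct eigenvalues, each of which can be computed explicitly as a linear combination of $a$ and $b$ by evaluating $J$ on one test vector from each isotypic component (e.g. the all-ones vector; a vector of the form $x_i = \delta_{i,1}+\delta_{i,2}$-type pattern; and so on). Working these out gives the eigenvalues as affine functions of $h(p^\text{eq})$ and $h'(p^\text{eq})$; one then checks that the eigenvalue with the largest real part is $-2\mu + 2m\big(h'(p^\text{eq})(1-p^\text{eq}) - h(p^\text{eq})\big)$, whose negativity is precisely condition \eqref{eq:stabCplt} after dividing by $2m$. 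The remaining eigenvalues should be seen to have strictly smaller real part automatically (using $h$ nondecreasing, $h(p^\text{eq})>0$, and the equilibrium relation \eqref{eq:EqCplt}), so they impose no extra condition.

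The main obstacle is the Jacobian spectral computation: one must correctly identify the pattern of entries of $J$ at the symmetric point (being careful about the special terms $k=i$ and $k=j$ in the sums, which are genuinely different from the generic $k\neq i,j$ terms, and about the fact that $P_{ij}$ and $P_{ji}$ are identified), and then either invoke representation theory of $S_N$ on $\binom{[N]}{2}$ or, more elementarily, diagonalize the highly patterned matrix by hand via a suitable ansatz for eigenvectors (constant; linear in a "coordinate vector" $e_i$; and the orthogonal complement). An alternative, possibly cleaner route that avoids the full Jacobian is to invoke Proposition \ref{prop:glob_loc}: since the symmetric equilibrium has a single connected component, local stability reduces to the sign of the real parts of the eigenvalues of $J(P^\text{eq})$ directly, but this does not sidestep the eigenvalue computation itself. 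I would carry out the hand-diagonalization, double-check the eigenvalue expressions against the $N=2$ case \eqref{eq:Stab_dim2}/\eqref{eq:FxPtPb_dim2} where the answer is known, and then conclude.
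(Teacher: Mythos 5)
Your proposal is correct, and part 1 is exactly the paper's computation. For part 2 you take a genuinely different route through the same Jacobian: you verify (correctly) that at the symmetric point the Jacobian on the $\binom{N}{2}$ pair-coordinates has diagonal $a=2mh'(p^\text{eq})(1-p^\text{eq})-2(N-1)mh(p^\text{eq})-2\mu$, entry $b=mh(p^\text{eq})$ whenever two pairs share exactly one index, and zero for disjoint pairs, and you then propose to diagonalize this $S_N$-equivariant matrix (equivalently $aI+b\,A_{T(N)}$ with $A_{T(N)}$ the adjacency matrix of the triangular/Johnson graph), obtaining three eigenvalues $a+2(N-2)b=2m\varphi'(p^\text{eq})$, $a+(N-4)b$, and $a-2b$; since $h(p^\text{eq})>0$ (forced by the equilibrium relation with $p^\text{eq}\in(0,1)$), the first is the largest, which is exactly the paper's stability criterion, so the remaining eigenvalues indeed impose no extra condition. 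The paper instead writes $J=2m\varphi'(p^\text{eq})\,\mathbf{I}+A$ where $A$ is the transition rate matrix of the Markov chain on unordered pairs jumping from $(ij)$ to $(ik)$ or $(kj)$ at rate $mh(p^\text{eq})$, and invokes only the fact that the spectral abscissa of a generator is $0$; this avoids computing the subdominant eigenvalues and is the same device reused in Proposition \ref{prop:glob_loc}, whereas your representation-theoretic (or hand-diagonalization) route costs a bit more bookkeeping but yields the full spectrum with multiplicities, and the matrix is moreover symmetric at this equilibrium so all eigenvalues are real. Your cross-check against the $N=2$ case and your remark that Proposition \ref{prop:glob_loc} does not by itself sidestep the eigenvalue computation are both apt; the only items left to write out are the routine verification that both neighbor types ($P_{ki}$ and $P_{kj}$) give the same constant $b$ and that disjoint-pair entries vanish, which hold exactly as you anticipate.
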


\begin{proof}
 From (\ref{eq:LinSysGenProx}), we obtain that any symmetric equilibrium verifies
 \begin{align*}
     0 = 2mh(p^\text{eq}) (1 - p^\text{eq}) + 2(N-2)mh(p^\text{eq})p^\text{eq} - p^\text{eq} (2(N-2)mh(p^\text{eq}) + 2\mu)
 \end{align*}
  Thus the first statement.
  
  The Jacobian $J \assign J_{\vec{F}} (p^\text{eq})$ of $\vec{F}$ can be computed to
  \[ \frac{\partial \vec{F} (p^\text{eq})_{i j}}{\partial P_{i j}} = 2 m h'
     (p^\text{eq}) (1 - p^\text{eq}) - 2 (N - 1) m h (p^\text{eq})
     - 2 \mu =2m\varphi'(p^\text{eq})-2 (N - 2) m h (p^\text{eq}), \]
  for the diagonal terms, and
  \[ \frac{\partial \vec{F} (p^\text{eq})_{i j}}{\partial P_{k i}} = m h
     (p^\text{eq}), \]
  if $k \neq i, j$. Finally, $\frac{\partial \vec{F} (p^\text{eq})_{i j}}{\partial
  P_{k l}} = 0$ otherwise. In particular,
  we remark that we can write
  \begin{equation*}
  J =  2m\varphi'(p^\text{eq}) \cdot \textbf{I} + A, 
  \end{equation*}
  where $\textbf{I}$ is the identity matrix whose rows and columns are indexed by the $N(N-1)/2$ unordered pairs $(ij)$ for $i\not=j$ and $A$ is the transition rate matrix of the Markov chain that jumps from $(ij))$ to $(ik)$ and to $(kj)$ for any $k\not=i,j$ (which make $2(N-2)$ possible transitions) at the same rate $mh(p^\text{eq})$. Again, it is known (see, for instance, \cite{stroock2013introduction}), that the eigenvalue of $A$ with largest real part is given by $0$. The stability condition follows.
\end{proof}

\begin{figure*}[htbp]
    \centering
    \includegraphics[width=\textwidth]{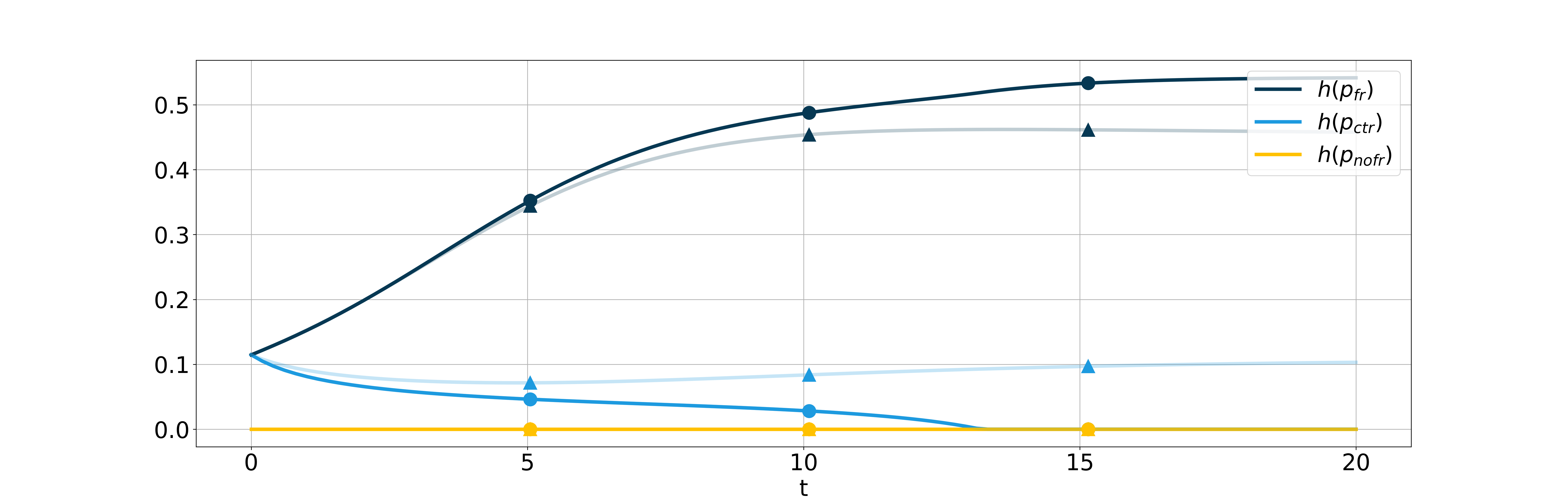}
    \captionsetup{font=small}
    \caption{Collapse of intransitive friendship equilibria for large $N$ (see Fig. \ref{fig:MigGraphs}, (a2)). The solid lines correspond to $h$ evaluated at the three different genetic proximities in our system, namely the proximity between two populations at the outer points of the same triangle ($p_\text{fr}$), between the center population and a triangle population ($p_\text{ctr}$), and between two different-triangle populations ($p_\text{nofr}$). We used a step-feedback function similar to $h_2$ in Fig. \ref{fig:asym_N}. The effective migration structure at equilibrium is given by the friendship graph (see Fig. \ref{fig:MigGraphs}, (a2)), and we used $m = 0.5,\mu = 0.2125$. The different lines of a given color correspond to simulations for different values of $N$: $N=11$ (triangle) and $N=13$ (circle). For $N=11$, the friendship graph is stable. For $N=13$, a speciation event occurs.}
    \label{fig:friendship_sim}
\end{figure*}

In the remaining part of the section, we will focus on the phenomenon of symmetry breaking in the complete and uniform migration graph. We start by considering a case where $[N]$ is split into two sets of vertices $V_1$ and $V_2$. We then consider equilibria $P^\text{eq}$ with three degrees of freedom, namely, the genetic proximity within $V_1$ (denoted by $p_1$), the genetic proximity within $V_2$ (denoted by $p_2$), and the genetic proximity between $V_1$ and $V_2$ (denoted by $p_{inter}$).

\begin{proposition}[Symmetry breaking \RNum{1}]\label{prop:sym_break_1}
    Let $M = ([N], (M_{i j})_{i \neq j})$ be a migration graph such that $M_{i j} = m > 0$ for all $i \neq j$. Consider an equilibrium $P^\text{eq}$ with three degrees of freedom $P^\text{eq} = (p_1, p_2, p_{inter})$. Then, $P^\text{eq}$ is solution to the 3-dimensional system of equations
\begin{eqnarray*}
  |V_2| h(p_{inter}) (p_{inter} - p_1) + h(p_1)(1 - p_1) - \frac{\mu}{m} p_1 & = & 0 \, ,\\
  |V_1| h(p_{inter}) (p_{inter} - p_2) + h(p_2)(1 - p_2) - \frac{\mu}{m} p_2 & = & 0 \, ,\\
  \frac{1}{2} \sum_{i = 1}^2 (|V_i| - 1) h(p_{inter})(p_i - p_{inter}) + h(p_{inter})(1 - p_{inter}) - \frac{\mu}{m}p_{inter} & = & 0 \, .
\end{eqnarray*}
\end{proposition}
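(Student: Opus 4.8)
The plan is simply to insert the block ansatz into the equilibrium condition and collect terms. Write $P^{\text{eq}}_{k\ell} = p_1$ if $k,\ell\in V_1$, $p_2$ if $k,\ell\in V_2$, and $p_{inter}$ if $k,\ell$ lie in different blocks, and recall that $M_{k\ell}=m$ for $k\neq\ell$, $M_{kk}=0$, $P^{\text{eq}}_{kk}=1$. Setting the right-hand side of (\ref{eq:LinSysGenProx}) to zero, I fix an ordered pair $(i,j)$ with $i\neq j$ of one of the three possible types, and split the summation index $k\in[N]$ into the four groups $\{i\}$, $\{j\}$, the remaining vertices of $V_1$, and the remaining vertices of $V_2$. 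On each of these groups the pair $(P^{\text{eq}}_{ki},P^{\text{eq}}_{kj})$ is constant (and equals $(1,\cdot)$ or $(\cdot,1)$ on the singletons $\{i\}$ and $\{j\}$), so the sum reduces to an explicit polynomial in $p_1,p_2,p_{inter}$ whose coefficients depend only on $|V_1|,|V_2|$; in particular the resulting identity does not depend on the chosen representatives $i,j$, which is precisely the consistency statement making ``$P^{\text{eq}}$ with three degrees of freedom'' a well-posed notion (the $3$-dimensional subspace is invariant under $\vec F$). Throughout one assumes $|V_1|,|V_2|\ge 2$ so that the within-block equations are non-vacuous.

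For $i,j\in V_1$ the index $k$ ranges over $\{i\}$, $\{j\}$, the $|V_1|-2$ other vertices of $V_1$, and the $|V_2|$ vertices of $V_2$. Carrying out the sum, dividing by $2m$, and expanding the product $p_1\times(\text{loss term})$, the terms proportional to $(|V_1|-2)\,h(p_1)\,p_1$ cancel, leaving exactly $|V_2|\,h(p_{inter})(p_{inter}-p_1)+h(p_1)(1-p_1)-\tfrac{\mu}{m}\,p_1=0$; exchanging the roles of $V_1$ and $V_2$ gives the second equation. For $i\in V_1$, $j\in V_2$ the index $k$ ranges over $\{i\}$, $\{j\}$, the $|V_1|-1$ other vertices of $V_1$ (where $(P^{\text{eq}}_{ki},P^{\text{eq}}_{kj})=(p_1,p_{inter})$) and the $|V_2|-1$ other vertices of $V_2$ (where $(P^{\text{eq}}_{ki},P^{\text{eq}}_{kj})=(p_{inter},p_2)$); after the analogous cancellation of the terms $(|V_1|-1)\,h(p_1)\,p_{inter}$ and $(|V_2|-1)\,h(p_2)\,p_{inter}$ and division by $2m$ one obtains the third equation, the factor $\tfrac12$ in front of the sum arising because the interior blocks of $V_1$ and $V_2$ contribute ``mixed'' terms $h(p_1)p_{inter}+h(p_{inter})p_1$ rather than the already-doubled contributions that appear in the within-block computation.

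There is no serious obstacle here; the content is entirely a matter of careful bookkeeping. The three points that need attention are: (i) handling the two summands $k=i$ and $k=j$ separately, since for them exactly one of the migration factors $M_{ki},M_{kj}$ vanishes while exactly one of the proximities $P^{\text{eq}}_{ki},P^{\text{eq}}_{kj}$ equals $1$; (ii) tracking the correct multiplicities $|V_1|-2$, $|V_2|$ (respectively $|V_1|-1$, $|V_2|-1$) of the interior vertex groups; and (iii) recording that the identity obtained is independent of $i,j$, which both justifies reducing to three scalar equations and shows the ansatz is self-consistent.
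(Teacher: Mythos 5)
Your proposal is correct and takes essentially the same route as the paper: the paper's one-line proof is precisely ``plug the symmetry-class ansatz $(p_1,p_2,p_{inter})$ into the equilibrium condition for (\ref{eq:LinSysGenProx})'', and your detailed bookkeeping (splitting the sum over $k$ into $\{i\}$, $\{j\}$, and the remaining vertices of $V_1$ and $V_2$, then cancelling the block-size--dependent terms) reproduces the three stated equations, including the factor $\tfrac12$ in the cross-block equation. Your added remarks on representative-independence (invariance of the three-dimensional subspace under $\vec F$) and on requiring $|V_1|,|V_2|\ge 2$ are sound refinements of what the paper leaves implicit.
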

\begin{proof}
    Follows by construction of the equilibrium, namely, the partition of $P^\text{eq}$ into the symmetry classes $\{P^\text{eq}_{i j} = p_k$ for $(i,j)\in V_k\}$ $(k=1,2)$, and $\{P^\text{eq}_{i j} = p_{inter}$ for $(i,j)\in V_1 \times V_2\}$, and (\ref{eq:LinSysGenProx}).
\end{proof}

Assume now that $h$ has a threshold. We want to show that there exists a stable, intransitive equilibrium in symmetric migration. Consider the friendship equilibrium $P^\text{eq} = (p_{ctr}, p_{fr}, p_{nofr})$ defined in Section \ref{chap:with_threshold}, and $c$ the threshold of the function $h$.

\begin{proposition}[Symmetry breaking \RNum{2}]\label{prop:sym_break_2}
    Let $M = ([N], (M_{i j})_{i \neq j})$ be a migration graph such that $M_{i j} = m > 0$ for all $i \neq j$. Consider a friendship equilibrium $P^\text{eq} = (p_1, p_2, p_{inter})$. Then, $P^\text{eq}$ is solution to the 3-dimensional system of equations
\begin{eqnarray*}
  \frac{(N-3)}{2}h(p_{ctr})(p_{nofr}-p_{ctr})+ \frac{h(p_{ctr})}{2}(2-3p_{ctr} + p_{fr}) - \frac{\mu}{m}p_{ctr} & = & 0 \, ,\\
  (N-3) h(p_{nofr})(p_{nofr} - p_{fr}) + h(p_{ctr})(p_{ctr} - p_{fr}) + h(p_{fr})(1-p_{fr}) - \frac{\mu}{m} p_{fr} & = & 0 \, ,\\
  h(p_{ctr}) (p_{ctr} - p_{nofr}) + h(p_{nofr})(1 - p_{fr} + 2p_{fr}) - \frac{\mu}{m} p_{nofr} & = & 0 \, .
\end{eqnarray*}
\end{proposition}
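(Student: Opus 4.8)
The plan is to follow the same route as the proof of Proposition \ref{prop:sym_break_1}: substitute the friendship ansatz into the equilibrium condition $\vec F(P^\text{eq}) = 0$ for the master equation (\ref{eq:LinSysGenProx}), use the symmetry of the ansatz to collapse the $\binom{N}{2}$ equations to three scalar ones, and then carry out the (routine) combinatorial bookkeeping.

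First I would fix notation for the friendship configuration. Write $N = 2q+1$ with $q\ge 2$, single out one vertex $0$ as the center, and split the remaining $2q$ vertices into pairs $\{a_t,b_t\}$, $t=1,\dots,q$, so that $\{0,a_t,b_t\}$ is the $t$-th triangle of the effective migration graph. The friendship equilibrium is the vector $P^\text{eq}$ taking one of three values according to the type of the unordered pair: $P^\text{eq}_{0,v}=p_{ctr}$ for every outer vertex $v$; $P^\text{eq}_{a_t,b_t}=p_{fr}$ (``friends''); and $P^\text{eq}_{u,v}=p_{nofr}$ for outer $u,v$ in different triangles (``non-friends''). Since $M_{ij}=m$ for all $i\neq j$, the vector field $\vec F$ is equivariant under every permutation of $[N]$; the stabilizer of this configuration is the automorphism group of the friendship graph, which fixes $0$, permutes the triangles, and may swap the two outer vertices of each triangle, and its action on unordered pairs has exactly the three orbits listed above. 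Hence $\vec F(P^\text{eq})$ is again constant on each orbit, and $\vec F(P^\text{eq})=0$ is equivalent to the vanishing of $(\vec F)_{ij}$ evaluated on one representative pair of each type.

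The core is then three instances of the same computation, taking $(i,j)=(0,a_1)$, $(a_1,b_1)$, and $(a_1,a_2)$. For each, I would expand the right-hand side of (\ref{eq:LinSysGenProx}), separating the contribution of the two endpoints themselves (which yields the $2\,h(P_{ij})$-type terms, using $P_{kk}=1$ and $M_{kk}=0$) from that of the remaining $N-2$ vertices $k$, the latter grouped by the pair of relations of $k$ to $i$ and to $j$ together with the associated multiplicities — e.g. for $(a_1,a_2)$ these are: the single vertex $0$; the two vertices $b_1,b_2$; and the $N-5$ outer vertices of the other triangles. One also reads off the degree-type sums $\sum_{k\neq i}h(P^\text{eq}_{ki})$, equal to $h(p_{ctr})+h(p_{fr})+(N-3)h(p_{nofr})$ for every outer vertex and $(N-1)h(p_{ctr})$ for the center. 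Collecting terms — and noting the cancellations, e.g. the $h(p_{fr})$ contributions drop out of the center and non-friend equations and the $h(p_{nofr})$ contributions drop out of the center equation — produces the three displayed identities; note that $h'$ never appears, since only equilibrium, not stability, is being imposed.

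The only step needing genuine care, and where a slip is most likely, is the combinatorial accounting: counting correctly how the $N-2$ non-endpoint vertices distribute among the relation classes for each representative pair, and handling the asymmetry between the $P_{ki}$ and $P_{kj}$ slots in (\ref{eq:LinSysGenProx}). There is no analytic difficulty beyond this: once the counts are right, each equation is a one-line simplification, exactly as in Proposition \ref{prop:sym_break_1}. I would also note, for completeness, that the statement asserts only that any friendship equilibrium solves this system — existence and stability are addressed elsewhere — which is why no hypothesis on a threshold of $h$ enters this particular proof.
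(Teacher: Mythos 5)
Your proposal is correct and follows essentially the same route as the paper, whose proof is simply to substitute the symmetry-class ansatz into (\ref{eq:LinSysGenProx}) as in Proposition \ref{prop:sym_break_1}; your identification of the three orbits of pairs (center--outer, friends, non-friends), the representative pairs $(0,a_1)$, $(a_1,b_1)$, $(a_1,a_2)$, and the degree sums is exactly that argument made explicit. Carrying out the bookkeeping you describe (with the counts $N-3$ and $N-5$ as you set them up) does reproduce the stated system, so there is no gap.
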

\begin{proof}
    Same argument as in the proof of Proposition \ref{prop:sym_break_1}.
\end{proof}

\begin{remark}\label{rem:sym_break}
We note that the previous two equilibria cease to exist for large $N$. In fact, consider the asymmetric equilibrium of Proposition \ref{prop:sym_break_1}. We deduce from equations 1 and 2 that for large $N$, we need to have $h(p_{inter})(p_{inter} - p_k) \propto N^{-1}$, for $k=1,2$. Therefore, the two population groups $V_1$ and $V_2$ either become reproductively isolated from each other $(h(p_{inter}) \rightarrow 0)$, or the equilibrium becomes uniform $(p_{inter} - p_k \rightarrow 0)$. The same argument allows us to deduce that there can only be a finite number of asymmetric equilibria for the equilibrium in Proposition \ref{prop:sym_break_2}. Fig. \ref{fig:friendship_sim} reveals that this collapse of asymmetry can occur with $N$ as little as 13.
\end{remark}

\section{Additional simulations for fluctuating migration networks}

\begin{figure*}[htbp]
    \centering
    \includegraphics[width=\textwidth]{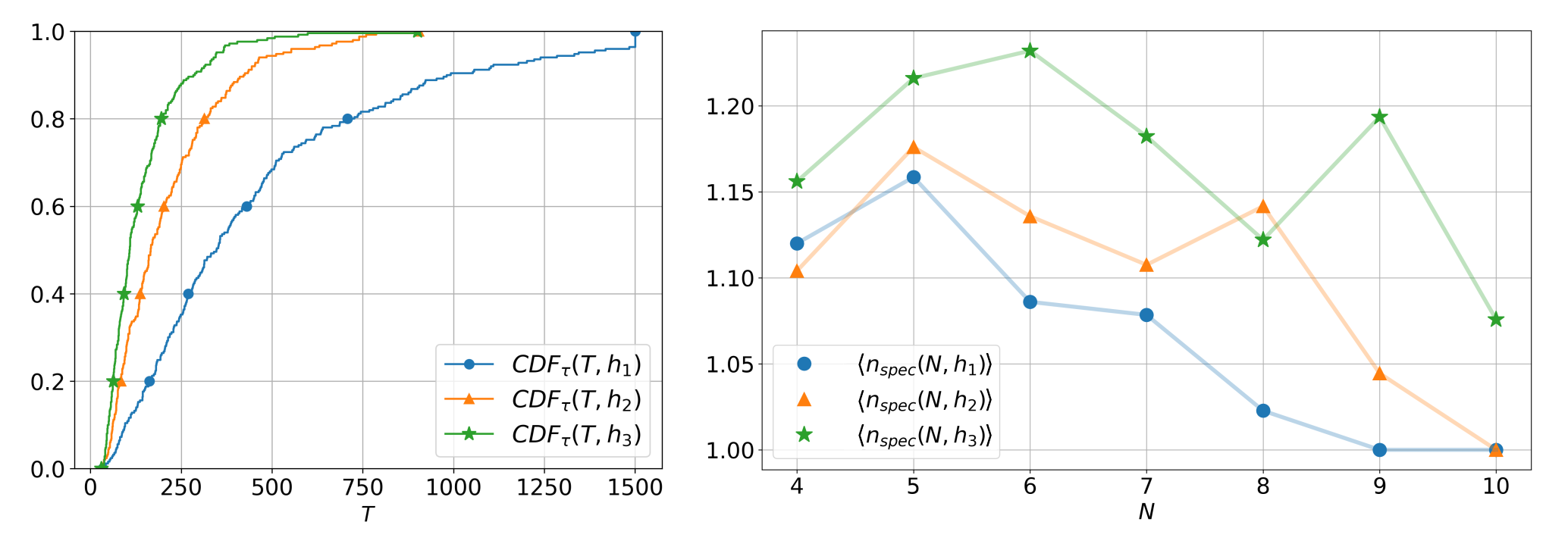}
    \captionsetup{font=small}
    \caption{Dependence of speciation time distribution on feedback regime, and mean number of detaching populations upon speciation (over 250 runs). We considered dynamically changing migration rates updated according to exponential clocks and resampled independently from a (rescaled) Beta distribution $m_\T{rate} \cdot \beta(0.5,0.5)$, with $m_\T{rate} = 1.675$. On the left, we plotted the empirical cumulative distribution function of the time to speciation for $N=5$ and different feedback functions given by $h_1(x) = x^{2.5}, h_2(x) = x^{2.75}, h_3(x) = x^{3}$. On the right, we plotted the mean number of detaching populations upon speciation. Further, we chose $\mu = 0.1, m = \mathbb{E}[M_{ij}] = 0.8375, \theta = 1$. 
    }
    \label{fig:spec_prob_nbrs}
\end{figure*}

\begin{remark}
    Note that the pronounced difference in the distribution of the speciation time w.r.t. the feedback regime (see left panel in Fig. \ref{fig:spec_prob_nbrs}) is in stark contrast with the small distances between the feedback functions ($\leq 0.01$ in the $L^\i$-distance). This indicates a high sensitivity of the speciation time w.r.t. the feedback regime. Further, we observe that the number of detaching populations upon speciation is typically one (see right panel in Fig. \ref{fig:spec_prob_nbrs}). As $N$ increases, the mean number of detaching populations decreases to one, as conjectured in section \ref{sec:fluct}.
\end{remark}

\begin{figure*}[htbp]
    \centering
    \includegraphics[width=.8\textwidth]{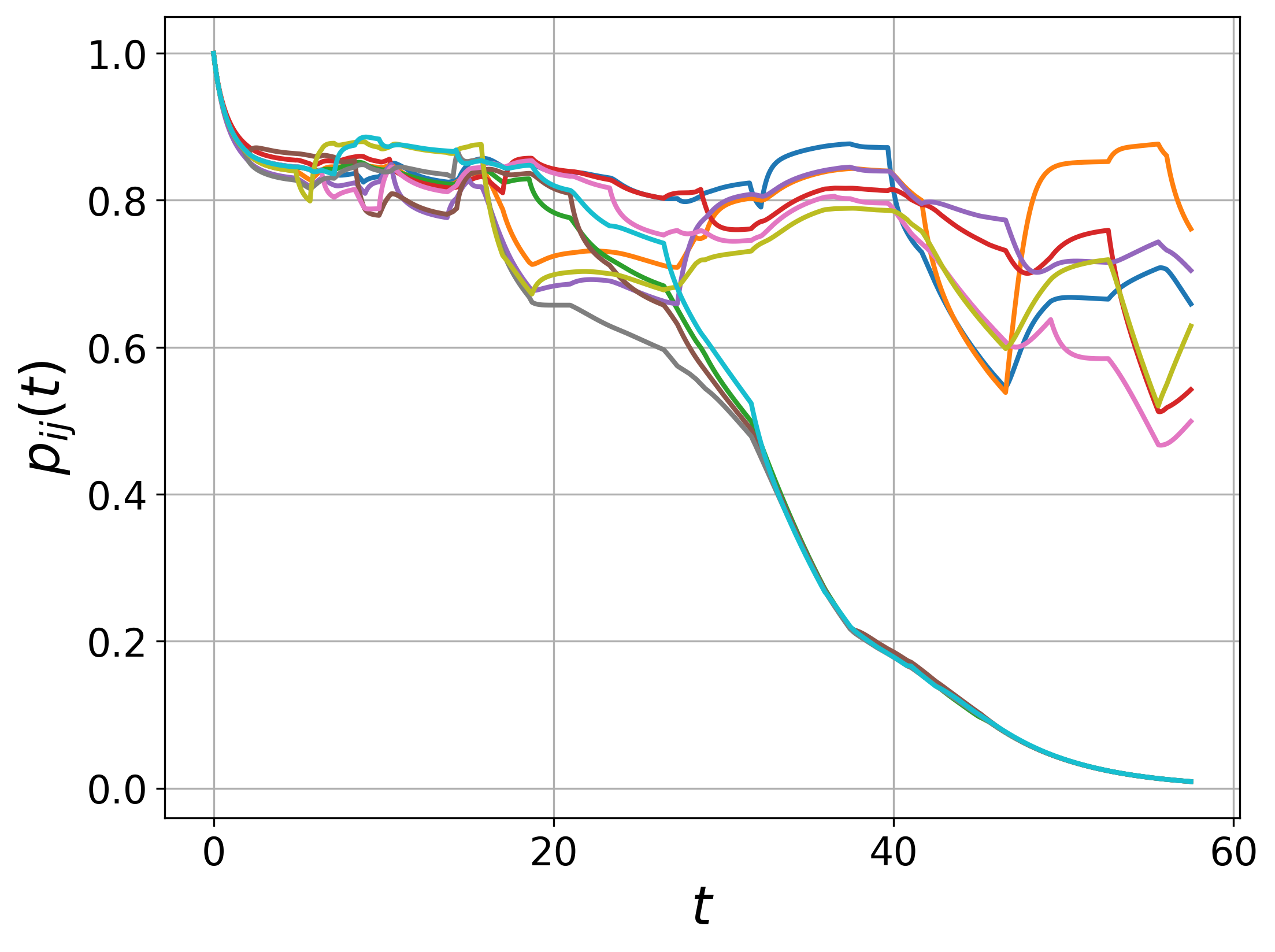}
    \captionsetup{font=small}
    \caption{Realization of ODE (\ref{eq:ODE}) in fluctuating migration networks. We considered dynamically changing migration rates updated according to exponential clocks and resampled $iid$ according to a rescaled Beta distribution, given by $m_\T{rate} \cdot \beta(0.5,0.5)$, with $m_\T{rate} = 1.675$. We plotted the genetic proximities $p_{ij}(t)$ over time. In this example, the speciation event involves one population $i_0$ detaching from the species complex, and thus $p_{i_0 j} (t)\ra 0$ for all $j\neq i_0$. Here, we chose $N = 5, h(x) = x^{2.75}, \mu = 0.1, m = \mathbb{E}[M_{ij}] = 0.8375, \theta = 1$.}
    \label{fig:fluct_mig_dyn}
\end{figure*}

\end{document}